\def\maxwidth{ %
  \ifdim\Gin@nat@width>\linewidth
    \linewidth
  \else
    \Gin@nat@width
  \fi
}
\definecolor{fgcolor}{rgb}{0.345, 0.345, 0.345}
\definecolor{shadecolor}{rgb}{.97, .97, .97}
\definecolor{messagecolor}{rgb}{0, 0, 0}
\definecolor{warningcolor}{rgb}{1, 0, 1}
\definecolor{errorcolor}{rgb}{1, 0, 0}
\newenvironment{knitrout}{}{} % an empty environment to be redefined in TeX
\renewcommand\appendix{\par
  \setcounter{section}{0}
  \setcounter{subsection}{0}
  \setcounter{figure}{0}
  \setcounter{table}{0}
  \renewcommand\thesection{Appendix \Alph{section}}
  \renewcommand\thefigure{\Alph{section}\arabic{figure}}
  \renewcommand\thetable{\Alph{section}\arabic{table}}
  
\usepackage{booktabs}
\renewcommand{\floatpagefraction}{.8}%
}
\title{{The doctrinal paradox:\\ ROC analysis in a probabilistic framework}
%\thanks{Grants or other notes}
}
\author{Aureli Alabert \\ %\at
           Department of Mathematics \\
           Universitat Aut\`onoma de Barcelona \\  
           08193 Bellaterra, Catalonia \\  
           \url{Aureli.Alabert@uab.cat}  
           \and
           Mercè Farré \\ %\at
           Department of Mathematics \\
           Universitat Aut\`onoma de Barcelona \\  
           08193 Bellaterra, Catalonia \\  
           \url{farre@mat.uab.cat}
}
\newtheorem{teor}{Theorem}[section]
\newtheorem{propos}[teor]{Proposition}     
\newtheorem{defi}[teor]{Definition}
\newtheorem{remark}[teor]{Remark}
\newtheorem{lema}[teor]{Lemma}
\newtheorem{cor}[teor]{Corollary}
\newcommand\ct[4]
\newcommand{\matri}[4]{$\begin{array}{cc}
  #1&#2\\#3&#4
  \end{array}$}
\begin{document}
\maketitle

\maketitle
\begin{abstract}
The \emph{doctrinal paradox} is analysed from a probabilistic point of view assuming a simple parametric model 
for the committee's behaviour. The well known \emph{issue-by-issue}   and \emph{case-by-case} 
 majority rules are compared in this model, by means of the concepts of \emph{false positive rate} 
 (FPR), \emph{false negative rate} (FNR)   
 and \emph{Receiver Operating Characteristics} (ROC) space.
We introduce also a new rule that we call \emph{path-by-path}, which is somehow halfway between the
other two.  
Under our model assumptions, the \emph{issue-by-issue} rule is shown to be the best of the three
according to an optimality criterion based in ROC maps, 
for all values of the model parameters (committee size and competence of its members),
when equal weight is given to FPR an FNR.
For unequal weights, the relative goodness of the rules depends on the values of the competence
and the weights, in a way which is precisely described. 
The results are illustrated with some numerical examples.

% Conditions for the optimality of the issue-by-issue rule under a  weighted criterion giving more weight 
% to the false positives are also stated.
%Moreover, we show how the committee size can be determined to achieve predetermined thresholds for
%$\text{\rm \rm TPR}$, $\text{\rm TNR}$ or other functions of these rates. 

\end{abstract}

\noindent\textbf{Keywords}: Doctrinal paradox, judgement aggregation, ROC space. %, multinomial distribution.

\section{Introduction: The doctrinal paradox}

 The \emph{Condorcet Jury Theorem} (attributed to Condorcet \cite{nla.cat-vn2403603}) states that 
 ``if $n$ jurists act independently, each with
 probability $\theta>\frac{1}{2}$  of making the correct decision, then the probability that the jury 
 (deciding by
majority rule) makes the correct decision increases monotonically to 1 as $n$ tends to infinity''. 
See for instance Boland \cite{10.2307/2348873}) and Karotkin and Paroush \cite{Karotkin2003}, 
and the references therein, for
precise statements, proofs, and extensions of this principle.

The \emph{doctrinal paradox} (a name introduced by Kornhauser \cite{Kornhauser1992169} in 1992) 
arises in some situations when a
committee or jury faces a compound question divided in two subquestions or premises, $P$
and $Q$.
 The point of interest is in deciding between the acceptance of
both premises $P\wedge Q$ ($P$ \emph{and} $Q$) and the
acceptance of the opposite $\neg(P\wedge Q)=\neg P \vee
\neg Q$ (\emph{not $P$ or not $Q$}). 
In view of the  Condorcet
Jury Theorem, some kind of majority rule seems appropriate for this two-premises problem.
However, in some cases, the same set of individual decisions leads to  different collective 
decisions depending 
on the manner the individual opinions are aggregated.

Classically, two standard decision procedures are considered in the literature:
the \emph{case-by-case} and the
\emph{issue-by-issue} procedures (\rm{CbyC} and \rm{IbyI}, respectively, for short), 
see \cite{Kornhauser1992169}, \cite{Kornhauser1993}. 
In CbyC, each committee member or judge
decides on both questions and votes  $P\wedge Q$ or
$\neg (P\wedge Q)$. Then, simple majority decides. It is also called 
the \emph{conclusion-based} 
or \emph{outcome-based} aggregation criterion. In fact, it is a reduction
to the one-premise Condorcet case.

In the \rm{IbyI} procedure, each
committee member decides  $P$ or $\neg P$ first, and then a joint
decision about this premise is taken by simple majority.
Similarly, each member chooses between $Q$ or $\neg Q$, and a joint decision is taken again 
by simple majority. If $P$ and $Q$ are separately decided by a (perhaps differently formed) 
majority, then $P\wedge Q$ is proclaimed. Otherwise, 
$\neg(P\wedge Q)$ is the conclusion.
 The IbyI procedure is also called the \emph{premise-based} aggregation criterion.

Both procedures look reasonable, but they may give rise to different results, hence the ``paradox'',
as we can see
in the examples of Tables \ref{tab1} and \ref{tab2}. The first one shows a 3-member committee, 
with a description of the individual member behaviour on the left, and a summary contingency 
table on the right. A
possible outcome of a 7-member decision is illustrated in Table \ref{tab2}. The contingency
table is enough to note the presence of the paradox. In both examples, the \rm{CbyC} and \rm{IbyI} procedures
lead to different decisions. 

%\begin{remark}\label{rk0} \CommentA{((Crec que aquest remark no cal))}
%If each judge does not make public his decision
%over $P\wedge Q$, without decomposing it in the two premises, then
%the contingency table can not be univocally fulfilled, so the
%\emph{IbyI}  procedure cannot be applied and, consequently, no
%paradox exist. Analogously, if  a committee votes $P$ or $\neg P$ and
%then (perhaps a different committee) votes $Q$ or $\neg Q$ without
%specifying the particular members votes, only the table marginals
%are fixed. In this case, only the   \rm{IbyI}  procedure has sense and
%the paradox is impossible too.
%\end{remark}

\begin{table}[h] \centering
  {\extrarowheight 2pt
\begin{tabular}{c|c|c||c}
 % \hline
  % after \\: \hline or \cline{col1-col2} \cline{col3-col4} ...
  Members & $P$ & $Q$ & $P\wedge Q$ \\
  \hline
  M$_1$ & $\checkmark$ &   $\times$ &   $\times$ \\
  M$_2$ &  $\times$  & $\checkmark$ &   $\times$ \\
  M$_3$ &$\checkmark$ & $\checkmark$ &  $\checkmark$ \\
  \hline
   & $\checkmark$  & $\checkmark$ &  $\times$ \\ 
\end{tabular}    
\qquad 
\begin{tabular}{r|cc|c} 
  \multicolumn{4}{c}{} \\  % Trick to leave one void row and align tables.
  & $Q$ & $\neg Q$ & Totals \\
    \hline
    $P$ & 1 & 1 & 2 \\
    $\neg P$ & 1 & 0 & 1 \\
   \hline 
     Totals & 2 & 1 & $3$ 
  \end{tabular}
}
\caption{A doctrinal paradox with a 3-member
committee.
In the table on the left, symbols 
$\checkmark$ and $\times$ mean decision in favour or against the premise, respectively. 
 The contingency table on the right-hand
side summarizes the information.
The CbyC procedure
leads to decide $\neg(P\wedge Q)$; the IbyI rule leads to
$P\wedge Q$.} \label{tab1}
\end{table}

\begin{table}[h] \centering
    {\extrarowheight 2pt
  \begin{tabular}{r|cc|c}
    %\hline
    % after \\: \hline or \cline{col1-col2} \cline{col3-col4} ...
     & $Q$ & $\neg Q$ & Totals \\
    \hline 
    $P$ & 3 & 2 & 5 \\
    $\neg P$ & 2 & 0 & 2 \\
   \hline 
     Totals & 5 & 2 & $7$ \\
  \end{tabular}
}
\caption{Contingency table showing the doctrinal paradox with a 7-member committee: 
  There are majorities for $P$ and $Q$ separately, but not a
conjoint majority for $P\wedge Q$.} \label{tab2}
\end{table}

The contingency table for the general case of $n$ committee members 
is illustrated in Table \ref{tab3}. We will assume throughout the paper 
that $n$ is an odd number: $n=2m+1$, $m\ge 1$. The doctrinal paradox appears when the
following conditions are simultaneously satisfied:
\begin{equation*} 
  x+y> m\ ,  \quad x+z > m\ , \quad x \leq m\ .%\label{condParadox}
\end{equation*}
%\CommentA{((No cal)):as it happens when $x=1$, $y=z=m$ and $t=0$, for instance.}

\begin{table}[h] \centering
    {\extrarowheight 2pt
  \begin{tabular}{r|cc|c}
    %\hline
    % after \\: \hline or \cline{col1-col2} \cline{col3-col4} ...
     & $Q$ & $\neg Q$ & Totals \\
    \hline 
    $P$ & $x$ & $y$ & $x+y$ \\
    $\neg P$ & $z$ & $t$ & $z+t$ \\
   \hline 
     Totals & $x+z$ & $y+t$ & $n$ \\
  \end{tabular}
}
\caption{Summary of the distribution of $n$ votes on two premises as a contingency table.}\label{tab3}
\end{table}

%\CommentA{((No cal?)) Until here, two different decision processes are presented, \rm{CbyC} and
% \rm{IbyI}, such that, if both are applied to a same data set,
% they  ``can'' (they ``do'' when condition   \eqref{condParadox} is satisfied)
% give rise to contradictory decisions. Furthermore,
% other procedures can be described, being each procedure associated to a different decision
% rule and several paradoxical situations could appear when applying this rules to many real data cases.
% For instance,}

We now introduce a new decision rule which is also reasonable and, as we will see later, lies in some sense
in between of the classical IbyI and CbyC rules. We call it 
\emph{path-by-path} 
(PbyP, for short), and it can be defined as follows: 
$P\wedge Q$ is proclaimed if the number of voters that individually decide $P\wedge Q$ is greater than those
who decide $\neg P$, and greater than those who decide $\neg Q$, separately. That is, in the notation of 
Table \ref{tab3}, 
if $x>z+t$ and $x>y+t$. 
Comparing with IbyI, only the number of supporters of  $P\wedge Q$ can be used to
 beat $\neg P$, without
 using the votes for $P\wedge\neg Q$, and similarly to beat $\neg Q$ without using the votes for $\neg P\wedge Q$;
it is therefore a stronger requirement to conclude $P\wedge Q$.
Comparing with CbyC, in order to conclude $P\wedge Q$, 
in PbyP the votes for $P\wedge Q$ do not need to beat the sum of 
all other options, but only those who deny $P$ and those who deny $Q$, 
separately, which is a weaker statement.

Our goal is to compare the performance of the three decision rules, 
for different committee sizes and different individual 
competence of its members. 
To this end, we define a theoretical framework consisting of a 
probabilistic model where the \emph{competence} 
of a judge is defined as the probability that he/she
takes the correct decision about each single premise. 
It is assumed that a ``true state of nature'' or ``absolute truth'' exists, which is one of the four possibilities 
that combine $P$, $Q$ and their negations. 

Our performance criterion is based in the concepts of true and false positive and negative rates
  and the 
 \emph{Receiver Operating Characteristics} (ROC) space. They have their origin in the field of electrical
 engineering and are commonly used in medicine, machine learning and other scientific disciplines 
% Statistics, classification and
% machine learning are branches of science in which they are 
% typically used 
 (see e.g. Fawcett \cite{Fawcett:2006:IRA:1159473.1159475} and 
 Hand and Till \cite{HandTill2001}). We believe that its application to the doctrinal paradox
 is completely new, and that with them the doctrinal paradox can be considered ``solved''.
 Of course, this claim has to be qualified: We mean that there is a fairly acceptable framework
 to decide which one of a set of rules has to be applied to arrive to the right conclusion.
 As will be apparent later, our analysis can be applied to any given set of rules, beyond those
 considered here.

% Through the present work, we claim to ``solve'' the doctrinal paradox. But the precise meaning of ``solve''
% is difficult to state here. This claim will be justified and qualified in the 
% Discussion final section.

\bigskip 

The problem of a committee deciding on the three logical clauses $P$, $Q$ and $R$, with 
the constrain $R\Leftrightarrow P\wedge Q$ is
only an instance of the broader situation in which a collective decision is to be built
from individual decisions in a community. The theory of \emph{judgement aggregation} aims at studying
and shedding light into these kind of problems. We refer the reader to the surveys  \cite{List2012} 
and \cite{List2009-LISJAA} for an overview of
the field and its recent developments. 

The doctrinal paradox is correspondingly a particular case of a general impossibility theorem
inside that theory (see, e.g. List and Pettit \cite{list_pettit_2002}): Under reasonable 
assumptions, there exist individual logically consistent decisions on $P$, $Q$ and $R$ that
lead to collective inconsistent decisions. One might think that the majority rule used
to construct the collective decision is to be held responsible of the impossibility;
however, except in trivial cases, any other aggregation rule (see \cite{List2012} for
examples) displays the same deadlock. 
Dietrich and List \cite{Dietrich2007} prove Arrow's impossibility result on preference 
aggregation as a corollary 
of this impossibility in judgement aggregation.
See also Camps, Mora and Saumell \cite{Camps2012} 
for a new approach to the problem of 
constrained judgement aggregation in a general setting.

The concept of decision rule that we introduce in the next section is somewhat narrower than
that of aggregation rule in judgement aggregation theory, but sufficient and adapted to
our purposes.
We do not go further  explaining judgement aggregation theory concepts since we focus 
specifically in the doctrinal paradox with a simple model of behaviour of the committee members.
For instance, we disregard strategic behaviour, considered in 
Dietrich and List \cite{dietrich_list_2007} and de Clippel and Eliaz \cite{deClippel201534}, 
or the epistemic or behavioural perspective, studied in
Bovens and Rabinowicz \cite{Bovens2004}, \cite{Bovens2006}), and Bonnefon \cite{Bonnefon2010}).

\bigskip

The paper is organised as follows: In Section  \ref{3rules}, 
we define and characterise with precision the
IbyI, CbyC and PbyP decision rules, and 
explain what we consider to be an \emph{admissible rule} in the application context we are dealing
with. 
We show that 
the three rules considered are admissible, and that there exist non-admissible (though not completely
irrational) decision rules.

The specific model assumptions are given in Section \ref{model}. 
Although the doctrinal paradox cannot be avoided, one can speak of the ``best rule'', 
once some theoretical model is defined and some reasonable performance criterion is chosen. 
Of course, different criteria gives rise to different ``best rules'', and this is again 
unavoidable.

The concept of true and false positives and negatives and that of ROC space 
are introduced in Section \ref{roc}.   
Translated to our setting, 
the false positive rate  $\text{\rm FPR}$ will be the probability of accepting $P\wedge Q$ when it is false, 
and the false negative rate 
$\text{\rm FNR}$ the probability of rejecting $P\wedge Q$ when it is true.

To make the exposition smooth, we postpone the proofs of all the technical statements of Sections \ref{3rules},
\ref{model} and \ref{roc}  
to an appendix.

Section \ref{res} contains the main results of the paper and their proofs: Rule  \rm{IbyI} 
is the best in the ROC setting under a 
symmetric criterion which gives the same weight to FPR and FNR; in case of unequal weights, 
any of the three rules can be the best, depending on the relation between the 
competence parameter and the specifics weights. 
  
Section \ref{sec:examples} contains some numerical computations and figures, showing that
all values of interest resulting from the probabilistic model can be explicitly obtained. 
More than that, the simple hypotheses on the model that we impose in Section \ref{model}
can be relaxed to a great extent and the explicit computations can still be carried
out without difficulty with adequate computing resources. This is explained in more detail in
the final discussion in Section \ref{disc}, together with other considerations and 
open problems.

%\CommentA{((Segurament suprimir)) Most interest has been centred in the asymptotic properties of the decisions, under different 
%settings and hypotheses. 
%In this paper we fix a committee size, but analyse different 
%aggregation rules in compound questions.}

\section{Decision rules}\label{3rules}

In this section, we give a detailed characterization of the IbyI, PbyP and  CbyC 
rules outlined in the introduction, and formalise the concept of admissible decision rule. 
We assume throughout the paper that the committee size is an odd number $n=2m+1$, with $m\ge 1$. The simple majority
for a single binary question is therefore achieved by any number of committee members greater than $m$.

\begin{defi}\label{def:3rules}
%\CommentA{((Si no es fa servir gaire després, no cal introduir $h$, $k$, $h'$, $k'$.}))}
Assume that the opinions of the committee are summarised as in Table \ref{tab3}.
Then, we define the following decision rules:
\begin{enumerate}
\item[$R_1:$]  The \emph{IbyI (Issue-by-Issue)} rule, \\
\begin{equation}
\text{\rm \emph{Decide $P\wedge Q$ if and only if $x+y>z+t$ and $x+z>y+t$.}}
 \label{r1}
\end{equation}
%Notice that in this rule only the marginal sums play a role.
\item[$R_2:$]  The \emph{PbyP (Path-by-Path)} rule,
\begin{equation}
 \text{\rm \emph{Decide $P\wedge Q$ if and only if $x>z+t$ and $x>y+t$.}}
 \label{r2}
\end{equation}
\item[$R_3:$]  The \emph{CbyC (Case-by-Case)} rule, 
\begin{equation}
 \text{\rm \emph{Decide $P\wedge Q$ if and only if $x>y+z+t$.}}
 \label{r3}
\end{equation}
\end{enumerate}
\end{defi}

In the sequel,
we shall use the following equivalent expressions, whose proof is straightforward and detailed in the Appendix.

\begin{propos}\label{pr1}  Using the notations in Table \ref{tab3}, we have:
\begin{align}
  &R_1\,\colon\ \text{\rm \emph{Decide $P\wedge Q$ if and only if $x>m-y\wedge z$.}}\label{r12}\\
  &R_2\,\colon\ \text{\rm \emph{Decide $P\wedge Q$ if and only if  $x>m-\big\lfloor \tfrac{y\wedge z}{2} \big\rfloor $.}}
  \label{r22}\\
  &R_3\,\colon\ \text{\rm \emph{Decide $P\wedge Q$ if and only if $x>m$.}}\label{r32}
\end{align}
where $\lfloor x\rfloor$ denotes the \emph{integer part} of $x$, i.e. the largest integer not greater than $x$, 
and $x\wedge y$ stands for the minimum of $x$ and $y$. \\
(The context will distinguish the uses of $\wedge$ as 
the minimum of two values or the logical operator `and'.)   
\end{propos}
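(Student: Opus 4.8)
The plan is to translate each of the three defining inequalities of Definition~\ref{def:3rules} into a single lower bound on $x$, using nothing more than the identity $x+y+z+t=n=2m+1$ together with elementary manipulations of integer parts. Since $x,y,z,t$ are non-negative integers, every strict inequality between integers may be rewritten as the corresponding non-strict one shifted by $1$; this is precisely what produces the $m$, the minimum $y\wedge z$, and the floor in \eqref{r12}--\eqref{r32}.

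I would start with $R_3$, which is the simplest. Substituting $y+z+t=n-x=2m+1-x$ into $x>y+z+t$ gives $2x>2m+1$, hence $x>m+\tfrac12$, which for integer $x$ collapses to $x>m$; this is \eqref{r32}. For $R_1$, I would treat the two conditions symmetrically: writing $z+t=n-(x+y)$ turns $x+y>z+t$ into $2(x+y)>2m+1$, i.e.\ $x+y>m$, i.e.\ $x>m-y$; by the same computation $x+z>y+t$ is equivalent to $x>m-z$. Conjoining the two, $x>\max\{m-y,\,m-z\}=m-\min\{y,z\}=m-(y\wedge z)$, which is \eqref{r12}.

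The only mildly delicate case is $R_2$. Replacing $z+t$ by $n-x-y$ in $x>z+t$ yields $2x>2m+1-y$; because both sides are integers this is equivalent to $2x\ge 2m+2-y$, hence to $x\ge m+1-\big\lfloor\tfrac{y}{2}\big\rfloor$ (using $\lceil m+1-\tfrac{y}{2}\rceil=m+1-\lfloor\tfrac{y}{2}\rfloor$), i.e.\ $x>m-\big\lfloor\tfrac{y}{2}\big\rfloor$. Symmetrically, $x>y+t$ is equivalent to $x>m-\big\lfloor\tfrac{z}{2}\big\rfloor$. Taking the conjunction and using that $k\mapsto\lfloor\tfrac{k}{2}\rfloor$ is non-decreasing on the integers, so that $\min\{\lfloor\tfrac{y}{2}\rfloor,\lfloor\tfrac{z}{2}\rfloor\}=\big\lfloor\tfrac{y\wedge z}{2}\big\rfloor$, I obtain $x>m-\big\lfloor\tfrac{y\wedge z}{2}\big\rfloor$, which is \eqref{r22}.

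The ``main obstacle'' is nothing more than bookkeeping: one must pass between strict and non-strict integer inequalities at the right moment and check the interchange of the minimum with the floor of the half. No conceptual difficulty arises, which is why the statement is deferred to the appendix.
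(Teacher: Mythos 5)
Your proposal is correct and follows essentially the same route as the paper's appendix proof: substitute $x+y+z+t=2m+1$ to turn each condition into a single lower bound on $x$, use integrality to absorb the $\tfrac12$, and take the conjunction via $\max\{m-y,m-z\}=m-(y\wedge z)$. The only cosmetic difference is in $R_2$, where you use the ceiling identity $\lceil m+1-\tfrac{y}{2}\rceil=m+1-\lfloor\tfrac{y}{2}\rfloor$ while the paper splits into cases according to the parity of $y$; both are valid.
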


%\begin{remark}\label{rk1}
From the characterisation of Proposition \ref{pr1}, it is clear
that the condition of rule $R_3$ to decide $P\wedge Q$ is more restrictive than that of $R_2$,
and the latter in turn is more restrictive that the condition of $R_1$. Furthermore, 
rules $R_2$ and $R_3$ are equivalent when $n=3\text{ or }5$, and they are different for $n\ge 7$. Rules
$R_1$ and $R_2$ are not equivalent for any $n\ge 3$. 
These facts will be stated as a proposition after a formal definition of decision rule:

\begin{defi}
  A \emph{decision rule} is a mapping 
from the set ${\mathbb T}$ of all contingency tables into $\{0,1\}$, where $1$ means deciding $P\wedge Q$, 
and $0$ 
means the opposite. 
\end{defi}

If the committee has $n$ members,
there are $N={{n+3} \choose {3}}$ ways to fill the contingency table, and $2^N$ possible decision rules. 
The number $N$ can be deduced by a combinatorial argument considering the number of ways
to express $n$ as the sum of four integers, including zero (the so-called \emph{weak compositions}
of a number). 

Since we assume that $P$ and $Q$ must have the same relevance in the final decision, it is natural to impose
that a decision rule must yield the
same result if we interchange the number of votes for 
$P\wedge\neg Q$ 
%\CommentB{($y$)} 
and $Q\wedge\neg P$% 
%\CommentB{($z$)} 
, i.e. $y$ and $z$. 

  Furthermore, we would like to 
 consider only decision rules satisfying the following rationality property: If a table leads to
decision 1 and a committee member that has voted for $P\wedge\neg Q$ or $Q\wedge\neg P$ changes the vote
to $P\wedge Q$, the decision for the new table should also be 1; analogously, if the decision was 0 and 
the same vote changes to $\neg P \wedge \neg Q$, then the decision for the new table should also be 0.
This condition is easily implemented by considering only rules that preserve the 
partial order $\le$ on $\mathbb{T}$ generated by the four relations
\begin{alignat*}{3}
  \ct{x}{y}{z}{t}
  \ &\le \
  \ct{x+1}{y-1}{z}{t}
  \qquad\quad 
  &
  \ct{x}{y}{z}{t}
  \ & \le \
  \ct{x+1}{y}{z-1}{t}
  \\[5pt]
  \ct{x}{y}{z}{t}
  \ & \le \
  \ct{x}{y+1}{z}{t-1}
  \qquad\quad
  &
  \ct{x}{y}{z}{t}
  \ & \le \
  \ct{x}{y}{z+1}{t-1}
\end{alignat*}

These considerations lead us to define the following concept of admissible rule. More often than not, we will
use the vector notation $(x,y,z,t)$ instead of the tabular form, to save space.
\begin{defi}\label{def:AdmisRule}
  A decision rule $R\colon {\mathbb T}\longrightarrow \{0,1\}$ will be called an \emph{admissible rule} if:
\begin{enumerate}
  \item 
It does not distinguish between transposed tables:   
  \begin{equation*}
    R(x,y,z,t)=R(x,z,y,t) \ .
  \end{equation*}
  \item 
  It is order-preserving on the partially ordered set $(\mathbb{T},\le)$:
  \begin{equation*}
  (x,y,z,t)\le (x',y',z',t') \Rightarrow R(x,y,z,t)\le R(x',y',z',t')\ .
  \end{equation*}   
\end{enumerate}  
\end{defi}

The resulting partial order for $n=3$ is represented in Figure \ref{n3}, where we have already 
identified tables that merely interchange the values of $y$ and $z$.

\setlength{\abovecaptionskip}{10pt plus 0pt minus 0pt} % Se suposa que és el valor per defecte.
\begin{figure}
\centering
\begin{tikzpicture}[->,>=stealth',shorten >=1pt,auto,node distance=1.25cm,
semithick]
\tikzstyle{every node}=[draw,rectangle, fill=black!10,rounded corners=3pt]
\scriptsize
\renewcommand\arraystretch{0.8}
\setlength\arraycolsep{0.3\tabcolsep}
\node (A3) {\matri 0 0 0 3};
%{$\begin{array}{cc}
%  0&1\\1&2
%  \end{array}$};
\node (B3) [below of=A3] {\matri 0 1 0 2};
\path (A3) edge (B3);
\node (C3) [below of=B3] {\matri 0 1 1 1};
\node (C2) [left of=C3] {\matri 0 2 0 1};
\node (C4) [right of=C3] {\matri 1 0 0 2};
\path (B3) edge (C2) edge (C3) edge (C4);
\node (D3) [below of=C3] {\matri 1 1 0 1};
\node (D2) [left of=D3] {\matri 0 2 1 0};
\node (D1) [left of=D2] {\matri 0 3 0 0};
\path (C2) edge (D1) edge (D2) edge (D3);
\path (C3) edge (D2) edge (D3);
\path (C4) edge (D3);
\node (E2) [below of=D2] {\matri 1 2 0 0};
\node (E3) [below of=D3] {\matri 1 1 1 0};
\node (E4) [right of=E3] {\matri 2 0 0 1};
\path (D1) edge (E2);
\path (D2) edge (E2) edge (E3);
\path (D3) edge (E2) edge (E3) edge (E4);
\node (F3) [below of=E3] {\matri 2 1 0 0};
\path (E2) edge (F3);
\path (E3) edge (F3);
\path (E4) edge (F3);
\node (G3) [below of=F3] {\matri 3 0 0 0}; 
\path (F3) edge (G3);
\end{tikzpicture} 
\caption{The partially ordered set $(T,\le)$ for $n=3$. We have identified tables which are 
  transposed of each other} 
\label{n3}
\end{figure}
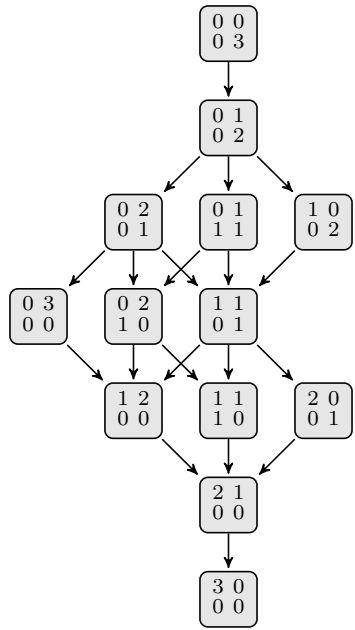

We will write $R\le R'$ whenever $R(T)\le R'(T)$ for all tables $T\in\mathbb{T}$,
and $R<R'$ whenever $R\le R'$ and $R\neq R'$.
Now, the relations among the rules $R_1$, $R_2$, $R_3$
can be stated as follows:

\begin{propos} \label{pr2}
 Rules $R_1$, $R_2$, $R_3$ of Definition \ref{def:3rules} are
 admissible and, as functions $ \mathbb{T} \rightarrow\{0,1\}$, they satisfy $R_3\le R_2\le R_1$. \\
 Moreover, we have 
 $R_3=R_2<R_1$ for $n=3,5$, and $R_3<R_2<R_1$ for $n\geq 7$.
 
\end{propos}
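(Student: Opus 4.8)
The plan is to base everything on the threshold characterisations of Proposition~\ref{pr1}: each $R_i$ has the form ``decide $P\wedge Q$ iff $x>m-f_i(y\wedge z)$'', where $f_1(k)=k$, $f_2(k)=\lfloor k/2\rfloor$ and $f_3(k)=0$. The three auxiliary functions $f_1,f_2,f_3\colon\{0,1,2,\dots\}\to\{0,1,2,\dots\}$ share the properties one actually uses below: they are nonnegative, nondecreasing, ordered $f_3\le f_2\le f_1$ pointwise, and $1$-Lipschitz in the sense $0\le f_i(k)-f_i(k-1)\le 1$. I would record these elementary facts first.

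\emph{Admissibility.} Property~1 of Definition~\ref{def:AdmisRule} is immediate, since each $R_i$ sees $(y,z)$ only through the symmetric quantity $y\wedge z$. For Property~2, I would use that $\le$ on $\mathbb{T}$ is the partial order generated by the four relations displayed just before Definition~\ref{def:AdmisRule} and that each $R_i$ is $\{0,1\}$-valued; it then suffices to verify that each of those four ``upward'' moves sends a decision $1$ to a decision $1$. For the two moves transferring a vote out of the $t$-cell, $(x,y,z,t)\mapsto(x,y+1,z,t-1)$ and the analogous one on $z$, the value $x$ is unchanged while $y\wedge z$ weakly increases, so the threshold $m-f_i(y\wedge z)$ weakly decreases and $R_i=1$ is preserved. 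For the two moves transferring a vote into the $x$-cell, $(x,y,z,t)\mapsto(x+1,y-1,z,t)$ and the analogous one on $z$, the quantity $y\wedge z$ drops by at most $1$, so by $1$-Lipschitzness the threshold rises by at most $1$, whereas $x$ rises by exactly $1$; since $x$ and the threshold are integers and $x$ strictly exceeded the old threshold (hence exceeded it by at least $1$), $x+1$ strictly exceeds the new one. Therefore all three rules are admissible.

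\emph{The chain and its strictness.} From $f_3\le f_2\le f_1$ one gets $m\ge m-f_2(y\wedge z)\ge m-f_1(y\wedge z)$ on every table, so the acceptance sets are nested and $R_3\le R_2\le R_1$ as functions on $\mathbb{T}$. For $R_2\ne R_1$ at every $n=2m+1\ge 3$, I would display the table $(x,y,z,t)=(m,1,1,m-1)$ (nonnegative entries summing to $n$): there $y\wedge z=1$, so $R_1$ decides $P\wedge Q$ since $m>m-1$, while $R_2$ does not since $m\not>m$. To see $R_2=R_3$ for $n=3,5$: as $R_3\le R_2$, a disagreement would mean $R_3=0$ and $R_2=1$, i.e.\ $x\le m$ and $x\ge m-\lfloor\tfrac{y\wedge z}{2}\rfloor+1$, forcing $\lfloor\tfrac{y\wedge z}{2}\rfloor\ge 1$, hence $y,z\ge 2$; combining $x\ge m-\lfloor\tfrac{y\wedge z}{2}\rfloor+1$ with $x\le n-y-z$ (from $t\ge 0$) and the bounds $\lfloor\tfrac{y\wedge z}{2}\rfloor\le\tfrac{y\wedge z}{2}\le\tfrac{y+z}{4}$ gives $\tfrac34(y+z)\le m$, which with $y+z\ge 4$ forces $m\ge 3$; so no disagreement exists when $m\le 2$. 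Finally, for $n=2m+1\ge 7$ (so $m\ge 3$), the table $(m,2,2,m-3)$ is legitimate with $y\wedge z=2$, and it is accepted by $R_2$ ($m>m-1$) but rejected by $R_3$ ($m\not>m$), so $R_2\ne R_3$. Together with $R_3\le R_2\le R_1$ this yields $R_3=R_2<R_1$ for $n=3,5$ and $R_3<R_2<R_1$ for $n\ge 7$.

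\emph{Expected difficulty.} There is no deep point; the two spots needing attention are the integer-margin step in the monotonicity check (the inequality ``$x>$ threshold'' must be used in its sharper integer form ``$x\ge$ threshold $+\,1$'', otherwise ``the threshold rises by at most $1$'' would not close the argument) and the short feasibility computation $t=n-x-y-z\ge 0$ that locates exactly the cut-off $n=7$ for the separation of $R_2$ from $R_3$. Everything else is bookkeeping with the explicit witness tables.
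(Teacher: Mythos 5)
Your proof is correct and follows essentially the same route as the paper: both rest on the threshold characterisations of Proposition~\ref{pr1} and on exhibiting explicit witness tables for the strict inequalities. The only differences are cosmetic --- you spell out the order-preservation check that the paper dismisses as obvious (your $1$-Lipschitz argument for the moves into the $x$-cell is sound), your single witness $(m,2,2,m-3)$ avoids the paper's even/odd case split for separating $R_2$ from $R_3$, and your feasibility inequality $\tfrac34(y+z)\le m$ replaces the paper's direct enumeration for $n=3,5$.
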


As an example of a non-admissible rule, consider requiring the simple majority of $P\wedge Q$ against
each one of the other options: 
%\CommentA{((Caldria fer l'equivalencia de notacions,   taula i vector))}
\begin{equation*}
R_0(x,y,z,t)=1 \quad\Longleftrightarrow\quad x>y \text{ and } x>z \text{ and } x>t\ , 
\end{equation*}
%\CommentB{that is, $P\wedge Q$ is decided when there is a simple majority for this option against the other 3 
%options. ((jo no em ficaria amb Max Vers...))}
%\CommentA{\textst{obtained as a direct application of the maximum likelihood principle} ((to develop))}. 
Indeed, with $n\ge 5$, one has $(2,1,1,1)<(2,2,1,0)$, but applying $R_0$ to both sides 
reverses the inequality. This contradicts the second condition of Definition \ref{def:AdmisRule}.

The relation of $R_0$ with the other three rules can be summarised as follows:
In general, $R_3\le R_2\le R_0$. For $n=3$, the relations $R_3=R_2=R_0<R_1$ hold true, with $R_0$ and $R_1$ 
differing on $(1,1,1,0)$. 
For $n=5$, we have $R_3=R_2< R_0< R_1$ with $R_2$ and $R_0$ differing on $(2,1,1,1)$, and $R_0$ and $R_1$ 
differing on $(1,2,2,0)$ and on $(2,2,1,0)$.
Starting with $n\ge 7$, $R_0$ and $R_1$ are no more comparable (neither $R_0\le R_1$ nor $R_1\le R_0$), 
and all four rules are different.
All these relations are easily checked.

\begin{remark}
We could have started by defining an equivalence relation $\sim$ on ${\mathbb T}$,  
identifying tables 
\begin{equation*}
\ct{x}{y}{z}{t}
\quad
\text{and}
\quad
\ct{x}{z}{y}{t}
\end{equation*}
 The partial order would be shorter to define, and an admissible rule considered on the 
 quotient set ${\mathbb T}/\sim$ would not need condition 1 of the definition. Figure 1
 implicitly uses this equivalence.	
 Although elegant, this setup would introduce more complications in the discussions of 
 the next sections.  
\end{remark}  
% $\mathbb{T}$ has \dots elements. 

\section{Probabilistic model of committee voting}\label{model}
Our probabilistic model is the simplest possible, 
%\CommentB{((close to the spirit of theCondorcet theorem))} 
and it is described by the four conditions below. 
Under these conditions, we can develop the theory of ROC optimality without unnecessary 
complications, and produce comprehensible 
examples. As we will discuss in Section \ref{disc}, these conditions can be very much relaxed and the
computations of the ROC analysis can be carried out automatically without problems.

A framework similar to ours can be found in List \cite{List2005}, where the main goal is to
compute the probability of appearance of the paradox, and to investigate the behaviour of this probability
when the committee size grows to infinity, in the spirit of the classical Condorcet theorem.

%\subsection{Model hypotheses}\label{hyp}

We assume that a true ``state of nature'' exists, in which one of the four exclusive 
events $P\wedge Q$, $P\wedge \neg Q$, $\neg P\wedge Q$ and  $\neg P\wedge \neg Q$ is in force.

 We assume the following conditions: 
\begin{enumerate}
  \item[(C1)] 
  \emph{Odd committee size}: The number of voters is an odd number, $n=2m+1$, with $m\geq 1$.
   \item[(C2)] \emph{Equal competence}:
   The probability $\theta$ of choosing the correct alternative when deciding 
   between $P$ and $\neg P$
   is the same for all voters and satisfies $\frac12< \theta<1$.
   The same competence $\theta$ is assumed when deciding between $Q$ and $\neg Q$.
   \item[(C3)] \emph{Mutual independence among voters}: The decision of each voter does
   not depend on the decisions of the other voters.
   \item[(C4)] \emph{Independence between $P$ and $Q$}: For each voter, the decision on one premise
   does not influence the decision on the other. 
\end{enumerate}

  Formally, conditions (C2)--(C4) can be rephrased by saying that for each voter $k$ in the committee 
and each clause $c\in\{P,Q\}$, there is a random variable 
that takes the value 1 if the voter believes the clause is true, and zero otherwise, and 
all these variables are stochastically independent and identically distributed. 
Their specific distribution depends on the true state
of nature.

Under these hypotheses, we can obtain the probability of all possible distribution 
of votes in a table  for each given state of nature.
Assume that $(X,Y,Z,T)$ are the random variables representing the counts in Table \ref{tab3} 
in the probabilistic framework just defined. As it is customary in probability and statistics, 
we distinguish between random variables represented by capital letters $X, Y$, etc, 
and their observed values, represented by small letters $x,y$, etc. 

\begin{propos}\label{pr3} Under conditions (C1)--(C4), the joint distribution of
$(X,Y,Z,T)$ is multinomial $M(n,p_x,p_y,p_z,p_t)$, with parameters 
depending on the true state of nature, according to the following table:

\centering
  \setlength\extrarowheight{2pt}
  \begin{tabular}{r|c|c|c|c}
             & $p_x$ & $p_y$ & $p_z$ & $p_t$ \\
  \hline  
  $P\wedge Q$ & $\theta^2$ & $\theta(1-\theta)$&  $\theta(1-\theta)$ & $(1-\theta)^2$ \\
  \hline 
  $P\wedge \neg Q$ & $\theta(1-\theta)$ & $\theta^2$ & $(1-\theta)^2$ & $\theta(1-\theta)$ \\
  \hline
  $\neg P\wedge Q$ & $\theta(1-\theta)$ & $(1-\theta)^2$ & $\theta^2$ & $\theta(1-\theta)$ \\
  \hline 
  $\neg P\wedge \neg Q$ & $(1-\theta)^2$ & $\theta(1-\theta)$ & $\theta(1-\theta)$ & $\theta^2$ \\
  \hline 
\end{tabular}
\end{propos}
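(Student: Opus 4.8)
The plan is to reduce the statement to the standard fact that the vector of occupancy counts obtained from $n$ independent, identically distributed categorical draws over four outcomes follows a multinomial law, and then merely to compute the four outcome probabilities in each state of nature.

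First I would fix a state of nature, say that $P\wedge Q$ is in force, and examine a single voter $k$. Using the indicator variables introduced right after conditions (C1)--(C4): let $A_k$ be the indicator that voter $k$ declares $P$ true and $B_k$ the indicator that voter $k$ declares $Q$ true. By (C2), since $P$ is actually true, $\mathbb{P}(A_k=1)=\theta$, and since $Q$ is actually true, $\mathbb{P}(B_k=1)=\theta$. By (C4) the variables $A_k$ and $B_k$ are independent, so the voter's full ballot $(A_k,B_k)$ takes the four values $(1,1),(1,0),(0,1),(0,0)$ with probabilities $\theta^2$, $\theta(1-\theta)$, $\theta(1-\theta)$, $(1-\theta)^2$ respectively; these are exactly the entries $p_x,p_y,p_z,p_t$ of the first row of the table. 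The remaining three rows follow from the same computation, replacing $\theta$ by $1-\theta$ in the factor attached to whichever of $P$, $Q$ is false in that state; one checks in passing that in each row $p_x+p_y+p_z+p_t=(\theta+(1-\theta))^2=1$, as it must be, since the four ballot types partition the outcomes for one voter.

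Second I would assemble the committee. The cell counts are $X=\sum_{k=1}^{n}\mathbf{1}\{(A_k,B_k)=(1,1)\}$, and analogously $Y,Z,T$ for the other three ballot types, so that $X+Y+Z+T=n$. By (C2)--(C3) the ballots $(A_1,B_1),\dots,(A_n,B_n)$ are independent and identically distributed, each with the categorical law over four outcomes computed in the first step. The joint distribution of the occupancy counts of $n$ i.i.d.\ categorical variables with outcome probabilities $(p_x,p_y,p_z,p_t)$ is, by definition, $M(n,p_x,p_y,p_z,p_t)$; this gives the claim for the chosen state of nature, and repeating the argument for each of the four states yields the full table. There is no genuine obstacle here: the result is essentially the definition of the multinomial distribution combined with an elementary two-by-two probability table. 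The only points needing care are the bookkeeping of which factor carries $\theta$ versus $1-\theta$ in each state of nature, and making explicit that it is the per-voter independence of the $P$- and $Q$-decisions (condition (C4)) that produces the product form $\theta^2,\ \theta(1-\theta),\ (1-\theta)^2$ rather than some correlated joint law.
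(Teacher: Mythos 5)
Your proof is correct and follows essentially the same route as the paper's: the counts of $n$ i.i.d.\ four-outcome ballots are multinomial by definition, and the per-voter independence of the $P$- and $Q$-decisions (C4) together with the competence assumption (C2) yields the product-form cell probabilities in each state of nature. You simply spell out the bookkeeping in more detail than the paper does.
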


%\CommentA{((potser millor introduir les quatre quantitats $i,j,k,\ell$, en lloc d'anar restant
%de $n$?. Mirar com quedaria 
%\CommentB{NO: així queda clar que només hi ha tres variables, tres sumatoris,...}. Considerar canviar a l'ordre
%alfabètic $i,j,k,\ell$.))}

The proof follows immediately from the definition of the multinomial distribution 
and the model assumptions, see the Appendix. Recall that
the multinomial probability function is given by
\begin{equation} 
\mathbb{P}\{X=k,Y=j,Z=i,T=\ell\}={n\choose i,j,k,\ell} p_x^k\cdot p_y^j\cdot p_z^i\cdot p_t^{\ell},  
\label{mult}
\end{equation}    
where $k,j,i,\ell$ are non-negative integers such that $n=i+j+k+\ell$,
and ${n\choose i,j,k,\ell}$ means the quotient of factorials 
$\frac{n!}{k!\cdot j!\cdot i!\cdot\ell!}$. 
 
From the law of $(X,Y,Z,T)$, it is easy to compute the law of any given decision rule 
$R\colon {\mathbb T}\rightarrow \{0,1\}$: It will be a Bernoulli law whose parameter is simply
the sum of the numbers (\ref{mult}) for all tables that $R$ maps to 1.

Notice that formula (\ref{mult}) is correct irrespective of the existence of a background absolute truth
or of the competence concept. It only needs independence between voters, and the existence of
a vector of probabilities $(p_x,p_y,p_z,p_t)$ adding up to 1, the same for all voters, representing
the probability of opting for each of the four options. 
  List \cite{List2005} studies the probability of appearance of the doctrinal paradox in this
  more general situation and shows that slightly different values of the vector of probabilities may lead 
  to very different 
  values of the probability of appearance of the paradox when $n\to\infty$. 
  Applied to our case, his results imply that, if $P\wedge Q$ is true, the probability of appearance
  of the paradox (disagreement between IbyI and CbyC rules) tends to 0 when the competence $\theta$ is greater than $\sqrt{0.5}$, and tends
  to 1 when it is lower; if $P\wedge Q$ is not true, then it always tends to 0.
  Interestingly, he also computes
  the expectation of appearance of the paradox when the vector of probabilities is assumed
  to follow a non-informative uniform prior on the simplex.
  
\section{True and false rates and ROC analysis}\label{roc}

\emph{Receiver operating characteristics}  (ROC) plots were introduced 
to visualize and compare binary classifiers in signal detection (see, e.g. Egan \cite{Egan1975})  
and its use extends to medical tests, machine learning 
and other disciplines where binary decisions have to be taken under uncertainty  
(see Fawcett \cite{Fawcett:2006:IRA:1159473.1159475} for an introductory presentation of ROC plots). 
The term \emph{classifier} is also used as a synonym of \emph{decision rule}.  

%\hrule
%\CommentA{((Redacció alternativa fins a la ratlla))

In signal detection theory, propositions are related to the emission/reception of a binary digit.
Denote by $\mathbf{\hat{0}}$ and $\mathbf{\hat{1}}$ the bit received and by $\mathbf{0}$
and $\mathbf{1}$ the bit actually sent.
The \emph{true positive rate} (TPR) is 
defined as the probability of receiving $\mathbf{\hat{1}}$ when $\mathbf{1}$ is the true bit emitted,
and the \emph{true negative rate} (TNR) as the probability of receiving $\mathbf{\hat{0}}$ when $\mathbf{0}$ 
is the bit sent. %We will write
%\begin{equation}
% \text{TPR} :=\mathbb{P}_1(\mathbf{\widehat{1}})
% \ ,\quad
% \text{TNR} :=\mathbb{P}_0(\mathbf{\widehat{0}})
% \ .
% \nonumber %\label{sen}
%\end{equation}
%\CommentA{((Si no s'utilitza més, aquestes notacions i les d'aquí baix no calen))}
Analogously, the 
\emph{false positive rate} (FPR) and the \emph{false negative rate} (FNR) are, respectively, the probabilities of 
receiving 
$\mathbf{\hat{1}}$ when $\mathbf{0}$ is the true digit, and of receiving 
$\mathbf{\hat{0}}$ when $\mathbf{1}$ is the true digit. 
%\begin{equation}
%  \text{FPR}:=\mathbb{P}_0(\mathbf{\widehat{1}})
%  \ ,\quad
%  \text{FNR}:=\mathbb{P}_1(\mathbf{\widehat{0}}). \nonumber %\label{1-esp}
%\end{equation}
From these definitions, it is clear that  a decision rule such that $\text{\rm TPR}\approx 1$ and
and $\text{\rm FPR} \approx 0$ has a ``good performance''.

 \begin{remark}\label{rk3}
In classical statistics, decision rules appear in the context of \emph{hypothesis testing}, 
where the TPR is the \emph{power} of the test, the greater the better, 
under the restriction that the FPR (called the \emph{type-I error}) does not exceed a fixed small value 
(the \emph{significance level}).
The \emph{type-II error} corresponds to the FNR. The two types of errors are thus treated in 
a non-symmetric way.
In medicine, the $\text{\rm TPR}$ and the $\text{\rm TNR}$ are respectively called \emph{sensitivity} and 
\emph{specificity}. 
\end{remark}

In the \emph{ROC graph}, several classifiers can be compared
on the basis of the pair $(\text{\rm FPR},\text{\rm TPR}$) represented in the unit square $[0,1]\times[0,1]$, the
so-called \emph{ROC space} (see Figure \ref{roc1}). Usually, the rates are estimated
from sample data. ``Good'' decision rules are expected to correspond
to points close to the upper left corner
$(0,1)$ of the unit square. Different measures of the proximity to
that corner can be considered. The most widely used is the area of
the shaded triangle (AOT) in Figure \ref{roc1}, defined by the points  $(0,0)$,  $(1,1)$ and  
$( \mbox{\rm FPR}, \mbox{\rm TPR})$. 
The closer the area to 0.5, the better the classifier is considered. 
Points on the diagonal of the square correspond to completely random
classifiers, for which the probability of true and false positives are equal. Points below the
diagonal line represent classifiers that perform worse than
random. 
%Only classifiers on the upper diagonal triangle have interest. 
Following   \cite{Fawcett:2006:IRA:1159473.1159475}, the classifiers plotted near the $(0,0)$ corner can be said to be  
  ``conservative'',  because they make few positive classifications (true or false). For the same reason, 
  classifiers plotted near to the $(1,1)$ corner are sometimes called ``liberal''  because they tend
to have a higher number of false positives.

%The following property is immediate from Figure \ref{roc1}.
%\begin{remark} 
It is immediate from Figure \ref{roc1} that the area of the triangle    
can be expressed in terms of the rates as follows:
\begin{align}
 \mbox{\rm AOT} &= \tfrac{1}{2}(\mbox{\rm TPR}-\mbox{\rm FPR}) \label{aot2}\\
                &= \tfrac{1}{2}-\big(\tfrac{\text{FPR}}{2} + \tfrac{\text{FNR}}{2} \big)\ .\label{aot}
\end{align}
%\end{remark}
%\CommentA{((Cal??))} 
%The AOT has 0.5 as maximum value. Sometimes, the area above the diagonal is added (AOT$+\frac12$) and compared to 1.

%\CommentA{((Repensar si això cal:)) The area of the triangle can be viewed as a discrete version of the commonly 
%used area under curve ($ \text{\rm AUC}$)  for ``continuous classifiers'' 
%(see  \cite{Fawcett:2006:IRA:1159473.1159475}).}  

%CONSERVAR AQUEST COMENTARI
%Classifiers are called \emph{continuous} if they produce not only a
%classification but an \emph{score} measuring the accuracy of the
%decision  which depends on some numerical threshold. If no scoring,
%only the decision (yes, not) is given, the classifier is called
%\emph{discrete}. Discrete random classifiers give a \emph{point} in
%the ROC space and continuos classifiers give a \emph{curve}, as a
%function of the threshold, in the ROC space, Ref[Fawcett..].  For
%instance, if a classifier decides $1$ when the number of judges
%voting for $1$ is greater than $m$ ($n=2m+1$) and does not give any
%information of the exact number of favorable voters, then it is
%simply a discret classifier but, when it informs about the relative
%frequency $k/n,\, (k> m)$ of favorable voters, it becomes a
%continuous classifier.
%%%%%%%%%%%%%%%%%%%%%%%%%%%

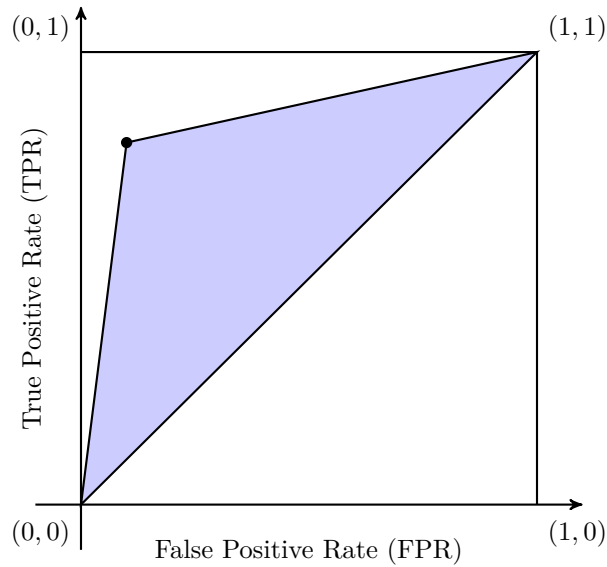
\begin{figure}[h] 
  \centering 
  \begin{tikzpicture}[
  thick,
  >=stealth',
  scale=6,
  every node/.append style={font=\small}
  ]
  \colorlet{col1}{blue!20}
  \coordinate (O) at (0,0);
  %\draw [help lines, pink] (0,0) grid (1, 1);
  \fill [fill=col1] (0,0) -- (1,1) -- (0.1,0.8) -- cycle;
  \draw[->] (-0.1,0) -- (1.1,0) ;
  \draw[->] (0,-0.1) -- (0,1.1);
  \draw[-] (1,0) -- (1,1) -- (0,1);
  %\draw[red] plot[smooth] coordinates {(-0.3,1) (2,1.5) (4,3.5) (6,2) (8,1.5)};
  \draw [-, thick] ( 0,0) -- ( 1, 1) -- (0.1,0.8) -- cycle; 
  \draw [fill] (0.1,0.8) circle (0.01);
  \node [below] at (0.5, -0.05) {False Positive Rate (FPR)};
  \node [above, rotate=90] at (-0.05, 0.5) {True Positive Rate (TPR)};
  \node [below left] at (-0, -0.01) {$(0,0)$};
  \node [above right] at (1.00,1) {$(1,1)$};
  \node [below right] at (1.00,-0.01) {$(1,0)$};
  \node [above left] at (-0.00,1) {$(0,1)$};
  %  \draw[->, thin] (2,2.5) -- (1.5,1.5);
  %  \node[right] at (2,2.7) {$\text{Àrea}=1-e^{-\lambda t}$};
  \end{tikzpicture}   
  \caption{The black dot represents a classifier in the ROC
    space, represented by its coordinates $(\text{\rm FPR},\text{\rm TPR})$, and the shaded area is 
    $ \text{\rm AOT}$.}\label{roc1}
\end{figure}

In the definition of AOT, the roles of the rates FPR and FNR are symmetric. 
In some situations, it may be desirable to assign different weights to these 
errors. This leads to the concept of 
\emph{weighted area} of the triangle, WAOT. 
Indeed, fixing a weight value $w\in(0,1)$, one can define, by analogy with formula (\ref{aot}),
\begin{equation}
\text{WAOT}_w:=  
\tfrac{1}{2}- \big( w\cdot\text{FPR} + (1-w)\cdot\text{FNR}\big) \ .  \label{WAOT}
\end{equation}

For any $w\in(0,1)$, $\text{WAOT}_w$ takes values in $[-\frac{1}2,\frac12]$, negative ``areas'' corresponding
to points below the diagonal. 
If $w>\frac12$, the weighted area $\text{WAOT}_w$  penalizes false positives more than  false negatives;  
and if $w<\frac12$, it is the other way round. 
The points of the ROC space yielding the same value of WAOT are straight lines, 
with slope equal to $w/(1-w)$, see Figure 
\ref{fig:waot}.

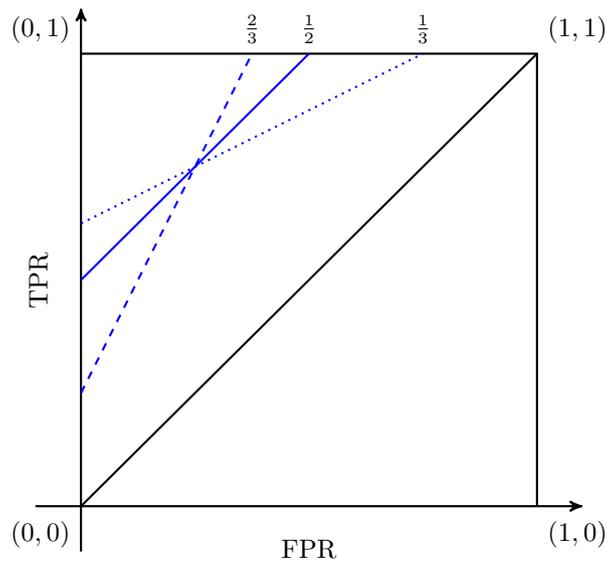
\begin{figure}[h] 
  \centering 
  \begin{tikzpicture}[
  thick,
  >=stealth',
  scale=6,
  every node/.append style={font=\small}
  ]
  %\colorlet{col1}{blue!20}
  \coordinate (O) at (0,0);
  \draw [help lines, pink] (0,0) grid (1, 1);
  %\fill [fill=col1] (0,0) -- (1,1) -- (0.1,0.8) -- cycle;
  %AXES
  \draw[->] (-0.1,0) -- (1.1,0) ;
  \draw[->] (0,-0.1) -- (0,1.1);
  \draw[-] (1,0) -- (1,1) -- (0,1);
  %\draw[red] plot[smooth] coordinates {(-0.3,1) (2,1.5) (4,3.5) (6,2) (8,1.5)};
  \draw [-, thick] ( 0,0) -- ( 1, 1); % -- (0.1,0.8) -- cycle; 
  %\draw [fill] (0.1,0.8) circle (0.01);
  %LABELS
  \node [below] at (0.5, -0.05) {FPR};
  \node [above, rotate=90] at (-0.05, 0.5) {TPR};
  %VERTICES
  \node [below left] at (-0, -0.01) {$(0,0)$};
  \node [above right] at (1.00,1) {$(1,1)$};
  \node [below right] at (1.00,-0.01) {$(1,0)$};
  \node [above left] at (-0.00,1) {$(0,1)$};
  %LEVEL LINES:
    %\node [above, font=\footnotesize] at (1/4,1) {$w=$};  
  \draw [-, thick, blue] ( 0,0.5) -- ( 0.5, 1);
    \node [above, font=\footnotesize] at (3/8,1) {$\frac23$};  
  \draw [-, thick, blue, dashed] ( 0,0.25) -- ( 3/8, 1);  
    \node [above, font=\footnotesize] at (0.5,1) {$\frac12$};  
  \draw [-, thick, blue, dotted] ( 0,5/8) -- ( 0.75, 1); 
    \node [above, font=\footnotesize] at (0.75,1) {$\frac13$};

  %  \draw[->, thin] (2,2.5) -- (1.5,1.5);
  %  \node[right] at (2,2.7) {$\text{Àrea}=1-e^{-\lambda t}$};
  \end{tikzpicture}   
  \caption{Level lines for the value $\text{WAOT}_w=0.25$ with weights $w=\frac23,\frac12,\frac13$.
  \label{fig:waot}}
\end{figure}

Unequal weights are useful in some practical situations:  
For instance, in court of justice cases, it is common that false positives 
(declaring guilty an innocent 
defendant) are considered worst than false negatives;  
in medical tests, the two errors often play an obvious asymmetric role too.

In some applications, the rates $\text{\rm TPR}$, $\text{\rm FPR}$ of a given classifier can be estimated on the basis 
of a ``test sample'' in which the actual states of the nature  are known ($\mathbf{0}$ or $\mathbf{1}$ 
in each observation) and the outputs of the classifier  ($\mathbf{\hat 0}$ or $\mathbf{\hat1}$) 
are  compared against the actual states. The results are often summarized in a contingency table known as 
\emph{confusion matrix}. 
In social applications, as is the case of court cases, the actual states are supposed to be unknown and there might not be
test samples 
available. However,
the rates $\mbox{\rm FPR}$ and $\mbox{\rm TPR}$ can be defined and computed exactly under 
our model assumptions as we will see in rest of this section. First, 
we translate the  $\mbox{\rm ROC}$ analysis vocabulary to our probabilistic framework. 

\begin{defi}\label{defi2}  
  Assume the model conditions  (C1)--(C4) of Section \ref{model}. 
The    true positive rate  associated to a decision rule $R$ is defined as the probability to  
``decide $P\wedge Q$'' under the state of the nature  ``$ P\wedge Q$ true'', and it depends on $n$, 
$\theta$ and the decision rule:
\begin{equation}
\text{\rm TPR}(n,\theta,R):= \mathbb{P}_{P\wedge Q} \{   R(X,Y,Z,T)=1\}, \label{tpr} 
\end{equation}
where $\mathbb{P}_{P\wedge Q}$  denotes the multinomial law of Proposition \ref{pr3}, 
with parameters corresponding to the state of nature ``$ P\wedge Q$ true''.
% The   false positive rate 
%\text{\rm FPR} is
%the   defined as the probability of ``$P\wedge Q$ decided'' under the state of the nature 
%``$\neg(P\wedge Q)$ true'', and it depends on $n$, $\theta$  and the decision rule:
%\begin{equation}
%\text{\rm FPR}(n,\theta,R)\colon = \mathbb{P}_{\neg(P\wedge Q)}\{ \mbox{Decide } P\wedge Q \mbox{ by }R\}. 
%\label{fpr}
%\end{equation}
%\end{itemize}  
\end{defi}

\begin{propos}\label{pr5} 
Under the assumptions (C1)--(C4), the true positive rates defined in (\ref{tpr})  for rules $R_1$, $R_2$, $R_3$  are:
\begin{align}
& \text{\rm TPR}(n,\theta,R_1) = \sum_{i=0}^n  \sum_{j=0}^{n-i} \sum_{k=m- i\wedge j+1}^{n-i-j}
{n\choose i,j,k,\ell}\theta^{i+j+2k} (1-\theta)^{2n-i-j-2k}  \ , \label{R1tpr} \\
&   \text{\rm TPR}(n,\theta,R_2) =  \sum_{i=0}^n  \sum_{j=0}^{n-i} \sum_{k=m-\lfloor\frac{i\wedge j }{2}\rfloor+1}^{n-i-j}
{n\choose i,j,k,\ell}\theta^{i+j+2k} (1-\theta)^{2n-i-j-2k} \ , \label{R2tpr}  \\
&   \text{\rm TPR}(n,\theta,R_3) =  
\sum_{i=0}^n  \sum_{j=0}^{n-i} \sum_{k=m+1}^{n-i-j}
{n\choose i,j,k,\ell}\theta^{i+j+2k} (1-\theta)^{2n-i-j-2k} \ ,\label{R3tpr} 
\end{align} 
where $\ell=n-i-j-k$.
\qed
\end{propos}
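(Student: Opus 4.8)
The plan is to evaluate $\text{\rm TPR}(n,\theta,R)$ directly from its definition (\ref{tpr}) as the $\mathbb{P}_{P\wedge Q}$-measure of the set of tables on which $R$ returns $1$, using Proposition \ref{pr1} to describe that set and Proposition \ref{pr3} to supply the cell probabilities under the state of nature ``$P\wedge Q$ true''.

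First I would substitute the parameters of the first row of the table in Proposition \ref{pr3}, namely $(p_x,p_y,p_z,p_t)=(\theta^2,\theta(1-\theta),\theta(1-\theta),(1-\theta)^2)$, into the multinomial probability (\ref{mult}). For a table with $X=k$, $Y=j$, $Z=i$, $T=\ell$ (so $i+j+k+\ell=n$) this gives
\[
{n\choose i,j,k,\ell}\,(\theta^2)^k\,(\theta(1-\theta))^j\,(\theta(1-\theta))^i\,((1-\theta)^2)^\ell
={n\choose i,j,k,\ell}\,\theta^{\,i+j+2k}(1-\theta)^{\,i+j+2\ell},
\]
and since $\ell=n-i-j-k$ the exponent of $1-\theta$ equals $2n-i-j-2k$. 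Hence every summand appearing in (\ref{R1tpr})--(\ref{R3tpr}) is precisely $\mathbb{P}_{P\wedge Q}$ of the corresponding table.

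Next I would assemble the sum. By Definition \ref{defi2}, $\text{\rm TPR}(n,\theta,R)$ is the sum of these probabilities over all weak compositions $(i,j,k,\ell)$ of $n$ for which the rule decides $P\wedge Q$. By the characterisations (\ref{r12}), (\ref{r22}), (\ref{r32}) — recalling that $y\wedge z$ in this notation is $i\wedge j$ — the acceptance region is $\{k>m-i\wedge j\}$ for $R_1$, $\{k>m-\lfloor\tfrac{i\wedge j}{2}\rfloor\}$ for $R_2$, and $\{k>m\}$ for $R_3$. Writing the sum with $i$ ranging over $0,\dots,n$, then $j$ over $0,\dots,n-i$, then $k$ over its admissible range (with $\ell=n-i-j-k$ determined), the upper limit for $k$ is $n-i-j$, ensuring $\ell\ge0$, and the lower limits are $m-i\wedge j+1$, $m-\lfloor\tfrac{i\wedge j}{2}\rfloor+1$ and $m+1$ respectively; this yields exactly (\ref{R1tpr}), (\ref{R2tpr}) and (\ref{R3tpr}).

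The only delicate point — and the one I expect to require explicit verification — is that these lower limits are nonnegative, so that the closed-form sums neither omit nor double-count a valid table. This is where the hypothesis $n=2m+1$ enters: from $i+j\le n=2m+1$ we get $i\wedge j\le m$, whence $m-i\wedge j+1\ge1$ and a fortiori $m-\lfloor\tfrac{i\wedge j}{2}\rfloor+1\ge1$, while $m+1>0$ is immediate. One should also note that whenever the lower limit exceeds $n-i-j$ the inner sum is simply empty, which is the intended behaviour. With these remarks in place the three identities follow by collecting terms, and the same bookkeeping produces an analogous formula for the TPR of any admissible rule once its acceptance region is put in the form ``$k\ge$ (a function of $i$ and $j$)''.
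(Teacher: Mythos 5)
Your proof is correct and follows essentially the same route as the paper's: substitute the multinomial cell probabilities from Proposition \ref{pr3} under the state ``$P\wedge Q$ true'' into (\ref{mult}), and sum over the acceptance region of each rule as characterised in Proposition \ref{pr1}. The extra verification that the lower limits of the inner sum are at least $1$ is a harmless addition the paper leaves implicit.
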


See the Appendix for the proof.
Notice that the inequalities
$R_3\le R_2\le R_1$ induce the corresponding inequalities among the true positive rates: 
\begin{equation*}
  \text{\rm TPR}(n,\theta,R_3)\le  \text{\rm TPR}(n,\theta,R_2)\le \text{\rm TPR}(n,\theta,R_1)\ .
\end{equation*}

False positives can arise under the three different states of nature contained in the negation $\neg (P\wedge Q)$. 
To define the false positive rate FPR, we adopt the conservative approach, taking the maximum of the
probabilities of accepting $P\wedge Q$ under each of the states. 
We see in the next proposition that the largest probability always corresponds to the 
case when one of the clauses $P$ or $Q$ is true and the other one is false.  

\begin{propos}\label{pr6}
  Let $R$ be any one of the rules $R_1$, $R_2$ or $R_3$.
  Under assumptions (C1)--(C4) of Section \ref{model}, we have  
  \begin{equation}
    \mathbb{P}_{ P\wedge \neg Q} \{ R(X,Y,Z,T)=1 \}=  
    \mathbb{P}_{\neg P\wedge Q} \{  R(X,Y,Z,T)=1 \} >
    \mathbb{P}_{\neg P\wedge \neg Q} \{  R(X,Y,Z,T)=1 \}\ .  \label{eqpr6-a} 
  \end{equation}
\qed  
\end{propos}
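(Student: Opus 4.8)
The plan is to compute the multinomial weight of each contingency table under the three relevant states of nature and then to compare these weights table by table over the acceptance region of the rule. Fix $R\in\{R_1,R_2,R_3\}$, set $A:=\{T\in\mathbb T:R(T)=1\}$, and write $\alpha:=\theta$, $\beta:=1-\theta$, so that $\alpha>\beta>0$. For a table $T=(x,y,z,t)$ denote by $\mathbb P_{s}(T)$ the probability that the committee produces exactly $T$ under state $s$. By Proposition \ref{pr3},
\[
\mathbb P_{P\wedge\neg Q}(T)=\binom{n}{x,y,z,t}\alpha^{x+2y+t}\beta^{x+2z+t},\qquad
\mathbb P_{\neg P\wedge Q}(T)=\binom{n}{x,y,z,t}\alpha^{x+2z+t}\beta^{x+2y+t},
\]
\[
\mathbb P_{\neg P\wedge\neg Q}(T)=\binom{n}{x,y,z,t}\alpha^{y+z+2t}\beta^{2x+y+z}.
\]

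For the equality in \eqref{eqpr6-a} I would use that $R$ is admissible (Proposition \ref{pr2}), so the transposition $\tau(x,y,z,t)=(x,z,y,t)$ maps $A$ bijectively onto itself; it fixes the multinomial coefficient and carries $\mathbb P_{P\wedge\neg Q}(T)$ to $\mathbb P_{\neg P\wedge Q}(\tau T)$. Summing over $A$ gives $\mathbb P_{P\wedge\neg Q}(A)=\mathbb P_{\neg P\wedge Q}(A)$ at once.

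For the strict inequality the observation to exploit is that both $\mathbb P_{P\wedge\neg Q}(T)$ and $\mathbb P_{\neg P\wedge\neg Q}(T)$ have the form $\binom{n}{x,y,z,t}\alpha^{a}\beta^{2n-a}$, with $a=x+2y+t$ in the first and $a=y+z+2t$ in the second (indeed $(x+2y+t)+(x+2z+t)=(y+z+2t)+(2x+y+z)=2n$). Since $\alpha>\beta$, the map $a\mapsto\alpha^{a}\beta^{2n-a}$ is strictly increasing on $\{0,\dots,2n\}$, whence
\[
\mathbb P_{P\wedge\neg Q}(T)>\mathbb P_{\neg P\wedge\neg Q}(T)\quad\Longleftrightarrow\quad (x+2y+t)-(y+z+2t)=x+y-z-t>0 .
\]
So the proof reduces to checking that $x+y>z+t$ holds for every $T\in A$; summing the resulting termwise strict inequality over the nonempty set $A$ then gives $\mathbb P_{P\wedge\neg Q}(A)>\mathbb P_{\neg P\wedge\neg Q}(A)$.

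The one place where the structure of the rules is needed — and essentially the only real content of the argument — is this last claim. I would get it from Proposition \ref{pr2}: since $R_3\le R_2\le R_1$, we have $A\subseteq\{R_1=1\}$, and the defining condition \eqref{r1} of $R_1$ includes $x+y>z+t$ (adding it to $x+z>y+t$ also yields $x>t$, which is the intuitive reason that the state $P\wedge\neg Q$ favours accepting $P\wedge Q$ more strongly than $\neg P\wedge\neg Q$ does). Finally $A\neq\emptyset$, since $(n,0,0,0)\in A$ for all three rules. The only thing one could stumble over is that a naive termwise comparison need not be valid in general; it happens to work here because the acceptance regions of all three rules lie inside $\{x+y>z+t\}$. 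Were that to fail, the fallback I would keep in mind is to pair each table with its transpose $\tau T$ and use the elementary bound $\alpha^{p}\beta^{q}+\alpha^{q}\beta^{p}\ge 2(\alpha\beta)^{n}$ for $p+q=2n$ together with $x>t$ on $A$.
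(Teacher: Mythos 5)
Your proof is correct and follows essentially the same route as the paper: the equality comes from the invariance of the acceptance region under the transposition $(x,y,z,t)\mapsto(x,z,y,t)$ together with the corresponding swap of the multinomial parameters, and the strict inequality comes from a termwise comparison of monomials $\alpha^{a}\beta^{2n-a}$, where your condition $x+y>z+t$ is exactly the paper's positivity condition $2j+2k-n\ge 1$ (the paper extracts it from the index range $k\ge m-i\wedge j+1$, you extract it from $A\subseteq\{R_1=1\}$ via $R_3\le R_2\le R_1$). Your observation that $A\neq\emptyset$ is a small point the paper leaves implicit, but nothing of substance differs.
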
  

The equality in (\ref{eqpr6-a}) follows from condition (C2), which assumes equal 
competence on judging  $P$ and $Q$. The inequality on the right is intuitive
noticing that the state $\neg P\wedge \neg Q$ is 
``the less likely one'' to decide $P\wedge Q$. 
A formal proof is given in the Appendix. %\ref{dem6}. 

\begin{defi}\label{def:FPR}  
  Let $R$ be any one of the rules $R_1$, $R_2$ or $R_3$. 
  We define the false positive rate  as:
  \begin{align}
  \text{\rm FPR}(n,\theta,R):= &\mathbb{P}_{P\wedge \neg Q} \{ R(X,Y,Z,T)=1 \}\ .\label{fpr}
  \end{align} 
\end{defi}
\begin{defi}\label{def:AOT-WAOT}
    Let $R$ be any one of the rules $R_1$, $R_2$ or $R_3$. We define the area of the triangle as:
  \begin{equation}
    \text{\rm AOT}(n,\theta,R):=  \tfrac{1}{2} \big(\text{\rm TPR}(n,\theta,R)-\text{\rm FPR}(n,\theta,R) \big)\ .  
  \label{aotc}
  \end{equation}

  Fix $w\in(0,1)$. We define the weighted area of the triangle as:
  \begin{equation*}
  \text{\rm WAOT}_w(n,\theta,R):=  
  \tfrac{1}{2}- \big( w\,\text{\rm  FPR}(n,\theta,R) +  (1-w)\text{\rm FNR}(n,\theta,R)\big) \ .  \label{WAOT2}
  \end{equation*}
\end{defi}

Using the formulae of Proposition \ref{pr3}, one can write the analogue of Proposition \ref{pr5} for FPR.
\begin{propos}\label{pr7} Under the assumptions  (C1)--(C4), the false positive rate defined in 
  (\ref{fpr})  for each fixed rule  is:
  \begin{align}
    & \text{\rm FPR}(n,\theta,R_1) = \sum_{i=0}^n  \sum_{j=0}^{n-i} \sum_{k=m- i\wedge j+1}^{n-i-j}
    {n\choose i,j,k,\ell} \theta^{n-i+j} (1-\theta)^{n+i-j} \ , \label{R1fpr} \\
    &   \text{\rm FPR}(n,\theta,R_2) =  \sum_{i=0}^n  \sum_{j=0}^{n-i} 
    \sum_{k=m-\lfloor\frac{i\wedge j }{2}\rfloor+1}^{n-i-j}
    {n\choose i,j,k,\ell}\theta^{n-i+j} (1-\theta)^{n+i-j} \ , \label{R2fpr}  \\
    &   \text{\rm FPR}(n,\theta,R_3) =  \sum_{i=0}^n  \sum_{j=0}^{n-i} \sum_{k=m+1}^{n-i-j}
    {n\choose i,j,k,\ell}\theta^{n-i+j} (1-\theta)^{n+i-j} \ ,
    %\sum_{k=0}^n \frac{n!}{k!(n-k)!} (\theta(1-\theta))^k(1-\theta(1-\theta)) ^{n-k}\ ,  
    \label{R3fpr} 
  \end{align} 
where $\ell=n-i-j-k$. \qed  
\end{propos}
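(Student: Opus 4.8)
The plan is to repeat, almost verbatim, the computation that proves Proposition \ref{pr5}, changing only the vector of multinomial parameters. By Definition \ref{def:FPR} one has $\text{\rm FPR}(n,\theta,R)=\mathbb{P}_{P\wedge\neg Q}\{R(X,Y,Z,T)=1\}$, and by Proposition \ref{pr3} the law $\mathbb{P}_{P\wedge\neg Q}$ is the multinomial $M(n,p_x,p_y,p_z,p_t)$ with
\[
p_x=\theta(1-\theta),\qquad p_y=\theta^2,\qquad p_z=(1-\theta)^2,\qquad p_t=\theta(1-\theta).
\]
Writing a generic table as $(X,Y,Z,T)=(k,j,i,\ell)$ with $\ell=n-i-j-k$, exactly as in formula (\ref{mult}), the first step is then to write
\[
\text{\rm FPR}(n,\theta,R)=\sum_{\substack{(k,j,i,\ell)\\ R(k,j,i,\ell)=1}}{n\choose i,j,k,\ell}\,p_x^{\,k}p_y^{\,j}p_z^{\,i}p_t^{\,\ell}.
\]

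Next I would simplify the monomial $p_x^{\,k}p_y^{\,j}p_z^{\,i}p_t^{\,\ell}$ by collecting powers of $\theta$ and of $1-\theta$ and using the identity $k+\ell=n-i-j$; this yields $\theta^{\,n-i+j}(1-\theta)^{\,n+i-j}$, which is precisely the power occurring in (\ref{R1fpr})--(\ref{R3fpr}). Then I would describe the index set $\{R=1\}$. Since the decision rules of Definition \ref{def:3rules} do not depend on the state of nature, this set is identical to the one that already appears in the proof of Proposition \ref{pr5}: by Proposition \ref{pr1}, with $(X,Y,Z,T)=(k,j,i,\ell)$, the event $R_1=1$ amounts to $k\ge m-(i\wedge j)+1$, the event $R_2=1$ to $k\ge m-\lfloor(i\wedge j)/2\rfloor+1$, and $R_3=1$ to $k\ge m+1$, while $\ell=n-i-j-k\ge 0$ forces $k\le n-i-j$. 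Summing first over $i\in\{0,\dots,n\}$, then over $j\in\{0,\dots,n-i\}$, then over $k$ in the appropriate interval (an interval that is empty, hence contributes $0$, whenever its lower endpoint exceeds $n-i-j$) reproduces verbatim the three triple sums (\ref{R1fpr}), (\ref{R2fpr}), (\ref{R3fpr}).

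I do not expect any genuine obstacle here: the proposition is a direct transcription of Proposition \ref{pr5} with the single replacement of the parameter vector, and the only point that needs care is the bookkeeping of which count is attached to which summation index and the resulting exponents of $\theta$ and $1-\theta$. The fact that $\text{\rm FPR}$ is defined through the ``conservative'' state $\mathbb{P}_{P\wedge\neg Q}$ (rather than $\mathbb{P}_{\neg P\wedge\neg Q}$ or a maximum over the three false states) is already justified by Proposition \ref{pr6} and need not be revisited.
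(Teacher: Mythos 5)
Your proposal is correct and follows exactly the route the paper takes: the paper dispatches Proposition \ref{pr7} by remarking that the computations are ``completely analogous to those of Proposition \ref{pr5}'', i.e.\ one sums the multinomial probability (\ref{mult}) over the index set $\{R=1\}$ described by Proposition \ref{pr1} and substitutes the $P\wedge\neg Q$ parameter row of Proposition \ref{pr3}, which is precisely what you do. Your bookkeeping of the exponents, $k+2j+\ell=n-i+j$ for $\theta$ and $k+2i+\ell=n+i-j$ for $1-\theta$ via $k+\ell=n-i-j$, is the only nontrivial check and it is carried out correctly.
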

  
The computations are completely analogous to those of Proposition \ref{pr5}, and we have the
ordering  
\begin{equation*}
\text{\rm FPR}(n,\theta,R_3)\le  \text{\rm FPR}(n,\theta,R_2)\le \text{\rm FPR}(n,\theta,R_1)\ ,
\end{equation*}
 as with the positive rates.
 
\section{Main results}\label{res}
In this section we will use the concepts  
from ROC analysis introduced in Section \ref{roc} as a numeric criterion
to compare the relative goodness of decision rules. Theorem \ref{teor1} establishes the preference
order of the three rules considered, under the criterion of greater area of the triangle,
where it is seen that $R_1$ is uniformly the best. This is still true 
when the false negatives are more penalised than the false positives (Corollary \ref{cor:wsmall}).
If false positives are deemed worse, the situation is more complex and interesting; it will
be covered by Theorem \ref{teor12}.

\begin{defi}
A rule $R$ is  \emph{AOT-better} than a rule $R'$ if and only if,  for all $n$ odd and $\theta> \frac{1}{2}$,
\begin{equation}
 \text{\rm AOT}(n,\theta,R) \geq  \text{\rm AOT}(n,\theta,R')\ ,   \label{crit}
\end{equation}
 and the inequality is strict for some value of $n$ or $\theta$.
\end{defi}

Under our model assumptions, it is now shown that rule $R_1$ is   AOT-better than $R_2$ and that  
$R_2$ is AOT-better than $R_3$: 

\begin{teor}\label{teor1}
  Under the model conditions (C1)--(C4), for all $n\ge 3$ odd and for all $\theta> \frac{1}{2}$, 
\begin{equation*}
\text{\rm AOT}(n,\theta,R_3) \le  \text{\rm AOT}(n,\theta,R_2) 
<  \text{\rm AOT}(n,\theta,R_1)
\end{equation*}  
and the first inequality is strict for $n\ge 7$.
\end{teor}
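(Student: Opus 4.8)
The plan is to turn the three AOT values into sums over contingency tables and compare them table by table. By \eqref{aotc}, $2\,\text{\rm AOT}(n,\theta,R)=\text{\rm TPR}(n,\theta,R)-\text{\rm FPR}(n,\theta,R)$; by Definitions \ref{defi2} and \ref{def:FPR} these two rates are the probabilities of the event $\{R=1\}$ under the multinomial laws $\mathbb{P}_{P\wedge Q}$ and $\mathbb{P}_{P\wedge\neg Q}$ of Proposition \ref{pr3}, so that
\begin{equation*}
2\,\text{\rm AOT}(n,\theta,R)=\sum_{(x,y,z,t):\,R(x,y,z,t)=1}f(x,y,z,t),\qquad
f(T):=\mathbb{P}_{P\wedge Q}(T)-\mathbb{P}_{P\wedge\neg Q}(T).
\end{equation*}
The first step is to insert the multinomial probabilities of Proposition \ref{pr3} — row $P\wedge Q$ contributes $\binom{n}{x,y,z,t}\theta^{2x+y+z}(1-\theta)^{y+z+2t}$ and row $P\wedge\neg Q$ contributes $\binom{n}{x,y,z,t}\theta^{x+2y+t}(1-\theta)^{x+2z+t}$ — and to factor the difference, obtaining
\begin{equation*}
f(x,y,z,t)=\binom{n}{x,y,z,t}\,\theta^{x+2y+t}(1-\theta)^{x+2z+t}\Big[\Big(\tfrac{\theta}{1-\theta}\Big)^{(x+z)-(y+t)}-1\Big].
\end{equation*}

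The key point is then to read off the sign of $f$. The multinomial coefficient and the power of $\theta$ and $1-\theta$ are strictly positive because $\tfrac12<\theta<1$, and for the same reason $\tfrac{\theta}{1-\theta}>1$; moreover, since $n=2m+1$ is odd, the exponent $(x+z)-(y+t)$ is a nonzero odd integer. Hence $f(T)$ has the same sign as $(x+z)-(y+t)$, and in particular $f(T)>0$ whenever $x+z>y+t$. Next I would invoke Proposition \ref{pr2}: the inclusions $\{R_3=1\}\subseteq\{R_2=1\}\subseteq\{R_1=1\}$ give the telescoping identities
\begin{align*}
2\big(\text{\rm AOT}(n,\theta,R_2)-\text{\rm AOT}(n,\theta,R_3)\big)&=\sum_{T\in\{R_2=1\}\setminus\{R_3=1\}}f(T),\\
2\big(\text{\rm AOT}(n,\theta,R_1)-\text{\rm AOT}(n,\theta,R_2)\big)&=\sum_{T\in\{R_1=1\}\setminus\{R_2=1\}}f(T).
\end{align*}
Every table $T=(x,y,z,t)$ occurring in either sum satisfies $R_1(T)=1$ (again by $R_3\le R_2\le R_1$), hence $x+z>y+t$ by \eqref{r1}, hence $f(T)>0$. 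Both right-hand sides are therefore nonnegative, which yields $\text{\rm AOT}(n,\theta,R_3)\le\text{\rm AOT}(n,\theta,R_2)\le\text{\rm AOT}(n,\theta,R_1)$.

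To pin down strictness I would again appeal to Proposition \ref{pr2}. Since $R_2<R_1$ for every $n\ge3$, the set $\{R_1=1\}\setminus\{R_2=1\}$ is nonempty, and as each of its contributions is strictly positive we get $\text{\rm AOT}(n,\theta,R_2)<\text{\rm AOT}(n,\theta,R_1)$ for all $n\ge3$ and all $\theta>\tfrac12$. Similarly, $R_3<R_2$ for $n\ge7$ makes $\{R_2=1\}\setminus\{R_3=1\}$ nonempty and gives $\text{\rm AOT}(n,\theta,R_3)<\text{\rm AOT}(n,\theta,R_2)$ there, whereas for $n=3,5$ the rules $R_2$ and $R_3$ coincide and so do their AOT values. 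I do not expect a genuine obstacle here: the whole argument rests on the elementary observation that the defining inequality of $R_1$, namely $x+z>y+t$, is exactly the condition making the per-table weight $f(T)$ positive, so that enlarging the acceptance region from $R_3$ to $R_2$ to $R_1$ can only add tables on which $\text{\rm TPR}$ increases by more than $\text{\rm FPR}$. The only place where care is needed is the algebra of factoring $f$ and checking that the exponent collapses to $(x+z)-(y+t)$; everything after that is bookkeeping with Propositions \ref{pr2} and \ref{pr3}.
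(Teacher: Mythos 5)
Your proof is correct and follows essentially the same route as the paper's: both write the AOT increments as sums of $\mathbb{P}_{P\wedge Q}(T)-\mathbb{P}_{P\wedge\neg Q}(T)$ over the difference of acceptance regions and reduce positivity of each term to $\theta^{2(x+z)-n}>(1-\theta)^{2(x+z)-n}$. The only cosmetic difference is that you certify $2(x+z)-n>0$ via the defining condition $x+z>y+t$ of $R_1$ on the whole region $\{R_1=1\}$, whereas the paper checks the equivalent bound $2i+2k-n\ge 1$ directly from the index range $k\ge m-i\wedge j+1$.
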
  

\begin{proof}
We start with the second inequality, which is equivalent,
by (\ref{aot2}), to 
$\text{TPR}(n,\theta,R_2)-\text{FPR}(n,\theta,R_2) < \text{TPR}(n,\theta,R_1)-\text{FPR}(n,\theta,R_1)$,
and therefore to 
\begin{equation}\label{eq:FPRTPRineq}
\text{FPR}(n,\theta,R_1)-\text{FPR}(n,\theta,R_2) < \text{TPR}(n,\theta,R_1)-\text{TPR}(n,\theta,R_2)\ .
\end{equation}
In other words, it is equivalent to say that the increase in true positives when changing from $R_2$ to $R_1$
more than compensates the increase in false positives.

Using formulas (\ref{R1tpr}), (\ref{R2tpr}),  (\ref{R1fpr}) and (\ref{R2fpr}),  it is clearly enough
to prove that, 
for all $0\le i\le n$, for all $0\le j\le n-i$, and for all $ m-i\wedge j+1\le k\le m-\lfloor \frac{i\wedge j}{2}\rfloor$, 
\begin{equation}
\theta^{n-i+j}(1-\theta)^{n+i-j}<\theta^{i+j+2k}(1-\theta)^{2n-i-j-2k} \nonumber%\label{t1_1}
\end{equation}
or, equivalently, that
$$
(1-\theta)^{2i+2k-n}<\theta^{2i+2k-n}\ ,   
$$
which can be easily checked taking into account that $0<1-\theta<\theta$ and $2i+2k-n>0$: 
Indeed,  $n=2m+1$ and $k\geq m-i\wedge j +1$ 
imply that $2i+2k-n \geq 2i+2(m-i\wedge j +1)-(2m+1)=2i-2i\wedge j+1\geq 1$. Thus, the second inequality is proved. 

A similar argument proves the first one, using (\ref{R2tpr}), (\ref{R3tpr}), 
(\ref{R2fpr}) and (\ref{R3fpr}), and
noticing that  
$k\geq m-i\wedge j +1$ also holds.
the inequality is strict, unless $n\le 5$, in which case  $R_2$ and $R_3$ are equal 
(see Proposition \ref{pr2}).
\end{proof}

For the weighted area of the
triangle defined by formula (\ref{WAOT}),
and weights $w<\frac{1}{2}$ 
(that means, when false negatives
are considered more harmful than false positives), the relations between $R_1$, $R_2$ and $R_3$
are the same as with AOT (case $w=\frac{1}{2}$), as stated in the next  Corollary \ref{cor:wsmall}.
However, for $w>\frac{1}{2}$, none of the rules gives a greater WAOT than another,
uniformly in $n\ge 3$ and $\frac{1}{2}<\theta<1$;   
this will be precisely stated and proved in 
Theorem \ref{teor12}.

\begin{cor}\label{cor:wsmall}
  Under the model conditions (C1)--(C4), for all $n\ge 3$ odd, and for all $\theta>\frac{1}{2}$ 
  and $w<\frac{1}{2}$, 
\begin{equation*}
\text{\rm WAOT}_w(n,\theta,R_3) \le  \text{\rm WAOT}_w(n,\theta,R_2) 
<  \text{\rm WAOT}_w(n,\theta,R_1)\ ,
\end{equation*}  
and the first inequality is strict for $n\ge 7$.
\end{cor}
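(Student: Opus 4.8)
The plan is to obtain this as a cheap consequence of Theorem \ref{teor1}: replacing the equal weights $w=\tfrac12$ by weights $w<\tfrac12$ only tilts the level lines of $\text{\rm WAOT}$ in a direction that cannot reverse an inequality already valid for $w=\tfrac12$. First I would re-express $\text{\rm WAOT}_w$ using the positive rates alone. Deciding ``$\neg(P\wedge Q)$'' when $P\wedge Q$ is true is the complementary event of deciding ``$P\wedge Q$'', so $\text{\rm FNR}(n,\theta,R)=1-\text{\rm TPR}(n,\theta,R)$; substituting this into Definition \ref{def:AOT-WAOT} gives
\begin{equation*}
\text{\rm WAOT}_w(n,\theta,R)=(1-w)\,\text{\rm TPR}(n,\theta,R)-w\,\text{\rm FPR}(n,\theta,R)+\bigl(w-\tfrac12\bigr)\ ,
\end{equation*}
hence, for any two rules $R,R'$,
\begin{equation*}
\text{\rm WAOT}_w(n,\theta,R)-\text{\rm WAOT}_w(n,\theta,R')=(1-w)\,\Delta\mathrm{TPR}-w\,\Delta\mathrm{FPR}\ ,
\end{equation*}
where $\Delta\mathrm{TPR}$ and $\Delta\mathrm{FPR}$ denote the differences of the corresponding rates between $R$ and $R'$.

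Next I would recall the two facts about these increments that already sit inside the proof of Theorem \ref{teor1}. For $(R,R')=(R_1,R_2)$, Propositions \ref{pr5} and \ref{pr7} express $\Delta\mathrm{TPR}$ and $\Delta\mathrm{FPR}$ as sums over the \emph{same} index set $\{(i,j,k):\ m-i\wedge j+1\le k\le m-\lfloor\tfrac{i\wedge j}{2}\rfloor\}$ (with $0\le i\le n$, $0\le j\le n-i$, $\ell=n-i-j-k$) of the terms ${n\choose i,j,k,\ell}\,\theta^{i+j+2k}(1-\theta)^{2n-i-j-2k}$ and ${n\choose i,j,k,\ell}\,\theta^{n-i+j}(1-\theta)^{n+i-j}$ respectively, and the term-by-term strict inequality $\theta^{n-i+j}(1-\theta)^{n+i-j}<\theta^{i+j+2k}(1-\theta)^{2n-i-j-2k}$ on this range is exactly what was verified in the proof of Theorem \ref{teor1}. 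Consequently $0\le\Delta\mathrm{FPR}\le\Delta\mathrm{TPR}$, and in fact $\Delta\mathrm{TPR}>0$ for every odd $n\ge3$, since $R_2<R_1$ there (Proposition \ref{pr2}) and every contingency table has positive multinomial mass under the state $P\wedge Q$. The same discussion for $(R,R')=(R_2,R_3)$, with the index set now the one arising from rules $R_2$ and $R_3$, yields $0\le\Delta\mathrm{FPR}\le\Delta\mathrm{TPR}$ with $\Delta\mathrm{TPR}>0$ precisely for $n\ge7$ (for $n=3,5$ the rules $R_2$ and $R_3$ coincide).

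Finally I would combine the pieces. Since $w<\tfrac12$ we have $0<w<1-w$, so whenever $0\le\Delta\mathrm{FPR}\le\Delta\mathrm{TPR}$,
\begin{equation*}
(1-w)\,\Delta\mathrm{TPR}-w\,\Delta\mathrm{FPR}\ \ge\ (1-w)\,\Delta\mathrm{TPR}-w\,\Delta\mathrm{TPR}\ =\ (1-2w)\,\Delta\mathrm{TPR}\ \ge\ 0\ ,
\end{equation*}
and this is strictly positive as soon as $\Delta\mathrm{TPR}>0$, because $1-2w>0$. Applying it to $(R_1,R_2)$ gives $\text{\rm WAOT}_w(n,\theta,R_2)<\text{\rm WAOT}_w(n,\theta,R_1)$ for all odd $n\ge3$, and to $(R_2,R_3)$ gives $\text{\rm WAOT}_w(n,\theta,R_3)\le\text{\rm WAOT}_w(n,\theta,R_2)$, strict for $n\ge7$. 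I do not expect a genuine obstacle; the one point that needs care is the bookkeeping that $\Delta\mathrm{FPR}$ and $\Delta\mathrm{TPR}$ are summed over one and the same index set, which is inherited verbatim from Theorem \ref{teor1}. The content of the corollary is simply that a margin holding with weight $\tfrac12$ is not destroyed by any reweighting with $w<\tfrac12$.
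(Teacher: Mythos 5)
Your proof is correct and follows essentially the same route as the paper: both reduce the claim to the inequality $0\le\Delta\mathrm{FPR}\le\Delta\mathrm{TPR}$ (strict when the rules differ) already established term-by-term in the proof of Theorem \ref{teor1}, and then exploit $w<\tfrac12<1-w$ to conclude. Your version merely spells out the substitution $\text{\rm FNR}=1-\text{\rm TPR}$ and the index-set bookkeeping that the paper's two-line argument leaves implicit.
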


\begin{proof}
The proof of Theorem \ref{teor1} is based in checking the inequality (\ref{eq:FPRTPRineq}).
The analogue for the weighted area is 
$w\big(\text{FPR}(n,\theta,R_1)-\text{FPR}(n,\theta,R_2)\big) < 
(1-w)\big(\text{TPR}(n,\theta,R_1)-\text{TPR}(n,\theta,R_2)\big)$, which trivially follows from (\ref{eq:FPRTPRineq})
when $w<\frac{1}{2}$. 
\end{proof}

% No podem fer aquesta def:
%\begin{defi}
%  Given a real number $w$ such that $0<w<1$, a rule $R$ is  \emph{WAOT-better} than a rule $R'$ if and only if,  for all $n$ odd and $\theta\in (\frac12,1)$,
%  \begin{equation}
%  \text{\rm WAOT}_w(n,\theta,R) \geq  \text{\rm WAOT}_w(n,\theta,R')\ ,   \label{wcrit}
%  \end{equation}
%  and the inequality is strict for some $n$ or $\theta$.
%\end{defi}
%\CommentA{((redo))}
%To understand the difference in the case $w>\frac{1}{2}$ with respect to $w\le \frac{1}{2}$, notice
%that from (\ref{WAOT}) we see that the weighted area of the triangle
%is linear in $w$. Therefore, it is easy to determine the range of weights $w$
%in which one rule is better than another rule. In general, there will
%exist a critical value $w^*$, obtained by equating expressions (\ref{WAOT}) of the
%two rules, and one of them will be WAOT-better than the other
%for weights $w<w^*$ (the one with greatest TPR, as can be seen by letting $w=0$), 
%whereas it will be the other way round for $w>w^*$. 

The case $w>\frac{1}{2}$ is different. The relation between the WAOT of $R_1$ and $R_2$ 
is the same
if the competence $\theta$ stands above a certain threshold 
$C(w)$, with $\frac{1}{2}<C(w)<w$, and similarly with $R_2$ and $R_3$.    
%No ho hem aconseguit demostrar:
%\CommnetB{Otherwise, the relation is reversed.} 
This is made 
precise in the next theorem.

%The following is our main theorem of comparison between our three rules $R_1$, $R_2$ and
%$R_3$ defined in Section \ref{3rules}.

\begin{teor}\label{teor12}
Fix $n\ge 3$. For every weight $\frac{1}{2}<w<1$, there exists $C_1(w)$, smaller than $w$ (except that
$C_1(w)=w$ if $n=3$), such that
\begin{align*}
\theta> C_1(w) &\Rightarrow \text{\rm WAOT}_w(n,\theta,R_1) > \text{\rm WAOT}_w(n,\theta,R_2)\ .
\end{align*}
Fix $n\ge 7$. For every weight $\frac{1}{2}<w<1$, there exists $C_2(w)$, smaller than $w$, such that
\begin{align*}
\theta> C_2(w) &\Rightarrow \text{\rm WAOT}_w(n,\theta,R_2) > \text{\rm WAOT}_w(n,\theta,R_3) \ .
\end{align*}
\end{teor}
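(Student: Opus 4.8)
The plan is to follow the pattern of the proof of Theorem \ref{teor1}: reduce each of the two implications to the positivity of one finite sum of positive terms, in which the ratio $u:=(1-\theta)/\theta$ enters with an exponent $d$ that is $\ge 1$ on every relevant index triple and $\ge 2$ on at least one of them as soon as $n$ is large enough. That extra exponent is exactly what lets us push the threshold $C_i(w)$ strictly below $w$.

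First I would rewrite the conclusion. By the formula (\ref{WAOT}) for $\text{\rm WAOT}_w$ and $\text{\rm FNR}=1-\text{\rm TPR}$, the inequality $\text{\rm WAOT}_w(n,\theta,R_1)>\text{\rm WAOT}_w(n,\theta,R_2)$ is equivalent to $(1-w)\big(\text{\rm TPR}(n,\theta,R_1)-\text{\rm TPR}(n,\theta,R_2)\big)>w\big(\text{\rm FPR}(n,\theta,R_1)-\text{\rm FPR}(n,\theta,R_2)\big)$. By Propositions \ref{pr5} and \ref{pr7}, and since $R_1$ and $R_2$ agree outside the set $S$ of triples $(i,j,k)$ with $m-i\wedge j+1\le k\le m-\lfloor\tfrac{i\wedge j}{2}\rfloor$ and $0\le k\le n-i-j$ (exactly the triples isolated in the proof of Theorem \ref{teor1}), both differences are sums over $S$; writing $u=(1-\theta)/\theta\in(0,1)$ and $d_{ijk}=2i+2k-n$, one has $\theta^{n-i+j}(1-\theta)^{n+i-j}=u^{d_{ijk}}\,\theta^{i+j+2k}(1-\theta)^{2n-i-j-2k}$, so the claim becomes
\[
D(\theta):=\sum_{(i,j,k)\in S}{n\choose i,j,k,\ell}\,\theta^{i+j+2k}(1-\theta)^{2n-i-j-2k}\Big[(1-w)-w\,u^{d_{ijk}}\Big]>0,\qquad \ell=n-i-j-k,
\]
where each coefficient ${n\choose i,j,k,\ell}\theta^{i+j+2k}(1-\theta)^{2n-i-j-2k}$ is strictly positive on $(\tfrac12,1)$.

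Next I would invoke two facts. From the proof of Theorem \ref{teor1}, every $(i,j,k)\in S$ has $d_{ijk}\ge 1$; hence when $\theta>w$ we get $u<\tfrac{1-w}{w}$ and $u^{d_{ijk}}\le u<\tfrac{1-w}{w}$, so every bracket is positive and $D(\theta)>0$. This already yields the implication with $C_1(w)=w$, and for $n=3$ it cannot be lowered: then $S$ consists of the single triple coming from the table $(1,1,1,0)$, for which $d=1$, so $D(\theta)$ has the sign of $(1-w)-w\,u$, positive precisely when $\theta>w$. For $n\ge 5$ ($m\ge 2$) I would exhibit a triple in $S$ with $d\ge 2$: the triple $(i,j,k)=(2,1,m)$ is a legal index ($k=m\le n-i-j=2m-2$), lies in $S$ because $R_1$ decides $P\wedge Q$ there ($m>m-1=m-i\wedge j$) while $R_2$ does not ($m\not>m-\lfloor\tfrac12\rfloor=m$), and has $d_{ijk}=4+2m-(2m+1)=3$. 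Evaluating at $\theta=w$ (so $u=\tfrac{1-w}{w}$), the $d=1$ brackets vanish while each $d\ge 2$ bracket equals $(1-w)\big[1-(\tfrac{1-w}{w})^{d-1}\big]>0$; hence $D(w)>0$. Since $D$ is a polynomial in $\theta$, hence continuous, and $D>0$ on $[w,1)$, there is $\delta>0$ with $D>0$ on $(w-\delta,1)$; choosing $\delta<w-\tfrac12$, the value $C_1(w):=w-\delta\in(\tfrac12,w)$ works.

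Finally, the statement for $R_2$ versus $R_3$ with $n\ge 7$ goes through verbatim with $S$ replaced by the disagreement set $S'$ of $R_2$ and $R_3$, i.e.\ the triples with $m-\lfloor\tfrac{i\wedge j}{2}\rfloor+1\le k\le m$ and $0\le k\le n-i-j$; one still has $d_{ijk}\ge 1$ on $S'$ (as in the proof of Theorem \ref{teor1}, since $\lfloor\tfrac{i\wedge j}{2}\rfloor\le i\wedge j$), and for $n\ge 7$ ($m\ge 3$) the triple $(2,2,m)$ lies in $S'$ with $d_{ijk}=3$ (legal since $k=m\le n-i-j=2m-3$; $R_2$ decides, $m>m-1=m-\lfloor\tfrac22\rfloor$; $R_3$ does not, $m\not>m$), so the same continuity argument produces $C_2(w)<w$. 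I expect the only delicate point to be the bookkeeping in these last steps — verifying that the exhibited triples genuinely sit inside the disagreement sets, which uses the floors in (\ref{r22})--(\ref{r32}) and the constraint $k\le n-i-j$ — together with the routine remark that $\{D>0\}$ is open; everything else is the term-by-term comparison already performed for Theorem \ref{teor1}.
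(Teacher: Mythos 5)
Your proof is correct, and its engine is the same computation the paper performs inside the proof of Lemma \ref{existD}: writing $(1-w)\bigl(\text{TPR}(R_1)-\text{TPR}(R_2)\bigr)-w\bigl(\text{FPR}(R_1)-\text{FPR}(R_2)\bigr)$ as a sum over the disagreement set in which each term carries the factor $(1-w)-w\,u^{d}$ with $u=(1-\theta)/\theta$ and $d=2i+2k-n\ge 1$, and observing that at the diagonal point $\theta=w$ the $d=1$ terms vanish while at least one term has $d\ge 2$ (this is exactly the paper's inequality (\ref{eq:positivity}) evaluated at the particular weight $w=\theta$). Where you genuinely diverge is in how this pointwise positivity is converted into the existence of $C_i(w)<w$. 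The paper fixes $\theta$, uses linearity in $w$ to define a threshold function $D_i(\theta)\ge\theta$ (Lemma \ref{existD}), and then needs a second lemma (Lemma \ref{Dlim}: $D_i$ is a quotient of non-vanishing polynomials, continuous, with limits $\tfrac12$ and $1$ at the endpoints, proved via a leading-power computation) before defining $C_i(w)$ as the largest solution of $D_i(\theta)=w$. You instead fix $w$, note that the difference is a polynomial in $\theta$ that is strictly positive at $\theta=w$ and on $(w,1)$, and take $C_i(w)=w-\delta$ by continuity at the single point $\theta=w$. This bypasses Lemma \ref{Dlim} and the inversion step entirely and is the shorter route to the theorem as stated; what it does not deliver is the global dichotomy of Lemma \ref{existD} (the reversed inequality for $w>D_i(\theta)$) and the description of the curve $\theta\mapsto D_i(\theta)$ that the paper uses in its figures and in the closing conjecture on monotonicity of $D$. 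Your explicit witnesses $(i,j,k)=(2,1,m)$ for $n\ge 5$ and $(2,2,m)$ for $n\ge 7$ — both correctly checked against the index constraints $k\le n-i-j$ and the floor conditions — are a welcome concretisation of the paper's unproved remark that ``there is at least one positive term in the sum for $n>3$''.
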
  
 
The theorem will follow from the following two lemmas.
In  the proofs, we only treat in detail the claims relating $R_1$ and $R_2$, those relating $R_2$ and $R_3$
being analogous, with only slight changes that will be noted.
The condition $n ≥ 7$ in the second case is needed
since otherwise $R_2$ and $R_3$ coincide, as we have seen in Section \ref{3rules}.

The first lemma is interesting in itself in that it establishes a dichotomy when we look
at $\theta$ as fixed and let $w$ vary. The second lemma states some technical properties of the 
functions introduced in the first.
Refer to Figure \ref{fig:D(theta)} for a graphical clue of the situation presented in theorem
and lemmas.

\begin{lema}\label{existD}
Fix $n\ge 3$. For every competence $\frac{1}{2}<\theta<1$, there exists a constant $D_1(\theta)$, 
greater than $\theta$ (except that $D_1(\theta)=\theta$ if $n=3$), such that
\begin{align*}
w< D_1(\theta) &\Rightarrow \text{\rm WAOT}_w(n,\theta,R_1) > \text{\rm WAOT}_w(n,\theta,R_2) \\
w> D_1(\theta) &\Rightarrow \text{\rm WAOT}_w(n,\theta,R_1) < \text{\rm WAOT}_w(n,\theta,R_2)\ .
\end{align*}
Fix $n\ge 7$. For every competence $\frac{1}{2}<\theta<1$, there exists a constant $D_2(\theta)$, greater than $\theta$, such that
\begin{align*}
w< D_2(\theta) &\Rightarrow \text{\rm WAOT}_w(n,\theta,R_2) > \text{\rm WAOT}_w(n,\theta,R_3) \\
w> D_2(\theta) &\Rightarrow \text{\rm WAOT}_w(n,\theta,R_2) < \text{\rm WAOT}_w(n,\theta,R_3) \ .
\end{align*} 
\end{lema}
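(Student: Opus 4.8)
The plan is to reduce everything to the sign of a single affine function of $w$. Write, using formula (\ref{WAOT}),
\begin{equation*}
\text{WAOT}_w(n,\theta,R_1) - \text{WAOT}_w(n,\theta,R_2)
= -\,w\,\big(\text{FPR}_1-\text{FPR}_2\big) + (1-w)\,\big(\text{TPR}_1-\text{TPR}_2\big)\ ,
\end{equation*}
where I abbreviate $\text{FPR}_i = \text{FPR}(n,\theta,R_i)$ and likewise for TPR. By Propositions \ref{pr5} and \ref{pr7}, both differences $A(\theta) := \text{TPR}_1-\text{TPR}_2$ and $B(\theta) := \text{FPR}_1-\text{FPR}_2$ are nonnegative (indeed $A,B>0$ for $n\ge 3$, since $R_2<R_1$ and the model gives strictly positive multinomial weights). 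The displayed quantity is $A(\theta) - w\,(A(\theta)+B(\theta))$, an affine, strictly decreasing function of $w$ (strictly, because $A+B>0$). Hence it is positive for $w$ below the unique root and negative above it, and I define
\begin{equation*}
D_1(\theta) := \frac{A(\theta)}{A(\theta)+B(\theta)}\ ,
\end{equation*}
which automatically lies in $(0,1)$ and yields both implications in the statement. The same computation with the subscripts $1,2$ replaced by $2,3$ defines $D_2(\theta)$; here one needs $n\ge 7$ so that $R_2<R_3$ strictly and the corresponding $A,B$ are positive (for $n=3,5$ the two rules coincide and the difference vanishes identically).

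It remains to show $D_1(\theta)>\theta$ for $n\ge 5$, with equality only when $n=3$. Clearing denominators, $D_1(\theta)>\theta$ is equivalent to $A(\theta)(1-\theta) > B(\theta)\,\theta$, i.e. to the strict inequality (\ref{eq:FPRTPRineq}) used in the proof of Theorem \ref{teor1}, but now in the sharper weighted form
\begin{equation*}
\theta\,\big(\text{FPR}_1-\text{FPR}_2\big) < (1-\theta)\,\big(\text{TPR}_1-\text{TPR}_2\big)\ .
\end{equation*}
Substituting the series from Propositions \ref{pr5} and \ref{pr7}, this is implied term-by-term by the inequality
\begin{equation*}
\theta\cdot\theta^{n-i+j}(1-\theta)^{n+i-j} < (1-\theta)\cdot\theta^{i+j+2k}(1-\theta)^{2n-i-j-2k}
\end{equation*}
for every index triple $(i,j,k)$ appearing in the difference of the sums, namely $m-i\wedge j+1 \le k \le m-\lfloor\frac{i\wedge j}{2}\rfloor$. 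This simplifies, exactly as in the proof of Theorem \ref{teor1}, to $(1-\theta)^{2i+2k-n+1} < \theta^{2i+2k-n+1}$, and the exponent satisfies $2i+2k-n+1 \ge 2i-2(i\wedge j)+2 \ge 2$, so the inequality holds strictly since $0<1-\theta<\theta$. For $n=3$ the range of $k$ is empty (as $R_2=R_3$, and for $R_1$ versus $R_2$ one checks directly that the only table on which they differ is $(1,1,1,0)$, for which $\text{TPR}_1-\text{TPR}_2 = \theta^3(1-\theta)^0\cdot(\text{multinomial coeff})$ while $\text{FPR}_1-\text{FPR}_2$ involves the same coefficient times $\theta^{2}(1-\theta)$, giving $A(1-\theta) = B\theta$ and hence $D_1(\theta)=\theta$). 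The argument for $D_2$ is identical, using that $k\ge m-i\wedge j+1$ still holds on the relevant index range.

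The main obstacle is bookkeeping rather than conceptual: one must verify that the index ranges produced by subtracting the triple sums in Propositions \ref{pr5} and \ref{pr7} are exactly $\{(i,j,k): m-i\wedge j+1\le k\le m-\lfloor\frac{i\wedge j}{2}\rfloor\}$ for the pair $(R_1,R_2)$, and the analogous range with lower endpoint $m+1$ for $(R_2,R_3)$, and that on each such range the crude term-by-term bound above is valid — i.e. that the exponent $2i+2k-n+1$ is genuinely $\ge 2$ (not merely $\ge 0$) so that the strict inequality survives the extra factors of $\theta$ and $1-\theta$. Once this is in place, the monotonicity in $w$ is immediate and the lemma follows; Theorem \ref{teor12} then comes out by reading the same affine function with $\theta$ fixed and inverting the relation $D_i(\theta)>\theta \Leftrightarrow \theta < D_i^{-1}$, setting $C_i(w)$ to be that inverse (well-defined and $<w$ by the strict inequality just established, with $C_1(w)=w$ when $n=3$).
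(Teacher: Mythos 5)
Your overall strategy coincides with the paper's: write the difference of weighted areas as the affine function $A(\theta)-w\bigl(A(\theta)+B(\theta)\bigr)$ of $w$, take $D_1(\theta)=A(\theta)/(A(\theta)+B(\theta))$ as its unique root, and verify $D_1(\theta)>\theta$ by evaluating at $w=\theta$, i.e.\ by checking $A(\theta)(1-\theta)>B(\theta)\,\theta$. That reduction is correct. The flaw is in your term-by-term verification of this last inequality. Dividing the right-hand term by the left-hand term in
\begin{equation*}
\theta\cdot\theta^{n-i+j}(1-\theta)^{n+i-j} \;<\; (1-\theta)\cdot\theta^{i+j+2k}(1-\theta)^{2n-i-j-2k}
\end{equation*}
gives $\bigl(\tfrac{\theta}{1-\theta}\bigr)^{2i+2k-n-1}$, so the relevant exponent is $2(i+k)-(n+1)$, not $2i+2k-n+1$ as you wrote. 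On the range $k\ge m-i\wedge j+1$ this exponent is bounded below only by $2i-2(i\wedge j)$, which equals $0$ whenever $i\le j$ and $k$ sits at its minimum --- for instance $n=5$, $(i,j,k,\ell)=(1,2,2,0)$. For such triples the two terms are \emph{equal}, so the strict term-by-term inequality fails, and the point you yourself flag as ``the main obstacle'' (that the exponent is genuinely $\ge 2$) is false for the pair $(R_1,R_2)$.

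The lemma survives because a weaker statement suffices, and it is exactly how the paper closes the argument: every term of $A(1-\theta)-\theta B$ is non-negative (the exponent is $\ge 0$ on the whole range), and for $n>3$ at least one term is strictly positive (any triple with $i>j$, e.g.\ $(i,j,k)=(2,1,2)$ for $n=5$), so the sum is strictly positive and $D_1(\theta)>\theta$; for $n=3$ the single surviving term has exponent $0$, giving $D_1(\theta)=\theta$. You should replace ``each term is strictly positive'' by ``each term is $\ge 0$ and at least one is $>0$''. Two smaller remarks: your explicit $n=3$ computation has the wrong powers --- the table $(1,1,1,0)$ contributes $\theta^4(1-\theta)^2$ to $A$ and $\theta^3(1-\theta)^3$ to $B$, not $\theta^3$ and $\theta^2(1-\theta)$, although the identity $A(1-\theta)=\theta B$ still comes out --- and for the pair $(R_2,R_3)$ your term-by-term claim happens to be safe, since on the range $k\ge m-\lfloor\tfrac{i\wedge j}{2}\rfloor+1$ one gets $2i+2k-n-1\ge 2i-2\lfloor\tfrac{i\wedge j}{2}\rfloor\ge 2$ whenever that range is non-empty.
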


\begin{lema}\label{Dlim}
For $i=1,2$,  the functions $\theta \mapsto D_i(\theta)$ in Proposition \ref{existD} 
satisfy:
\begin{enumerate}
\item For any $\frac{1}{2}<\theta<1$, we have also $\frac{1}{2} < D_i(\theta) < 1$.
\item $D_i$ is continuous in the interval $(\frac{1}{2},1)$.
\item   $\lim_{\theta\searrow \frac{1}{2}} D_i(\theta)=\frac{1}{2}$ and  %$\lim_{\theta\searrow \frac{1}{2}} D_i(\theta) = \frac{1}{2}$.
  $\lim_{\theta\nearrow 1} D_i(\theta) = 1 $. 
\end{enumerate}
\end{lema}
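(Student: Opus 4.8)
The plan is to make the functions $D_1(\theta)$ and $D_2(\theta)$ of Lemma~\ref{existD} completely explicit, and then to read off the three properties from the term-by-term estimate already used in the proof of Theorem~\ref{teor1}. As in the paper, I treat only the pair $R_1,R_2$; the pair $R_2,R_3$ is identical, the only difference being that the index set below is non-empty precisely for $n\ge 7$. Writing $\text{WAOT}_w(n,\theta,R)=\tfrac12-w\,\text{FPR}(n,\theta,R)-(1-w)\,\text{FNR}(n,\theta,R)$ and using $\text{FNR}=1-\text{TPR}$, I get
\[
\text{WAOT}_w(n,\theta,R_1)-\text{WAOT}_w(n,\theta,R_2)=b(\theta)-w\bigl(a(\theta)+b(\theta)\bigr),
\]
where, by Propositions~\ref{pr5} and~\ref{pr7},
\[
a(\theta):=\sum_{(i,j,k)\in S}{n\choose i,j,k,\ell}\theta^{n-i+j}(1-\theta)^{n+i-j},\qquad
b(\theta):=\sum_{(i,j,k)\in S}{n\choose i,j,k,\ell}\theta^{i+j+2k}(1-\theta)^{2n-i-j-2k},
\]
$\ell=n-i-j-k$, and $S$ is the common index range of the triple sums defining $\text{FPR}(R_1)-\text{FPR}(R_2)$ and $\text{TPR}(R_1)-\text{TPR}(R_2)$, which is non-empty for every $n\ge 3$ because $R_1\neq R_2$ (Proposition~\ref{pr2}). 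Hence $a,b>0$ on $(0,1)$; the displayed difference is affine and strictly decreasing in $w$ and vanishes exactly at $w=b(\theta)/\bigl(a(\theta)+b(\theta)\bigr)\in(0,1)$, so this value must be the threshold $D_1(\theta)$:
\[
D_1(\theta)=\frac{b(\theta)}{a(\theta)+b(\theta)}=\frac{1}{1+a(\theta)/b(\theta)}.
\]

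Next I would import the key inequality from the proof of Theorem~\ref{teor1}: for every $(i,j,k)\in S$ one has $2i+2k-n\ge 1$, so for $\tfrac12<\theta<1$ the matching summands satisfy $\theta^{n-i+j}(1-\theta)^{n+i-j}=\bigl(\tfrac{1-\theta}{\theta}\bigr)^{2i+2k-n}\theta^{i+j+2k}(1-\theta)^{2n-i-j-2k}\le\tfrac{1-\theta}{\theta}\,\theta^{i+j+2k}(1-\theta)^{2n-i-j-2k}$, since $0<\tfrac{1-\theta}{\theta}<1$. Summing over $S$ (all multinomial coefficients are positive) yields
\[
0<\frac{a(\theta)}{b(\theta)}\le\frac{1-\theta}{\theta}<1,\qquad \tfrac12<\theta<1.
\]

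From here the three properties are immediate. Property~1: $a/b\in(0,1)$ gives $\tfrac12<D_1(\theta)=1/(1+a/b)<1$ (in fact $D_1(\theta)\ge\theta$, in agreement with Lemma~\ref{existD}). Property~2: $a$ and $b$ are finite sums of terms $c\,\theta^p(1-\theta)^q$, hence polynomials in $\theta$, and $a+b$ is strictly positive on $(\tfrac12,1)$, so $D_1=b/(a+b)$ is continuous there. Property~3: as $\theta\nearrow 1$ the bound forces $a/b\to 0$, hence $D_1(\theta)\to 1$; as $\theta\searrow\tfrac12$, every summand of both $a$ and $b$ has total exponent $2n$, so $a(\tfrac12)=b(\tfrac12)=\bigl(\tfrac12\bigr)^{2n}\sum_{(i,j,k)\in S}{n\choose i,j,k,\ell}>0$, and continuity of the polynomials $a,b$ gives $D_1(\theta)\to b(\tfrac12)/\bigl(a(\tfrac12)+b(\tfrac12)\bigr)=\tfrac12$.

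The main (and essentially only) obstacle is the first step: verifying that the threshold $D_1(\theta)$ defined implicitly in Lemma~\ref{existD} really equals $b(\theta)/\bigl(a(\theta)+b(\theta)\bigr)$ and, in particular, that $b(\theta)>0$ throughout $(\tfrac12,1)$. The latter is exactly where the hypotheses $n\ge 3$ (for $D_1$) and $n\ge 7$ (for $D_2$) enter, guaranteeing that the relevant index set is non-empty, consistently with Proposition~\ref{pr2}. Once this is in place, everything else is routine bookkeeping with strictly positive polynomials.
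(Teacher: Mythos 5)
Your proof is correct, and it follows the paper's skeleton up to a point: the same explicit formula $D_1(\theta)=A(\theta)/\bigl(A(\theta)+B(\theta)\bigr)$ obtained from the affine dependence of the WAOT difference on $w$ (your $b$ and $a$ are the paper's $A=\text{TPR}(R_1)-\text{TPR}(R_2)$ and $B=\text{FPR}(R_1)-\text{FPR}(R_2)$), the same continuity argument via a quotient of polynomials with non-vanishing denominator, and the same evaluation $A(\tfrac12)=B(\tfrac12)$ for the limit at $\tfrac12$. Where you genuinely diverge is in the treatment of the two harder items. The paper gets $D_1>\tfrac12$ by citing $D_1(\theta)\ge\theta$ from Lemma \ref{existD}, and gets $\lim_{\theta\nearrow1}D_1(\theta)=1$ by substituting $\gamma=\theta/(1-\theta)$ and comparing the top degrees $n+m-1$ and $n+m$ of the two resulting polynomials, which requires locating the maximizing index $(1,m,m,0)$. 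You instead note that each summand of the FPR difference equals $\bigl(\tfrac{1-\theta}{\theta}\bigr)^{2i+2k-n}$ times the matching summand of the TPR difference, and that $2i+2k-n\ge1$ on the index set (the same inequality already used in Theorem \ref{teor1}, valid also for the $D_2$ index set since $k\ge m-\lfloor\tfrac{i\wedge j}{2}\rfloor+1$ still forces it); summing gives the uniform ratio bound $0<a(\theta)/b(\theta)\le(1-\theta)/\theta<1$. That single inequality simultaneously delivers $\tfrac12<D_1<1$, re-derives $D_1(\theta)\ge\theta$, and gives the limit at $1$ with an explicit rate, with no degree bookkeeping or extremal-index search. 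Your route is more elementary and self-contained; the paper's degree comparison has the mild advantage of pinning down the exact asymptotic order of $1-D_1(\theta)$ via the leading monomials, but for the purposes of the lemma your bound suffices and is cleaner.
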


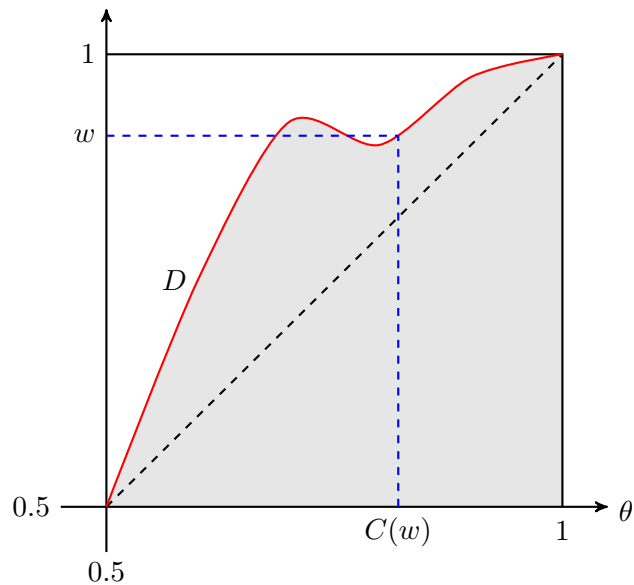
\begin{figure}[h] 
  \centering 
  \begin{tikzpicture}[
  thick,
  >=stealth',
  scale=6,
  %every node/.append style={font=\small}
  ]
  %\colorlet{col1}{blue!20}
  \coordinate (O) at (0,0);
  %\draw [help lines, pink] (0,0) grid (1, 1);
  %\fill [fill=col1] (0,0) -- (1,1) -- (0.1,0.8) -- cycle;
     \fill [gray!20] %,  pattern=dots] 
          (0, 0)
          -- plot[smooth] coordinates {(0,0) (0.20,0.50) (0.40,0.85) (0.60,0.80) (0.80,0.95) (1,1)}
          -- (1, 0) 
          -- cycle;
  
  %AXES
  \draw[->] (-0.1,0) -- (1.1,0) ;
  \draw[->] (0,-0.1) -- (0,1.1);
  \draw[-] (1,0) -- (1,1) -- (0,1);
  \draw [-, dashed] ( 0,0) -- ( 1, 1); % -- (0.1,0.8) -- cycle; 
  %\draw [fill] (0.1,0.8) circle (0.01);
  %LABELS
  \node [right] at (1.10,-0.01) {$\theta$};
  \node [left] at (0.00, 0.82) {$w$};
  \node [left] at (0.20,0.50) {$D$};
  \node [below] at (0.64,0) {$C(w)$};
  %VERTICES
  \node [left] at (-0.1, 0) {$0.5$};
  \node [below] at (0, -0.1) {$0.5$};
  %\node [above right] at (1.00,1) {$(1,1)$};
  \node [below] at (1.00,-0.01) {$1$};
  \node [left] at (-0.00,1) {$1$};
  %LINES:
  \draw[red] plot[smooth] coordinates {(0,0) (0.20,0.50) (0.40,0.85) (0.60,0.80) (0.80,0.95) (1,1)};
  \draw [-, thick, blue, dashed] ( 0,0.82) -- ( 0.64, 0.82);  
  \draw [-, thick, blue, dashed] ( 0.64,0) -- ( 0.64, 0.82);          
  \end{tikzpicture}   
  \caption{An example graph (in red) of a function $D(\theta)$ with the properties of Lemmas \ref{existD} 
  and \ref{Dlim}.
  For competence values $\theta$ greater than $C(w)$, 
%  (and in particular for any competence 
%  greater than $w$ itself), 
  rule $R_1$ is better than rule $R_2$ in the sense of the weighted area of the triangle.
  \label{fig:D(theta)}}
\end{figure}

\emph{Proof of Lemma \ref{existD}.}

We prove the first part of the lemma, the other one being analogous, and write simply $D$ 
instead of $D_1$.

Fix $n\ge 3$,  
and let $\frac{1}{2}<\theta<1$. 
From the definitions of WAOT, FPR, FNR and TPR, the inequality 
$\text{WAOT}_w(n,\theta,R_2)<\text{WAOT}_w(n,\theta,R_1)$
is equivalent to 
\begin{equation*}%\label{eq:main_ineq1}
w\big(\text{FPR}(n,\theta,R_1)-\text{FPR}(n,\theta,R_2)\big)
<
(1-w)\big(\text{TPR}(n,\theta,R_1)-\text{TPR}(n,\theta,R_2)\big)\ ,
\end{equation*}
or 
\begin{equation}\label{eq:main_ineq2}
(1-w)\big(\text{TPR}(n,\theta,R_1)-\text{TPR}(n,\theta,R_2)\big)
-w\big(\text{FPR}(n,\theta,R_1)-\text{FPR}(n,\theta,R_2)\big)
>0\ .
\end{equation}
%Both terms in the left-hand side of (\ref{eq:main_ineq2}) are positive: 
%Indeed, for any rule $R$ we have
%\begin{align*}
%\text{TPR}(n,\theta,R) &= \sum_{R(i,j,k,\ell)=1} {n\choose i,j,k,\ell} \theta^{i+j+2k}(1-\theta)^{2n-i-j-2k}\ ,
%\\
%\text{FPR}(n,\theta,R) &= \sum_{R(i,j,k,\ell)=1} {n\choose i,j,k,\ell} \theta^{n-i+j}(1-\theta)^{n+i-j}\ ,
%\end{align*}
%and $R_2$ has less terms in the sums than $R_1$. \CommentA{((Comprovar segons $n$))} 
We know that inequality (\ref{eq:main_ineq2}) is true for $w\le \frac{1}{2}$, 
for all $n\ge 3$ and all $\theta>\frac{1}{2}$,
by Theorem \ref{teor1} and Corollary \ref{cor:wsmall}, whereas for $w=1$ is 
manifestly false, since clearly $\text{FPR}(n,\theta,R_1)-\text{FPR}(n,\theta,R_2)>0$,
from (\ref{R1fpr})-(\ref{R2fpr}).

Since the left-hand side is a linear function of $w$, there must be a unique point 
$D(\theta)$ such that the equality holds in (\ref{eq:main_ineq2}), and such that for every $w<D(\theta)$ rule
$R_1$ yields a greater weighted area than $R_2$, whereas for every $w>D(\theta)$ it is the 
other way round.

Let us now prove than $D(\theta)>\theta$: Consider the particular weight $w=\theta$.
We only need to check that (\ref{eq:main_ineq2}) is satisfied for this $w$. Indeed,
using (\ref{R1tpr}--\ref{R2tpr}) and (\ref{R1fpr}--\ref{R2fpr}),
inequality (\ref{eq:main_ineq2}) is equivalent, for this particular value, to
\begin{equation}\label{eq:positivity}
\sum_{i=0}^n\ \sum_{j=0}^{n-i}\ \sum_{k=m-i\wedge j+1}^{(m-\lfloor\frac{i\wedge j}{2}\rfloor)\wedge (n-i-j)}
{n\choose i,j,k,\ell}
\Big[
\theta^{i+j+2k}(1-\theta)^{1+2n-i-j-2k} - \theta^{1+n-i+j}(1-\theta)^{n+i-j} 
\Big]
>0 \ .
\end{equation}
The expression in square brackets is non-negative if and only if 
\begin{equation*}
\Big(\frac{\theta}{1-\theta}\Big)^{2(i+k)-(n+1)}\ge 1\ .
\end{equation*}  
Taking into account that we are assuming $\theta>\frac{1}{2}$, and that $2(i+k)\ge n+1$ 
in this range of indices, the sum
(\ref{eq:positivity}) is non-negative. 
One can easily check that there is at least
one positive term in the sum for $n>3$.
By the linearity in $w$, and the fact that $w=\theta$ satisfies (\ref{eq:main_ineq2}), 
it is clear that the critical point $D(\theta)$ is greater than $\theta$.
In the special case $n=3$, there is only one term in the sum and it is equal to zero, hence 
$D(\theta)=\theta$.

The claim on $R_2$ and $R_3$ can be proved in the same way, with the only difference that
the index $k$ in the sum (\ref{eq:positivity}) ranges from $m-\lfloor \frac{i\wedge j}{2}\rfloor +1$ 
to $m\wedge(n-i-j)$. If $n=7$, the only term in the sum (\ref{eq:positivity}) is already positive,
and there is no need to consider this limit case separately.
%The condition $n\ge 7$ is needed, since otherwise the two rules coincide, their TPR and TNR are identical and inequality (\ref{eq:main_ineq2}) does not make sense.

\emph{Proof of  Lemma \ref{Dlim}.} 

Again, we prove the claim for $D_1$ and call it simply $D$. The proofs for $D_2$ are similar,
with the differences pointed out at the end.
 
Fix $n\geq 3$. The point $D(\theta)$ is easily computed: $D(\theta)=A(\theta)/(A(\theta)+B(\theta))$, with
\begin{equation*}
 A(\theta):=\text{\rm TPR}(n,\theta,R_1)-\text{\rm TPR}(n,\theta,R_2)
  \ ,\quad 
  B(\theta):=\text{\rm FPR}(n,\theta,R_1)-\text{\rm FPR}(n,\theta,R_2)
  \ .
\end{equation*} 
Using (\ref{R1tpr}--\ref{R2tpr}) and (\ref{R1fpr}--\ref{R2fpr}), their explicit expressions are:
\begin{eqnarray*} 
A(\theta)&=&\sum_{i=0}^n\ \sum_{j=0}^{n-i}\ \sum_{k=m-i\wedge j+1}^{(m-\lfloor\frac{i\wedge j}{2}\rfloor)\wedge (n-i-j)}
{n\choose i,j,k,\ell} \theta^{i+j+2k}(1-\theta)^{2n-i-j-2k}  %\label{eq:A}
 \\
B(\theta)&=&\sum_{i=0}^n\ \sum_{j=0}^{n-i}\ \sum_{k=m-i\wedge j+1}^{(m-\lfloor\frac{i\wedge j}{2}\rfloor)\wedge (n-i-j)}
{n\choose i,j,k,\ell} \theta^{n-i+j}(1-\theta)^{n+i-j}\ . %\label{eq:B}
\end{eqnarray*}
First, notice that the function $D(\theta)$ is the quotient of two polynomials  in $\theta$  
that never vanish because both the true and the false positive rates are greater for rule 
$R_1$  than for $R_2$, for all $\theta>\frac{1}{2}$ and $n\ge 3$ (see again (\ref{R1tpr})-(\ref{R2tpr}) and 
(\ref{R1fpr})-(\ref{R2fpr})).
Hence, it is clear that $D$ is continuous and less than 1 in its domain. Moreover, Lemma \ref{existD} 
ensures that $\frac{1}{2}<\theta\le D(\theta)$.
In particular, $\frac12<D(\theta)<1$. For $\theta=\frac{1}{2}$, both $A$ and $B$ are well defined and 
$A(\frac{1}{2})=B(\frac{1}{2})$, giving $\lim_{\theta\searrow \frac{1}{2}} D(\theta)=\frac{1}{2}$.  

To complete the proof of the lemma, we must see that  $\lim_{\theta\nearrow 1} D_i(\theta) = 1 $. 
This is equivalent to see that $\lim_{\theta\nearrow 1}\psi(\theta)=0$, where  
$\psi(\theta) :=\frac{B(\theta)}{A(\theta)}$. 

For simplicity, let us  
denote by $\Omega$ the valid range of the indices in the triple sums above, 
\begin{align*}
 \Omega:= \{\omega=(i,j,k,\ell): \ &0\leq i\leq n,\ 0\leq j\leq n-i, \\
   &m-i\wedge j+1\leq k\leq(m-\lfloor\tfrac{i\wedge j}{2}\rfloor)\wedge (n-i-j),\  
   \ell=n-i-j-k \}
   \ ,
\end{align*}
and also   
\begin{equation*} 
 \alpha:={n\choose i,j,k,\ell}\,\quad\text{and}\quad  
 \gamma:=\frac{\theta}{1-\theta}\ .
\end{equation*}    
 Then we can write 
\begin{equation*} 
A(\gamma) = \Big(\frac{1}{1+\gamma}\Big)^{2n} \displaystyle\sum_{\omega\in\Omega} \alpha  \gamma^{i+j+2k}\ ,    
\quad 
B(\gamma)= \Big(\frac{1}{1+\gamma}\Big)^{2n} \displaystyle\sum_{\omega\in\Omega} \alpha \gamma^{n-i+j}   
\label{eq:ABgamma} 
\end{equation*}    
and
\begin{equation} 
\psi(\gamma)=\frac{\displaystyle\sum_{\omega\in\Omega} \alpha \gamma^{n-i+j}}
                  {\displaystyle\sum_{\omega\in\Omega} \ \alpha  \gamma^{i+j+2k}}
                  \ .
\label{eq:psi}
\end{equation}    
The parameter $\gamma$ is an increasing function of  $\theta$, mapping $(\frac{1}{2},1)$ into  $(1,\infty)$.     
Therefore, it is enough to show that $\psi(\gamma)$ converges to 0 as $\gamma\nearrow \infty$. 
To see this, consider the highest powers of the polynomials in (\ref{eq:psi}). 
Let us check that 
\begin{equation}
\displaystyle\max_{\omega\in\Omega} (n-i+j)=n+m-1  
\quad 
\mbox{and} 
\quad 
\displaystyle\max_{\omega\in\Omega} (i+j+2k)=n+m  
\ ;
\label{eq:maxs}
\end{equation}
then clearly 
$\displaystyle\lim_{\gamma\nearrow \infty} \psi(\gamma)=0$ 
and the proof will be complete.
 
Indeed, notice that $\omega^*=(1,m,m,0)$ belongs to $\Omega$ and achieves both values 
$n-m-1$ and $n+m$. Moreover, for any 
$\omega=(i,j,k,\ell)\in \Omega$, we have  
$i+j+2k\leq n+k\leq n+m$.
For the other bound, consider two cases: If $j\le i$, the inequality $n-i+j\le n+m-1$ is trivial;
for $j>i$, 
we have on $\Omega$ the condition $m-i+1\le n-i-j$, equivalent to $m\ge j$. Since we have also $i\ge 1$
on $\Omega$, we deduce $m\ge -i+j+1$. Adding $n-1$ to both sides, we get again the desired 
inequality.
Thus, (\ref{eq:maxs}) is checked and this finishes the proof for $D_1$.  

The claims for $D_2$ can be proved in the same way with the difference that, in the correspondingly
defined set $\Omega$, index $k$
ranges from $m-\lfloor \frac{i\wedge j}{2}\rfloor +1$ 
to $m\wedge(n-i-j)$, and that the maximum value of $(n-i-j)$ on $\Omega$ 
is $n+m-3$; the maximum of $(i+j+2k)$ is again $n+m$, 
and both maxima are achieved at the point $w=(2,m-1,m,0)\in\Omega$. 
The remaining arguments of the proof are identical.

\emph{Proof of  Theorem \ref{teor12}.} 

Denote, as before, by $D$ and $C$ de functions $D_1$ and $C_1$, respectively. The proof of the second
part of the theorem is identical, with $D_2$ and $C_2$.

From Lemma \ref{existD} and Lemma \ref{Dlim}, we know that $D(\theta)$  defines a 
continuous function that maps $[\frac{1}{2},1]$ onto $[\frac{1}{2},1]$ (extending it by continuity at the endpoints), and 
that the curve $\big(\theta, D(\theta)\big)$ lies above the diagonal, as depicted in Figure \ref{fig:D(theta)}. 
The shaded region below the curve is
the set of pairs $(\theta, w)$ for which 
$\text{\rm WAOT}_w(n,\theta,R_1)>\text{\rm WAOT}_w(n,\theta,R_2)$, and the inequality
is reversed above the curve. 

Regardless whether $D$ is and increasing function or not, and since it is a quotient of polynomials,
for any fixed $w$ in $(\frac{1}{2},1)$ there is a largest value $C(w)\in (\frac{1}{2},1)$ that solves for $\theta$
the equation $D(\theta)=w$. And it is clear from the figure that for all $\theta>C(w)$, all points 
of the segment $\{(\theta,w):\ \theta>C(w)\}$ lie below the curve,
hence $\text{\rm WAOT}_w(n,\theta,R_1)>\text{\rm WAOT}_w(n,\theta,R_2)$ for these values.  

Finally, since $D(\theta)>\theta$ for all $\theta$ if $n>3$, we have in particular $w=D(C(w))>C(w)$.
For $n=3$, $D(\theta)=\theta$, and we get $w=C(w)$. This finishes the proof.
\qed

%\CommentA{((In a way, prop and thorem are dual of each other. We emphasize the latter becasue we feel that
%the weights are a more a priori fixed quantity than the comptence of the committee.))} 

\section{Examples}\label{sec:examples}
In this section we illustrate the above theory with some numeric computations and figures. 

It is clear from the previous sections that none of the rules considered is best
for all pairs $(\theta,w)$ of competence and weight. In fact, no two rules $R_i$ and $R_j$ are
comparable, uniformly in $\theta$ and $w$ (except for $R_2$ and $R_3$ when $n\le 5$, because 
they coincide). Indeed, Table \ref{tab:C-WAOT} shows the different possible orders under the  
WAOT criterion for three fixed competence values and for varying $w\in(0,1)$, and committee size $n=11$.

\begin{table}[h] \centering
{\footnotesize
    {\extrarowheight 2pt
\begin{tabular}{|c|c|c|c|}
  \hline
   & \centering $\theta=0.6$ & $\theta=0.75$  & $\theta=0.90$  \\
  \hline
 $R_3< R_2< R_1$  &  ~~~~~~~$0  <w <  0.6930$  & ~~~~~~~$0  <w < 0.8722$ & ~~~~~~~$0  <w <  0.9576$\\
 $R_3< R_1 < R_2$  &  $0.6930< w <0.7184$  & $0.8722< w <0.9154$  & $0.9576< w <0.9847$ \\
 $R_1< R_3< R_2$  &  $0.7184< w <0.8215$  & $0.9154< w <0.9820$  & $0.9847< w <0.9995$ \\
 $R_1< R_2< R_3$  &  $0.8215< w < 1$~~~~~~~  & $0.9820< w < 1$~~~~~~~  & 
 $0.9995< w < 1$~~~~~~~ \\
  \hline
\end{tabular} 
}    }
\caption{For each $\theta$, intervals of weights $w$ where each order of rules holds true. Te committee
size is $n=11$. 
The symbol $<$ in the first column means here the relation ``worse than'' with respect to the
WAOT criterion.
\label{tab:C-WAOT}
}
\end{table} 

In Table \ref{tres1}, again with committee size $n=11$, the values of TPR, FPR, and AOT are 
computed to four decimal places for a large range of competence values, using (\ref{R1tpr}--\ref{R3tpr}) 
and (\ref{R1fpr}--\ref{R3fpr}). The last column is the value of 
WAOT for a fixed
weight $w=0.75$; 
in other words, false positives penalises the performance measure three times more than false negatives. In column five, 
they penalise in the same proportion. 

Under the AOT criterion, $R_1$ (IbyI) is always better than $R_2$ (PbyP), and $R_2$ is better than $R_3$ (CbyC),
as Theorem \ref{teor1} claims. The numbers in the table  
give an idea of the extent of the difference, suggesting that $R_2$ and $R_3$ are closer together
than $R_1$ and $R_2$. We also see that for very high values of 
$\theta$ the rules get closer and approach fast to the perfect value 0.5. 

% latex table generated in R 3.5.1 by xtable 1.8-3 package
% Fri Nov 09 18:28:33 2018
\begin{table}[ht]
\centering
\begingroup\footnotesize
\begin{tabular}{rrrrrrr}
 $n$ & $\theta$ & rule & TPR & FPR & AOT & WAOT \\ 
  \hline
11 & 0.55 & 1 & 0.4008 & 0.2323 & 0.0843 & 0.1760 \\ 
  11 & 0.55 & 2 & 0.1372 & 0.0589 & 0.0391 & 0.2401 \\ 
  11 & 0.55 & 3 & 0.0811 & 0.0327 & 0.0242 & 0.2457 \\ 
   \hline
11 & 0.60 & 1 & 0.5678 & 0.1857 & 0.1910 & 0.2526 \\ 
  11 & 0.60 & 2 & 0.2569 & 0.0480 & 0.1044 & 0.2782 \\ 
  11 & 0.60 & 3 & 0.1661 & 0.0283 & 0.0689 & 0.2703 \\ 
   \hline
11 & 0.65 & 1 & 0.7247 & 0.1266 & 0.2991 & 0.3363 \\ 
  11 & 0.65 & 2 & 0.4197 & 0.0340 & 0.1928 & 0.3294 \\ 
  11 & 0.65 & 3 & 0.2984 & 0.0219 & 0.1382 & 0.3081 \\ 
   \hline
11 & 0.70 & 1 & 0.8497 & 0.0721 & 0.3888 & 0.4083 \\ 
  11 & 0.70 & 2 & 0.6050 & 0.0207 & 0.2922 & 0.3857 \\ 
  11 & 0.70 & 3 & 0.4729 & 0.0148 & 0.2291 & 0.3571 \\ 
   \hline
11 & 0.75 & 1 & 0.9325 & 0.0331 & 0.4497 & 0.4583 \\ 
  11 & 0.75 & 2 & 0.7779 & 0.0105 & 0.3837 & 0.4366 \\ 
  11 & 0.75 & 3 & 0.6649 & 0.0084 & 0.3282 & 0.4099 \\ 
   \hline
11 & 0.80 & 1 & 0.9768 & 0.0115 & 0.4827 & 0.4856 \\ 
  11 & 0.80 & 2 & 0.9051 & 0.0042 & 0.4505 & 0.4731 \\ 
  11 & 0.80 & 3 & 0.8339 & 0.0037 & 0.4151 & 0.4557 \\ 
   \hline
11 & 0.85 & 1 & 0.9947 & 0.0026 & 0.4960 & 0.4967 \\ 
  11 & 0.85 & 2 & 0.9735 & 0.0012 & 0.4862 & 0.4925 \\ 
  11 & 0.85 & 3 & 0.9446 & 0.0011 & 0.4717 & 0.4853 \\ 
   \hline
11 & 0.90 & 1 & 0.9994 & 0.0003 & 0.4996 & 0.4996 \\ 
  11 & 0.90 & 2 & 0.9965 & 0.0002 & 0.4982 & 0.4990 \\ 
  11 & 0.90 & 3 & 0.9910 & 0.0002 & 0.4954 & 0.4976 \\ 
   \hline
11 & 0.95 & 1 & 1.0000 & 0.0000 & 0.5000 & 0.5000 \\ 
  11 & 0.95 & 2 & 0.9999 & 0.0000 & 0.5000 & 0.5000 \\ 
  11 & 0.95 & 3 & 0.9997 & 0.0000 & 0.4999 & 0.4999 \\ 
   \hline
\end{tabular}
\endgroup
\caption{Comparison of decision rules 
$R_1$, $R_2$ and $R_3$ in the ROC map for a fixed number of voters $n=11$ 
and different competence levels $\theta$. 
Notice that $R_1$ has the largest values of AOT for any fixed $\theta$, 
that is, the sequence of ordered rules from best to worst is always 1, 2, 3. 
%The values of the weighted area, WAOT with weight $w=0.75$, 
are recorded in the last column.} 
\label{tres1}
\end{table}

For the WAOT criterion, with $w=0.75$, we see that for low competence values of the jury, it is better
to use rules $R_2$ or $R_3$. At some point, the order of AOT is re-established and preserved till the
end of the table. From Table \ref{tab:C-WAOT} we could deduce that this happens somewhere between $0.60$ and 
$0.75$, and here we see that it must be for some value between $0.60$ and $0.65$. Of course the 
exact value can be computed, and it turns out to be 0.6374.

%---------------------------- Figure Theta-AOT

%((Figure Theta-AOT)) 
A simple illustration of the evolution of the AOT for the three rules we are considering, 
for several committee sizes,
can be seen in Figure \ref{fig:Theta-AOT}. For $n=3,7,11$ the AOT value is drawn against $\theta$,
with red ($R_1$), blue ($R_2$) and green ($R_3$) curves.  
Notice that the largest absolute differences in AOT take place around the middle values of the competence
 range. That means, for $0.6 \lesssim\theta\lesssim 0.8$, say, it is when
 the election of the decision rule is most critical.
 
%((Figure Theta-WAOT))
The analogous Figure \ref{fig:Theta-WAOT} shows the same curves, in the case $w=0.75$. 
Rule $R_1$ is the worst in the lower end of $\theta$ values, and the best in the upper end.
Rule $R_3$ does the opposite.

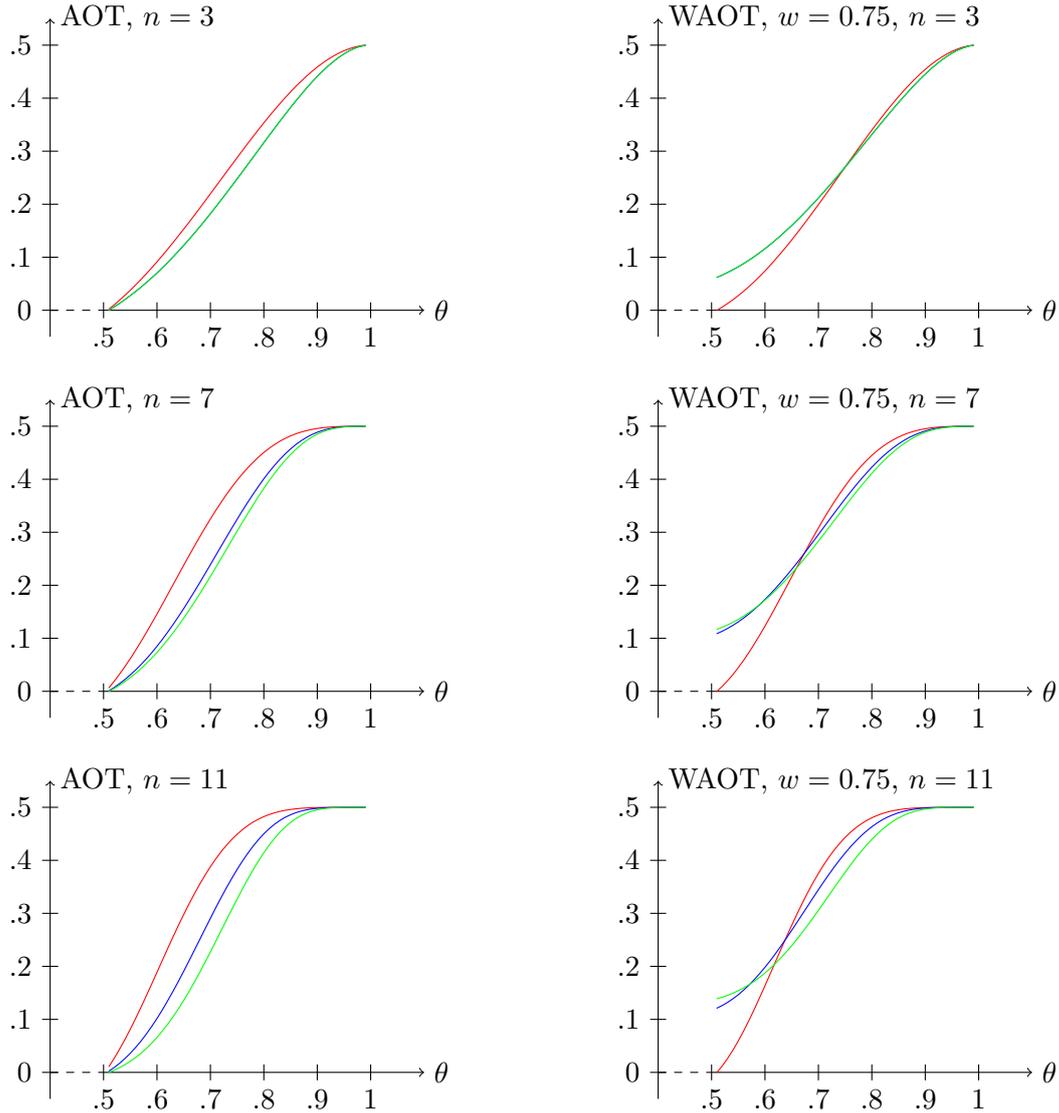
\begin{figure}[t!]
\centering
\begin{subfigure}[t]{.45\textwidth}
  \centering
% Created by tikzDevice version 0.11 on 2018-04-05 12:33:03
% !TEX encoding = UTF-8 Unicode
\begin{tikzpicture}[x=1pt,y=1pt, scale=2]
\definecolor{fillColor}{RGB}{255,255,255}
%\path[use as bounding box,fill=fillColor,fill opacity=0.00] (0,0) rectangle (144.54,144.54);
\path[use as bounding box,fill=fillColor,fill opacity=0.00] (40,40) rectangle (110,100);
%       \draw[help lines, pink, %line width=0.1pt,
%       ] (40,50) grid[step={($(10, 10) - (0, 0)$)}] (110, 100);
       
  %AXES:
   \draw [->] ( 50, 50) -- ( 110, 50) node[right] {$\theta$};
   \draw [-, dashed] ( 40, 50) -- ( 50, 50);
   \draw [->] ( 40, 45) -- ( 40, 105) node[above, right] {AOT, $n=3$};
  %TICK MARKS
   \draw [-] ( 50, 48.5) node [below] {$.5$} -- ( 50, 51.5); 
   \draw [-] ( 60, 48.5) node [below] {$.6$} -- ( 60, 51.5); 
   \draw [-] ( 70, 48.5) node [below] {$.7$} -- ( 70, 51.5); 
   \draw [-] ( 80, 48.5) node [below] {$.8$} -- ( 80, 51.5); 
   \draw [-] ( 90, 48.5) node [below] {$.9$} -- ( 90, 51.5); 
   \draw [-] ( 100, 48.5) node [below] {$1$} -- ( 100, 51.5);
   \draw [-] ( 38.5, 50) node [left] {$0$} -- ( 41.5, 50);
   \draw [-] ( 38.5, 60) node [left] {$.1$} -- ( 41.5, 60);
   \draw [-] ( 38.5, 70) node [left] {$.2$} -- ( 41.5, 70);
   \draw [-] ( 38.5, 80) node [left] {$.3$} -- ( 41.5, 80);
   \draw [-] ( 38.5, 90) node [left] {$.4$} -- ( 41.5, 90);
   \draw [-] ( 38.5, 100) node [left] {$.5$} -- ( 41.5, 100);   
% Gràfica va horitzontalment de 53.10 a 115.44. Escalant i desplaçant
% ho transformem a 51 a 99: 53.10*48/(115.44-53.10)=40.88547
% Gràfica va verticalment de 62.46 a 94.08. Escalant i desplaçant 
% ho transformem a 50 a 100 (en realitat hem de saber la imatge del 0.51 i del 0.99; TODO):
% 62.46*50/(94.08-62.46)=98.7666      
\begin{scope}[shift={(51-40.88547, 50-98.7666)}, 
              xscale=48/(115.44-53.10), 
              yscale=50/(94.08-62.46)
]
%\path[clip] ( 49.20, 61.20) rectangle (119.34, 95.34);

\definecolor{drawColor}{RGB}{255,0,0}

\path[draw=drawColor,line width= 0.4pt,line join=round,line cap=round] ( 53.10, 62.59) --
	( 54.40, 63.11) --
	( 55.69, 63.66) --
	( 56.99, 64.24) --
	( 58.29, 64.84) --
	( 59.59, 65.47) --
	( 60.89, 66.13) --
	( 62.19, 66.81) --
	( 63.49, 67.51) --
	( 64.79, 68.23) --
	( 66.09, 68.98) --
	( 67.38, 69.74) --
	( 68.68, 70.52) --
	( 69.98, 71.32) --
	( 71.28, 72.13) --
	( 72.58, 72.95) --
	( 73.88, 73.79) --
	( 75.18, 74.63) --
	( 76.48, 75.49) --
	( 77.78, 76.35) --
	( 79.07, 77.21) --
	( 80.37, 78.08) --
	( 81.67, 78.94) --
	( 82.97, 79.80) --
	( 84.27, 80.66) --
	( 85.57, 81.51) --
	( 86.87, 82.36) --
	( 88.17, 83.19) --
	( 89.47, 84.01) --
	( 90.76, 84.81) --
	( 92.06, 85.59) --
	( 93.36, 86.36) --
	( 94.66, 87.10) --
	( 95.96, 87.82) --
	( 97.26, 88.51) --
	( 98.56, 89.17) --
	( 99.86, 89.79) --
	(101.16, 90.39) --
	(102.45, 90.95) --
	(103.75, 91.47) --
	(105.05, 91.94) --
	(106.35, 92.38) --
	(107.65, 92.77) --
	(108.95, 93.12) --
	(110.25, 93.41) --
	(111.55, 93.66) --
	(112.85, 93.85) --
	(114.14, 93.99) --
	(115.44, 94.08);
\definecolor{drawColor}{RGB}{0,0,255}

\path[draw=drawColor,line width= 0.4pt,line join=round,line cap=round] ( 53.10, 62.46) --
	( 54.40, 62.86) --
	( 55.69, 63.28) --
	( 56.99, 63.73) --
	( 58.29, 64.20) --
	( 59.59, 64.69) --
	( 60.89, 65.21) --
	( 62.19, 65.75) --
	( 63.49, 66.32) --
	( 64.79, 66.90) --
	( 66.09, 67.52) --
	( 67.38, 68.15) --
	( 68.68, 68.81) --
	( 69.98, 69.49) --
	( 71.28, 70.19) --
	( 72.58, 70.91) --
	( 73.88, 71.65) --
	( 75.18, 72.41) --
	( 76.48, 73.18) --
	( 77.78, 73.98) --
	( 79.07, 74.78) --
	( 80.37, 75.60) --
	( 81.67, 76.44) --
	( 82.97, 77.28) --
	( 84.27, 78.13) --
	( 85.57, 78.99) --
	( 86.87, 79.85) --
	( 88.17, 80.72) --
	( 89.47, 81.58) --
	( 90.76, 82.45) --
	( 92.06, 83.31) --
	( 93.36, 84.16) --
	( 94.66, 85.01) --
	( 95.96, 85.83) --
	( 97.26, 86.65) --
	( 98.56, 87.44) --
	( 99.86, 88.21) --
	(101.16, 88.96) --
	(102.45, 89.67) --
	(103.75, 90.35) --
	(105.05, 90.98) --
	(106.35, 91.58) --
	(107.65, 92.12) --
	(108.95, 92.61) --
	(110.25, 93.04) --
	(111.55, 93.41) --
	(112.85, 93.70) --
	(114.14, 93.92) --
	(115.44, 94.06);
\definecolor{drawColor}{RGB}{0,255,0}

\path[draw=drawColor,line width= 0.4pt,line join=round,line cap=round] ( 53.10, 62.46) --
	( 54.40, 62.86) --
	( 55.69, 63.28) --
	( 56.99, 63.73) --
	( 58.29, 64.20) --
	( 59.59, 64.69) --
	( 60.89, 65.21) --
	( 62.19, 65.75) --
	( 63.49, 66.32) --
	( 64.79, 66.90) --
	( 66.09, 67.52) --
	( 67.38, 68.15) --
	( 68.68, 68.81) --
	( 69.98, 69.49) --
	( 71.28, 70.19) --
	( 72.58, 70.91) --
	( 73.88, 71.65) --
	( 75.18, 72.41) --
	( 76.48, 73.18) --
	( 77.78, 73.98) --
	( 79.07, 74.78) --
	( 80.37, 75.60) --
	( 81.67, 76.44) --
	( 82.97, 77.28) --
	( 84.27, 78.13) --
	( 85.57, 78.99) --
	( 86.87, 79.85) --
	( 88.17, 80.72) --
	( 89.47, 81.58) --
	( 90.76, 82.45) --
	( 92.06, 83.31) --
	( 93.36, 84.16) --
	( 94.66, 85.01) --
	( 95.96, 85.83) --
	( 97.26, 86.65) --
	( 98.56, 87.44) --
	( 99.86, 88.21) --
	(101.16, 88.96) --
	(102.45, 89.67) --
	(103.75, 90.35) --
	(105.05, 90.98) --
	(106.35, 91.58) --
	(107.65, 92.12) --
	(108.95, 92.61) --
	(110.25, 93.04) --
	(111.55, 93.41) --
	(112.85, 93.70) --
	(114.14, 93.92) --
	(115.44, 94.06);
\end{scope}
\end{tikzpicture}

\vspace{0.8cm}
% Created by tikzDevice version 0.11 on 2018-04-05 12:33:03
% !TEX encoding = UTF-8 Unicode
\begin{tikzpicture}[x=1pt,y=1pt, scale=2]
\definecolor{fillColor}{RGB}{255,255,255}
%\path[use as bounding box,fill=fillColor,fill opacity=0.00] (0,0) rectangle (144.54,144.54);
\path[use as bounding box,fill=fillColor,fill opacity=0.00] (40,40) rectangle (110,100);
%       \draw[help lines, pink, %line width=0.1pt,
%       ] (40,50) grid[step={($(10, 10) - (0, 0)$)}] (110, 100);
       
  %AXES:
   \draw [->] ( 50, 50) -- ( 110, 50) node[right] {$\theta$};
   \draw [-, dashed] ( 40, 50) -- ( 50, 50);
   \draw [->] ( 40, 45) -- ( 40, 105) node[above, right] {AOT, $n=7$};
  %TICK MARKS
   \draw [-] ( 50, 48.5) node [below] {$.5$} -- ( 50, 51.5); 
   \draw [-] ( 60, 48.5) node [below] {$.6$} -- ( 60, 51.5); 
   \draw [-] ( 70, 48.5) node [below] {$.7$} -- ( 70, 51.5); 
   \draw [-] ( 80, 48.5) node [below] {$.8$} -- ( 80, 51.5); 
   \draw [-] ( 90, 48.5) node [below] {$.9$} -- ( 90, 51.5); 
   \draw [-] ( 100, 48.5) node [below] {$1$} -- ( 100, 51.5);
   \draw [-] ( 38.5, 50) node [left] {$0$} -- ( 41.5, 50);
   \draw [-] ( 38.5, 60) node [left] {$.1$} -- ( 41.5, 60);
   \draw [-] ( 38.5, 70) node [left] {$.2$} -- ( 41.5, 70);
   \draw [-] ( 38.5, 80) node [left] {$.3$} -- ( 41.5, 80);
   \draw [-] ( 38.5, 90) node [left] {$.4$} -- ( 41.5, 90);
   \draw [-] ( 38.5, 100) node [left] {$.5$} -- ( 41.5, 100);   
% Gràfica va horitzontalment de 53.10 a 115.44. Escalant i desplaçant
% ho transformem a 51 a 99: 53.10*48/(115.44-53.10)=40.88547
% Gràfica va verticalment de 62.46 a 94.08. Escalant i desplaçant 
% ho transformem a 50 a 100 (en realitat hem de saber la imatge del 0.51 i del 0.99; TODO):
% 62.46*50/(94.08-62.46)=98.7666      
\begin{scope}[shift={(51-40.88547, 50-98.7666)}, 
              xscale=48/(115.44-53.10), 
              yscale=50/(94.08-62.46)
]
%\path[clip] ( 49.20, 61.20) rectangle (119.34, 95.34);

\definecolor{drawColor}{RGB}{255,0,0}

\path[draw=drawColor,line width= 0.4pt,line join=round,line cap=round] ( 53.10, 62.88) --
	( 54.40, 63.67) --
	( 55.69, 64.51) --
	( 56.99, 65.41) --
	( 58.29, 66.36) --
	( 59.59, 67.35) --
	( 60.89, 68.39) --
	( 62.19, 69.46) --
	( 63.49, 70.56) --
	( 64.79, 71.68) --
	( 66.09, 72.83) --
	( 67.38, 73.99) --
	( 68.68, 75.16) --
	( 69.98, 76.32) --
	( 71.28, 77.49) --
	( 72.58, 78.64) --
	( 73.88, 79.77) --
	( 75.18, 80.88) --
	( 76.48, 81.97) --
	( 77.78, 83.02) --
	( 79.07, 84.03) --
	( 80.37, 85.00) --
	( 81.67, 85.93) --
	( 82.97, 86.81) --
	( 84.27, 87.64) --
	( 85.57, 88.41) --
	( 86.87, 89.13) --
	( 88.17, 89.79) --
	( 89.47, 90.40) --
	( 90.76, 90.95) --
	( 92.06, 91.45) --
	( 93.36, 91.89) --
	( 94.66, 92.29) --
	( 95.96, 92.63) --
	( 97.26, 92.93) --
	( 98.56, 93.18) --
	( 99.86, 93.39) --
	(101.16, 93.56) --
	(102.45, 93.70) --
	(103.75, 93.81) --
	(105.05, 93.90) --
	(106.35, 93.96) --
	(107.65, 94.01) --
	(108.95, 94.04) --
	(110.25, 94.06) --
	(111.55, 94.07) --
	(112.85, 94.07) --
	(114.14, 94.08) --
	(115.44, 94.08);
\definecolor{drawColor}{RGB}{0,0,255}

\path[draw=drawColor,line width= 0.4pt,line join=round,line cap=round] ( 53.10, 62.52) --
	( 54.40, 62.92) --
	( 55.69, 63.38) --
	( 56.99, 63.87) --
	( 58.29, 64.42) --
	( 59.59, 65.00) --
	( 60.89, 65.64) --
	( 62.19, 66.32) --
	( 63.49, 67.05) --
	( 64.79, 67.83) --
	( 66.09, 68.65) --
	( 67.38, 69.51) --
	( 68.68, 70.41) --
	( 69.98, 71.35) --
	( 71.28, 72.32) --
	( 72.58, 73.33) --
	( 73.88, 74.36) --
	( 75.18, 75.41) --
	( 76.48, 76.49) --
	( 77.78, 77.57) --
	( 79.07, 78.66) --
	( 80.37, 79.75) --
	( 81.67, 80.84) --
	( 82.97, 81.92) --
	( 84.27, 82.98) --
	( 85.57, 84.02) --
	( 86.87, 85.02) --
	( 88.17, 85.99) --
	( 89.47, 86.92) --
	( 90.76, 87.81) --
	( 92.06, 88.64) --
	( 93.36, 89.41) --
	( 94.66, 90.13) --
	( 95.96, 90.78) --
	( 97.26, 91.37) --
	( 98.56, 91.90) --
	( 99.86, 92.36) --
	(101.16, 92.75) --
	(102.45, 93.09) --
	(103.75, 93.36) --
	(105.05, 93.58) --
	(106.35, 93.75) --
	(107.65, 93.87) --
	(108.95, 93.96) --
	(110.25, 94.02) --
	(111.55, 94.05) --
	(112.85, 94.07) --
	(114.14, 94.07) --
	(115.44, 94.08);
\definecolor{drawColor}{RGB}{0,255,0}

\path[draw=drawColor,line width= 0.4pt,line join=round,line cap=round] ( 53.10, 62.46) --
	( 54.40, 62.81) --
	( 55.69, 63.20) --
	( 56.99, 63.63) --
	( 58.29, 64.10) --
	( 59.59, 64.61) --
	( 60.89, 65.17) --
	( 62.19, 65.77) --
	( 63.49, 66.41) --
	( 64.79, 67.10) --
	( 66.09, 67.83) --
	( 67.38, 68.61) --
	( 68.68, 69.42) --
	( 69.98, 70.28) --
	( 71.28, 71.18) --
	( 72.58, 72.12) --
	( 73.88, 73.08) --
	( 75.18, 74.08) --
	( 76.48, 75.11) --
	( 77.78, 76.16) --
	( 79.07, 77.22) --
	( 80.37, 78.30) --
	( 81.67, 79.39) --
	( 82.97, 80.48) --
	( 84.27, 81.56) --
	( 85.57, 82.63) --
	( 86.87, 83.68) --
	( 88.17, 84.71) --
	( 89.47, 85.71) --
	( 90.76, 86.67) --
	( 92.06, 87.59) --
	( 93.36, 88.46) --
	( 94.66, 89.27) --
	( 95.96, 90.02) --
	( 97.26, 90.71) --
	( 98.56, 91.33) --
	( 99.86, 91.88) --
	(101.16, 92.37) --
	(102.45, 92.78) --
	(103.75, 93.13) --
	(105.05, 93.41) --
	(106.35, 93.63) --
	(107.65, 93.79) --
	(108.95, 93.91) --
	(110.25, 93.99) --
	(111.55, 94.04) --
	(112.85, 94.06) --
	(114.14, 94.07) --
	(115.44, 94.08);
\end{scope}
\end{tikzpicture}

\vspace{0.8cm}
% Created by tikzDevice version 0.11 on 2018-04-05 12:33:03
% !TEX encoding = UTF-8 Unicode
\begin{tikzpicture}[x=1pt,y=1pt, scale=2]
\definecolor{fillColor}{RGB}{255,255,255}
%\path[use as bounding box,fill=fillColor,fill opacity=0.00] (0,0) rectangle (144.54,144.54);
\path[use as bounding box,fill=fillColor,fill opacity=0.00] (40,40) rectangle (110,100);
%       \draw[help lines, pink, %line width=0.1pt,
%       ] (40,50) grid[step={($(10, 10) - (0, 0)$)}] (110, 100);
       
  %AXES:
   \draw [->] ( 50, 50) -- ( 110, 50) node[right] {$\theta$};
   \draw [-, dashed] ( 40, 50) -- ( 50, 50);
   \draw [->] ( 40, 45) -- ( 40, 105) node[above, right] {AOT, $n=11$};
  %TICK MARKS
   \draw [-] ( 50, 48.5) node [below] {$.5$} -- ( 50, 51.5); 
   \draw [-] ( 60, 48.5) node [below] {$.6$} -- ( 60, 51.5); 
   \draw [-] ( 70, 48.5) node [below] {$.7$} -- ( 70, 51.5); 
   \draw [-] ( 80, 48.5) node [below] {$.8$} -- ( 80, 51.5); 
   \draw [-] ( 90, 48.5) node [below] {$.9$} -- ( 90, 51.5); 
   \draw [-] ( 100, 48.5) node [below] {$1$} -- ( 100, 51.5);
   \draw [-] ( 38.5, 50) node [left] {$0$} -- ( 41.5, 50);
   \draw [-] ( 38.5, 60) node [left] {$.1$} -- ( 41.5, 60);
   \draw [-] ( 38.5, 70) node [left] {$.2$} -- ( 41.5, 70);
   \draw [-] ( 38.5, 80) node [left] {$.3$} -- ( 41.5, 80);
   \draw [-] ( 38.5, 90) node [left] {$.4$} -- ( 41.5, 90);
   \draw [-] ( 38.5, 100) node [left] {$.5$} -- ( 41.5, 100);   
% Gràfica va horitzontalment de 53.10 a 115.44. Escalant i desplaçant
% ho transformem a 51 a 99: 53.10*48/(115.44-53.10)=40.88547
% Gràfica va verticalment de 62.46 a 94.08. Escalant i desplaçant 
% ho transformem a 50 a 100 (en realitat hem de saber la imatge del 0.51 i del 0.99; TODO):
% 62.46*50/(94.08-62.46)=98.7666      
\begin{scope}[shift={(51-40.88547, 50-98.7666)}, 
              xscale=48/(115.44-53.10), 
              yscale=50/(94.08-62.46)
]
%\path[clip] ( 49.20, 61.20) rectangle (119.34, 95.34);

\definecolor{drawColor}{RGB}{255,0,0}

\path[draw=drawColor,line width= 0.4pt,line join=round,line cap=round] ( 53.10, 63.15) --
	( 54.40, 64.15) --
	( 55.69, 65.23) --
	( 56.99, 66.38) --
	( 58.29, 67.61) --
	( 59.59, 68.89) --
	( 60.89, 70.22) --
	( 62.19, 71.59) --
	( 63.49, 72.99) --
	( 64.79, 74.40) --
	( 66.09, 75.82) --
	( 67.38, 77.22) --
	( 68.68, 78.61) --
	( 69.98, 79.97) --
	( 71.28, 81.28) --
	( 72.58, 82.55) --
	( 73.88, 83.76) --
	( 75.18, 84.91) --
	( 76.48, 85.99) --
	( 77.78, 86.99) --
	( 79.07, 87.92) --
	( 80.37, 88.78) --
	( 81.67, 89.55) --
	( 82.97, 90.25) --
	( 84.27, 90.87) --
	( 85.57, 91.42) --
	( 86.87, 91.90) --
	( 88.17, 92.32) --
	( 89.47, 92.67) --
	( 90.76, 92.97) --
	( 92.06, 93.22) --
	( 93.36, 93.42) --
	( 94.66, 93.59) --
	( 95.96, 93.72) --
	( 97.26, 93.82) --
	( 98.56, 93.90) --
	( 99.86, 93.96) --
	(101.16, 94.00) --
	(102.45, 94.03) --
	(103.75, 94.05) --
	(105.05, 94.06) --
	(106.35, 94.07) --
	(107.65, 94.07) --
	(108.95, 94.07) --
	(110.25, 94.08) --
	(111.55, 94.08) --
	(112.85, 94.08) --
	(114.14, 94.08) --
	(115.44, 94.08);
\definecolor{drawColor}{RGB}{0,0,255}

\path[draw=drawColor,line width= 0.4pt,line join=round,line cap=round] ( 53.10, 62.62) --
	( 54.40, 63.05) --
	( 55.69, 63.54) --
	( 56.99, 64.11) --
	( 58.29, 64.73) --
	( 59.59, 65.43) --
	( 60.89, 66.19) --
	( 62.19, 67.03) --
	( 63.49, 67.93) --
	( 64.79, 68.89) --
	( 66.09, 69.91) --
	( 67.38, 71.00) --
	( 68.68, 72.13) --
	( 69.98, 73.30) --
	( 71.28, 74.52) --
	( 72.58, 75.76) --
	( 73.88, 77.03) --
	( 75.18, 78.30) --
	( 76.48, 79.58) --
	( 77.78, 80.84) --
	( 79.07, 82.09) --
	( 80.37, 83.30) --
	( 81.67, 84.48) --
	( 82.97, 85.60) --
	( 84.27, 86.67) --
	( 85.57, 87.67) --
	( 86.87, 88.60) --
	( 88.17, 89.46) --
	( 89.47, 90.23) --
	( 90.76, 90.92) --
	( 92.06, 91.53) --
	( 93.36, 92.06) --
	( 94.66, 92.51) --
	( 95.96, 92.89) --
	( 97.26, 93.19) --
	( 98.56, 93.44) --
	( 99.86, 93.63) --
	(101.16, 93.78) --
	(102.45, 93.88) --
	(103.75, 93.96) --
	(105.05, 94.01) --
	(106.35, 94.04) --
	(107.65, 94.06) --
	(108.95, 94.07) --
	(110.25, 94.07) --
	(111.55, 94.07) --
	(112.85, 94.08) --
	(114.14, 94.08) --
	(115.44, 94.08);
\definecolor{drawColor}{RGB}{0,255,0}

\path[draw=drawColor,line width= 0.4pt,line join=round,line cap=round] ( 53.10, 62.46) --
	( 54.40, 62.73) --
	( 55.69, 63.03) --
	( 56.99, 63.38) --
	( 58.29, 63.78) --
	( 59.59, 64.23) --
	( 60.89, 64.74) --
	( 62.19, 65.31) --
	( 63.49, 65.94) --
	( 64.79, 66.63) --
	( 66.09, 67.38) --
	( 67.38, 68.20) --
	( 68.68, 69.09) --
	( 69.98, 70.03) --
	( 71.28, 71.04) --
	( 72.58, 72.11) --
	( 73.88, 73.22) --
	( 75.18, 74.39) --
	( 76.48, 75.59) --
	( 77.78, 76.82) --
	( 79.07, 78.08) --
	( 80.37, 79.35) --
	( 81.67, 80.63) --
	( 82.97, 81.89) --
	( 84.27, 83.14) --
	( 85.57, 84.35) --
	( 86.87, 85.52) --
	( 88.17, 86.64) --
	( 89.47, 87.69) --
	( 90.76, 88.67) --
	( 92.06, 89.57) --
	( 93.36, 90.38) --
	( 94.66, 91.10) --
	( 95.96, 91.73) --
	( 97.26, 92.28) --
	( 98.56, 92.73) --
	( 99.86, 93.10) --
	(101.16, 93.39) --
	(102.45, 93.62) --
	(103.75, 93.78) --
	(105.05, 93.90) --
	(106.35, 93.98) --
	(107.65, 94.03) --
	(108.95, 94.05) --
	(110.25, 94.07) --
	(111.55, 94.07) --
	(112.85, 94.08) --
	(114.14, 94.08) --
	(115.44, 94.08);
\end{scope}
\end{tikzpicture}
\caption{AOT. Rule $R_1$ is uniformly best in $\theta$.
    \label{fig:Theta-AOT}}
\end{subfigure}%
\hspace{0.05\textwidth}%
\begin{subfigure}[t]{.45\textwidth}
  \centering
% Created by tikzDevice version 0.11 on 2018-04-05 12:47:38
% !TEX encoding = UTF-8 Unicode
\begin{tikzpicture}[x=1pt,y=1pt, scale=2]
\definecolor{fillColor}{RGB}{255,255,255}
%\path[use as bounding box,fill=fillColor,fill opacity=0.00] (0,0) rectangle (144.54,144.54);
\path[use as bounding box,fill=fillColor,fill opacity=0.00] (40,40) rectangle (110,100);
%       \draw[help lines, pink, %line width=0.1pt,
%       ] (40,50) grid[step={($(10, 10) - (0, 0)$)}] (110, 100);
       
  %AXES:
   \draw [->] ( 50, 50) -- ( 110, 50) node[right] {$\theta$};
   \draw [-, dashed] ( 40, 50) -- ( 50, 50);
   \draw [->] ( 40, 45) -- ( 40, 105) node[above, right] {WAOT, $w=0.75$, $n=3$};
  %TICK MARKS
   \draw [-] ( 50, 48.5) node [below] {$.5$} -- ( 50, 51.5); 
   \draw [-] ( 60, 48.5) node [below] {$.6$} -- ( 60, 51.5); 
   \draw [-] ( 70, 48.5) node [below] {$.7$} -- ( 70, 51.5); 
   \draw [-] ( 80, 48.5) node [below] {$.8$} -- ( 80, 51.5); 
   \draw [-] ( 90, 48.5) node [below] {$.9$} -- ( 90, 51.5); 
   \draw [-] ( 100, 48.5) node [below] {$1$} -- ( 100, 51.5);
   \draw [-] ( 38.5, 50) node [left] {$0$} -- ( 41.5, 50);
   \draw [-] ( 38.5, 60) node [left] {$.1$} -- ( 41.5, 60);
   \draw [-] ( 38.5, 70) node [left] {$.2$} -- ( 41.5, 70);
   \draw [-] ( 38.5, 80) node [left] {$.3$} -- ( 41.5, 80);
   \draw [-] ( 38.5, 90) node [left] {$.4$} -- ( 41.5, 90);
   \draw [-] ( 38.5, 100) node [left] {$.5$} -- ( 41.5, 100);   
% Gràfica va horitzontalment de 53.10 a 115.44. Escalant i desplaçant
% ho transformem a 51 a 99: 53.10*48/(115.44-53.10)=40.88547
% Gràfica va verticalment de 62.46 a 94.08. Escalant i desplaçant 
% ho transformem a 50 a 100 (en realitat hem de saber la imatge del 0.51 i del 0.99; TODO):
% 62.46*50/(94.08-62.46)=98.7666      
\begin{scope}[shift={(51-40.88547, 50-98.7666)}, 
              xscale=48/(115.44-53.10), 
              yscale=50/(94.08-62.46)
]
%\path[clip] ( 49.20, 61.20) rectangle (119.34, 95.34);

\definecolor{drawColor}{RGB}{255,0,0}

\path[draw=drawColor,line width= 0.4pt,line join=round,line cap=round] ( 53.10, 62.46) --
	( 54.40, 62.84) --
	( 55.69, 63.26) --
	( 56.99, 63.71) --
	( 58.29, 64.20) --
	( 59.59, 64.72) --
	( 60.89, 65.28) --
	( 62.19, 65.87) --
	( 63.49, 66.49) --
	( 64.79, 67.15) --
	( 66.09, 67.83) --
	( 67.38, 68.55) --
	( 68.68, 69.29) --
	( 69.98, 70.05) --
	( 71.28, 70.84) --
	( 72.58, 71.65) --
	( 73.88, 72.48) --
	( 75.18, 73.33) --
	( 76.48, 74.19) --
	( 77.78, 75.06) --
	( 79.07, 75.94) --
	( 80.37, 76.83) --
	( 81.67, 77.73) --
	( 82.97, 78.63) --
	( 84.27, 79.53) --
	( 85.57, 80.43) --
	( 86.87, 81.32) --
	( 88.17, 82.21) --
	( 89.47, 83.08) --
	( 90.76, 83.94) --
	( 92.06, 84.79) --
	( 93.36, 85.61) --
	( 94.66, 86.42) --
	( 95.96, 87.19) --
	( 97.26, 87.95) --
	( 98.56, 88.67) --
	( 99.86, 89.35) --
	(101.16, 90.01) --
	(102.45, 90.62) --
	(103.75, 91.19) --
	(105.05, 91.72) --
	(106.35, 92.20) --
	(107.65, 92.63) --
	(108.95, 93.01) --
	(110.25, 93.34) --
	(111.55, 93.61) --
	(112.85, 93.83) --
	(114.14, 93.98) --
	(115.44, 94.08);
\definecolor{drawColor}{RGB}{0,0,255}

\path[draw=drawColor,line width= 0.4pt,line join=round,line cap=round] ( 53.10, 66.38) --
	( 54.40, 66.65) --
	( 55.69, 66.96) --
	( 56.99, 67.29) --
	( 58.29, 67.64) --
	( 59.59, 68.02) --
	( 60.89, 68.43) --
	( 62.19, 68.86) --
	( 63.49, 69.32) --
	( 64.79, 69.80) --
	( 66.09, 70.31) --
	( 67.38, 70.84) --
	( 68.68, 71.39) --
	( 69.98, 71.97) --
	( 71.28, 72.56) --
	( 72.58, 73.18) --
	( 73.88, 73.82) --
	( 75.18, 74.48) --
	( 76.48, 75.16) --
	( 77.78, 75.85) --
	( 79.07, 76.56) --
	( 80.37, 77.28) --
	( 81.67, 78.02) --
	( 82.97, 78.77) --
	( 84.27, 79.53) --
	( 85.57, 80.30) --
	( 86.87, 81.08) --
	( 88.17, 81.86) --
	( 89.47, 82.64) --
	( 90.76, 83.42) --
	( 92.06, 84.20) --
	( 93.36, 84.97) --
	( 94.66, 85.74) --
	( 95.96, 86.50) --
	( 97.26, 87.24) --
	( 98.56, 87.96) --
	( 99.86, 88.67) --
	(101.16, 89.35) --
	(102.45, 90.01) --
	(103.75, 90.63) --
	(105.05, 91.22) --
	(106.35, 91.76) --
	(107.65, 92.27) --
	(108.95, 92.72) --
	(110.25, 93.12) --
	(111.55, 93.46) --
	(112.85, 93.73) --
	(114.14, 93.94) --
	(115.44, 94.06);
\definecolor{drawColor}{RGB}{0,255,0}

\path[draw=drawColor,line width= 0.4pt,line join=round,line cap=round] ( 53.10, 66.38) --
	( 54.40, 66.65) --
	( 55.69, 66.96) --
	( 56.99, 67.29) --
	( 58.29, 67.64) --
	( 59.59, 68.02) --
	( 60.89, 68.43) --
	( 62.19, 68.86) --
	( 63.49, 69.32) --
	( 64.79, 69.80) --
	( 66.09, 70.31) --
	( 67.38, 70.84) --
	( 68.68, 71.39) --
	( 69.98, 71.97) --
	( 71.28, 72.56) --
	( 72.58, 73.18) --
	( 73.88, 73.82) --
	( 75.18, 74.48) --
	( 76.48, 75.16) --
	( 77.78, 75.85) --
	( 79.07, 76.56) --
	( 80.37, 77.28) --
	( 81.67, 78.02) --
	( 82.97, 78.77) --
	( 84.27, 79.53) --
	( 85.57, 80.30) --
	( 86.87, 81.08) --
	( 88.17, 81.86) --
	( 89.47, 82.64) --
	( 90.76, 83.42) --
	( 92.06, 84.20) --
	( 93.36, 84.97) --
	( 94.66, 85.74) --
	( 95.96, 86.50) --
	( 97.26, 87.24) --
	( 98.56, 87.96) --
	( 99.86, 88.67) --
	(101.16, 89.35) --
	(102.45, 90.01) --
	(103.75, 90.63) --
	(105.05, 91.22) --
	(106.35, 91.76) --
	(107.65, 92.27) --
	(108.95, 92.72) --
	(110.25, 93.12) --
	(111.55, 93.46) --
	(112.85, 93.73) --
	(114.14, 93.94) --
	(115.44, 94.06);
\end{scope}
\end{tikzpicture}

\vspace{0.8cm}
% Created by tikzDevice version 0.11 on 2018-04-05 12:47:38
% !TEX encoding = UTF-8 Unicode
\begin{tikzpicture}[x=1pt,y=1pt, scale=2]
\definecolor{fillColor}{RGB}{255,255,255}
%\path[use as bounding box,fill=fillColor,fill opacity=0.00] (0,0) rectangle (144.54,144.54);
\path[use as bounding box,fill=fillColor,fill opacity=0.00] (40,40) rectangle (110,100);
%       \draw[help lines, pink, %line width=0.1pt,
%       ] (40,50) grid[step={($(10, 10) - (0, 0)$)}] (110, 100);
       
  %AXES:
   \draw [->] ( 50, 50) -- ( 110, 50) node[right] {$\theta$};
   \draw [-, dashed] ( 40, 50) -- ( 50, 50);
   \draw [->] ( 40, 45) -- ( 40, 105) node[above, right] {WAOT, $w=0.75$, $n=7$};
  %TICK MARKS
   \draw [-] ( 50, 48.5) node [below] {$.5$} -- ( 50, 51.5); 
   \draw [-] ( 60, 48.5) node [below] {$.6$} -- ( 60, 51.5); 
   \draw [-] ( 70, 48.5) node [below] {$.7$} -- ( 70, 51.5); 
   \draw [-] ( 80, 48.5) node [below] {$.8$} -- ( 80, 51.5); 
   \draw [-] ( 90, 48.5) node [below] {$.9$} -- ( 90, 51.5); 
   \draw [-] ( 100, 48.5) node [below] {$1$} -- ( 100, 51.5);
   \draw [-] ( 38.5, 50) node [left] {$0$} -- ( 41.5, 50);
   \draw [-] ( 38.5, 60) node [left] {$.1$} -- ( 41.5, 60);
   \draw [-] ( 38.5, 70) node [left] {$.2$} -- ( 41.5, 70);
   \draw [-] ( 38.5, 80) node [left] {$.3$} -- ( 41.5, 80);
   \draw [-] ( 38.5, 90) node [left] {$.4$} -- ( 41.5, 90);
   \draw [-] ( 38.5, 100) node [left] {$.5$} -- ( 41.5, 100);   
% Gràfica va horitzontalment de 53.10 a 115.44. Escalant i desplaçant
% ho transformem a 51 a 99: 53.10*48/(115.44-53.10)=40.88547
% Gràfica va verticalment de 62.46 a 94.08. Escalant i desplaçant 
% ho transformem a 50 a 100 (en realitat hem de saber la imatge del 0.51 i del 0.99; TODO):
% 62.46*50/(94.08-62.46)=98.7666      
\begin{scope}[shift={(51-40.88547, 50-98.7666)}, 
              xscale=48/(115.44-53.10), 
              yscale=50/(94.08-62.46)
]
%\path[clip] ( 49.20, 61.20) rectangle (119.34, 95.34);

\definecolor{drawColor}{RGB}{255,0,0}

\path[draw=drawColor,line width= 0.4pt,line join=round,line cap=round] ( 53.10, 62.46) --
	( 54.40, 63.05) --
	( 55.69, 63.72) --
	( 56.99, 64.46) --
	( 58.29, 65.28) --
	( 59.59, 66.16) --
	( 60.89, 67.10) --
	( 62.19, 68.10) --
	( 63.49, 69.15) --
	( 64.79, 70.24) --
	( 66.09, 71.37) --
	( 67.38, 72.53) --
	( 68.68, 73.71) --
	( 69.98, 74.90) --
	( 71.28, 76.10) --
	( 72.58, 77.30) --
	( 73.88, 78.49) --
	( 75.18, 79.66) --
	( 76.48, 80.82) --
	( 77.78, 81.94) --
	( 79.07, 83.03) --
	( 80.37, 84.08) --
	( 81.67, 85.09) --
	( 82.97, 86.04) --
	( 84.27, 86.95) --
	( 85.57, 87.80) --
	( 86.87, 88.59) --
	( 88.17, 89.32) --
	( 89.47, 89.99) --
	( 90.76, 90.60) --
	( 92.06, 91.15) --
	( 93.36, 91.65) --
	( 94.66, 92.08) --
	( 95.96, 92.46) --
	( 97.26, 92.79) --
	( 98.56, 93.07) --
	( 99.86, 93.31) --
	(101.16, 93.50) --
	(102.45, 93.66) --
	(103.75, 93.78) --
	(105.05, 93.88) --
	(106.35, 93.95) --
	(107.65, 94.00) --
	(108.95, 94.03) --
	(110.25, 94.05) --
	(111.55, 94.07) --
	(112.85, 94.07) --
	(114.14, 94.08) --
	(115.44, 94.08);
\definecolor{drawColor}{RGB}{0,0,255}

\path[draw=drawColor,line width= 0.4pt,line join=round,line cap=round] ( 53.10, 69.34) --
	( 54.40, 69.63) --
	( 55.69, 69.96) --
	( 56.99, 70.33) --
	( 58.29, 70.74) --
	( 59.59, 71.20) --
	( 60.89, 71.69) --
	( 62.19, 72.23) --
	( 63.49, 72.80) --
	( 64.79, 73.42) --
	( 66.09, 74.07) --
	( 67.38, 74.75) --
	( 68.68, 75.47) --
	( 69.98, 76.22) --
	( 71.28, 76.99) --
	( 72.58, 77.79) --
	( 73.88, 78.61) --
	( 75.18, 79.45) --
	( 76.48, 80.30) --
	( 77.78, 81.16) --
	( 79.07, 82.02) --
	( 80.37, 82.88) --
	( 81.67, 83.74) --
	( 82.97, 84.58) --
	( 84.27, 85.42) --
	( 85.57, 86.23) --
	( 86.87, 87.02) --
	( 88.17, 87.78) --
	( 89.47, 88.50) --
	( 90.76, 89.19) --
	( 92.06, 89.84) --
	( 93.36, 90.45) --
	( 94.66, 91.00) --
	( 95.96, 91.51) --
	( 97.26, 91.97) --
	( 98.56, 92.38) --
	( 99.86, 92.74) --
	(101.16, 93.04) --
	(102.45, 93.30) --
	(103.75, 93.52) --
	(105.05, 93.69) --
	(106.35, 93.82) --
	(107.65, 93.92) --
	(108.95, 93.98) --
	(110.25, 94.03) --
	(111.55, 94.06) --
	(112.85, 94.07) --
	(114.14, 94.07) --
	(115.44, 94.08);
\definecolor{drawColor}{RGB}{0,255,0}

\path[draw=drawColor,line width= 0.4pt,line join=round,line cap=round] ( 53.10, 69.85) --
	( 54.40, 70.10) --
	( 55.69, 70.38) --
	( 56.99, 70.69) --
	( 58.29, 71.04) --
	( 59.59, 71.43) --
	( 60.89, 71.85) --
	( 62.19, 72.31) --
	( 63.49, 72.81) --
	( 64.79, 73.34) --
	( 66.09, 73.90) --
	( 67.38, 74.51) --
	( 68.68, 75.14) --
	( 69.98, 75.81) --
	( 71.28, 76.51) --
	( 72.58, 77.23) --
	( 73.88, 77.99) --
	( 75.18, 78.76) --
	( 76.48, 79.56) --
	( 77.78, 80.37) --
	( 79.07, 81.19) --
	( 80.37, 82.03) --
	( 81.67, 82.87) --
	( 82.97, 83.70) --
	( 84.27, 84.54) --
	( 85.57, 85.36) --
	( 86.87, 86.17) --
	( 88.17, 86.96) --
	( 89.47, 87.72) --
	( 90.76, 88.46) --
	( 92.06, 89.16) --
	( 93.36, 89.82) --
	( 94.66, 90.44) --
	( 95.96, 91.01) --
	( 97.26, 91.53) --
	( 98.56, 92.00) --
	( 99.86, 92.42) --
	(101.16, 92.79) --
	(102.45, 93.10) --
	(103.75, 93.36) --
	(105.05, 93.57) --
	(106.35, 93.74) --
	(107.65, 93.86) --
	(108.95, 93.95) --
	(110.25, 94.01) --
	(111.55, 94.05) --
	(112.85, 94.07) --
	(114.14, 94.07) --
	(115.44, 94.08);
\end{scope}
\end{tikzpicture}

\vspace{0.8cm}
% Created by tikzDevice version 0.11 on 2018-04-05 12:47:38
% !TEX encoding = UTF-8 Unicode
\begin{tikzpicture}[x=1pt,y=1pt, scale=2]
\definecolor{fillColor}{RGB}{255,255,255}
%\path[use as bounding box,fill=fillColor,fill opacity=0.00] (0,0) rectangle (144.54,144.54);
\path[use as bounding box,fill=fillColor,fill opacity=0.00] (40,40) rectangle (110,100);
%       \draw[help lines, pink, %line width=0.1pt,
%       ] (40,50) grid[step={($(10, 10) - (0, 0)$)}] (110, 100);
       
  %AXES:
   \draw [->] ( 50, 50) -- ( 110, 50) node[right] {$\theta$};
   \draw [-, dashed] ( 40, 50) -- ( 50, 50);
   \draw [->] ( 40, 45) -- ( 40, 105) node[above, right] {WAOT, $w=0.75$, $n=11$};
  %TICK MARKS
   \draw [-] ( 50, 48.5) node [below] {$.5$} -- ( 50, 51.5); 
   \draw [-] ( 60, 48.5) node [below] {$.6$} -- ( 60, 51.5); 
   \draw [-] ( 70, 48.5) node [below] {$.7$} -- ( 70, 51.5); 
   \draw [-] ( 80, 48.5) node [below] {$.8$} -- ( 80, 51.5); 
   \draw [-] ( 90, 48.5) node [below] {$.9$} -- ( 90, 51.5); 
   \draw [-] ( 100, 48.5) node [below] {$1$} -- ( 100, 51.5);
   \draw [-] ( 38.5, 50) node [left] {$0$} -- ( 41.5, 50);
   \draw [-] ( 38.5, 60) node [left] {$.1$} -- ( 41.5, 60);
   \draw [-] ( 38.5, 70) node [left] {$.2$} -- ( 41.5, 70);
   \draw [-] ( 38.5, 80) node [left] {$.3$} -- ( 41.5, 80);
   \draw [-] ( 38.5, 90) node [left] {$.4$} -- ( 41.5, 90);
   \draw [-] ( 38.5, 100) node [left] {$.5$} -- ( 41.5, 100);   
% Gràfica va horitzontalment de 53.10 a 115.44. Escalant i desplaçant
% ho transformem a 51 a 99: 53.10*48/(115.44-53.10)=40.88547
% Gràfica va verticalment de 62.46 a 94.08. Escalant i desplaçant 
% ho transformem a 50 a 100 (en realitat hem de saber la imatge del 0.51 i del 0.99; TODO):
% 62.46*50/(94.08-62.46)=98.7666      
\begin{scope}[shift={(51-40.88547, 50-98.7666)}, 
              xscale=48/(115.44-53.10), 
              yscale=50/(94.08-62.46)
]
%\path[clip] ( 49.20, 61.20) rectangle (119.34, 95.34);

\definecolor{drawColor}{RGB}{255,0,0}

\path[draw=drawColor,line width= 0.4pt,line join=round,line cap=round] ( 53.10, 62.46) --
	( 54.40, 63.23) --
	( 55.69, 64.12) --
	( 56.99, 65.11) --
	( 58.29, 66.21) --
	( 59.59, 67.39) --
	( 60.89, 68.66) --
	( 62.19, 69.99) --
	( 63.49, 71.38) --
	( 64.79, 72.80) --
	( 66.09, 74.25) --
	( 67.38, 75.70) --
	( 68.68, 77.15) --
	( 69.98, 78.59) --
	( 71.28, 79.99) --
	( 72.58, 81.35) --
	( 73.88, 82.66) --
	( 75.18, 83.91) --
	( 76.48, 85.09) --
	( 77.78, 86.19) --
	( 79.07, 87.22) --
	( 80.37, 88.16) --
	( 81.67, 89.02) --
	( 82.97, 89.79) --
	( 84.27, 90.49) --
	( 85.57, 91.10) --
	( 86.87, 91.64) --
	( 88.17, 92.10) --
	( 89.47, 92.50) --
	( 90.76, 92.83) --
	( 92.06, 93.11) --
	( 93.36, 93.34) --
	( 94.66, 93.53) --
	( 95.96, 93.68) --
	( 97.26, 93.79) --
	( 98.56, 93.88) --
	( 99.86, 93.94) --
	(101.16, 93.99) --
	(102.45, 94.02) --
	(103.75, 94.04) --
	(105.05, 94.06) --
	(106.35, 94.07) --
	(107.65, 94.07) --
	(108.95, 94.07) --
	(110.25, 94.07) --
	(111.55, 94.08) --
	(112.85, 94.08) --
	(114.14, 94.08) --
	(115.44, 94.08);
\definecolor{drawColor}{RGB}{0,0,255}

\path[draw=drawColor,line width= 0.4pt,line join=round,line cap=round] ( 53.10, 70.12) --
	( 54.40, 70.44) --
	( 55.69, 70.81) --
	( 56.99, 71.24) --
	( 58.29, 71.72) --
	( 59.59, 72.27) --
	( 60.89, 72.87) --
	( 62.19, 73.53) --
	( 63.49, 74.24) --
	( 64.79, 75.00) --
	( 66.09, 75.80) --
	( 67.38, 76.65) --
	( 68.68, 77.54) --
	( 69.98, 78.46) --
	( 71.28, 79.40) --
	( 72.58, 80.36) --
	( 73.88, 81.34) --
	( 75.18, 82.31) --
	( 76.48, 83.29) --
	( 77.78, 84.25) --
	( 79.07, 85.19) --
	( 80.37, 86.10) --
	( 81.67, 86.99) --
	( 82.97, 87.83) --
	( 84.27, 88.62) --
	( 85.57, 89.37) --
	( 86.87, 90.05) --
	( 88.17, 90.69) --
	( 89.47, 91.25) --
	( 90.76, 91.76) --
	( 92.06, 92.21) --
	( 93.36, 92.60) --
	( 94.66, 92.93) --
	( 95.96, 93.20) --
	( 97.26, 93.43) --
	( 98.56, 93.61) --
	( 99.86, 93.75) --
	(101.16, 93.86) --
	(102.45, 93.94) --
	(103.75, 93.99) --
	(105.05, 94.03) --
	(106.35, 94.05) --
	(107.65, 94.06) --
	(108.95, 94.07) --
	(110.25, 94.07) --
	(111.55, 94.08) --
	(112.85, 94.08) --
	(114.14, 94.08) --
	(115.44, 94.08);
\definecolor{drawColor}{RGB}{0,255,0}

\path[draw=drawColor,line width= 0.4pt,line join=round,line cap=round] ( 53.10, 71.25) --
	( 54.40, 71.43) --
	( 55.69, 71.65) --
	( 56.99, 71.91) --
	( 58.29, 72.20) --
	( 59.59, 72.54) --
	( 60.89, 72.92) --
	( 62.19, 73.34) --
	( 63.49, 73.80) --
	( 64.79, 74.32) --
	( 66.09, 74.88) --
	( 67.38, 75.48) --
	( 68.68, 76.14) --
	( 69.98, 76.83) --
	( 71.28, 77.57) --
	( 72.58, 78.35) --
	( 73.88, 79.17) --
	( 75.18, 80.02) --
	( 76.48, 80.89) --
	( 77.78, 81.78) --
	( 79.07, 82.69) --
	( 80.37, 83.61) --
	( 81.67, 84.53) --
	( 82.97, 85.43) --
	( 84.27, 86.33) --
	( 85.57, 87.19) --
	( 86.87, 88.03) --
	( 88.17, 88.82) --
	( 89.47, 89.57) --
	( 90.76, 90.26) --
	( 92.06, 90.90) --
	( 93.36, 91.47) --
	( 94.66, 91.98) --
	( 95.96, 92.43) --
	( 97.26, 92.81) --
	( 98.56, 93.13) --
	( 99.86, 93.39) --
	(101.16, 93.60) --
	(102.45, 93.75) --
	(103.75, 93.87) --
	(105.05, 93.95) --
	(106.35, 94.01) --
	(107.65, 94.04) --
	(108.95, 94.06) --
	(110.25, 94.07) --
	(111.55, 94.07) --
	(112.85, 94.08) --
	(114.14, 94.08) --
	(115.44, 94.08);
\end{scope}
\end{tikzpicture}
\caption{WAOT. For any $w>1/2$, there is no a
         uniformly best or worst rule.
    \label{fig:Theta-WAOT}}
\end{subfigure}
\caption{AOT and WAOT for each of the rules $R_1$ (red), $R_2$ (blue), $R_3$ (green) as 
  a function of $\theta$, for several committee sizes $n$}
%\label{global_label}
\end{figure}

%% Inclusion of first figure (aot_theta):
%\begin{figure}[ht]
%\centering
%\input{aot_theta1.tex}
%%\input{aot_theta1b.tex}
%\vspace{1cm}
%\input{aot_theta2.tex}
%%\input{aot_theta2b.tex}
%\hspace{2cm}
%\input{aot_theta3.tex}
%%\input{aot_theta3b.tex}
%\caption{AOT for each of the rules as a function of $\theta$, for several committee sizes $n$.
%    \label{fig:Theta-AOT}}
%\end{figure}
%
%
%
%% Inclusion of second figure (waot_theta):
%\begin{figure}[h]
%
%\centering
%\input{waot_theta1.tex}
%%\input{waot_theta1b.tex}
%
%\vspace{1cm}
%\input{waot_theta2.tex}
%%\input{waot_theta2b.tex}
%\hspace{2cm}
%\input{waot_theta3.tex}
%%\input{waot_theta3b.tex}
%\caption{WAOT for each of the rules as a function of $\theta$, 
%         for several committee sizes $n=3,7,11$ and $w=0.75$. For any $w>1/2$, there is no a
%         uniformly best of worse rule.
%        \CommentA{((falta ajustar bé l'escala vertical com es va fer en els de AOT))}
%        \CommentA{((aquestes dues figures podrien ser una sola, a i b))}
%        \label{fig:Theta-WAOT}}
%  
%\end{figure}
 
\FloatBarrier

Combining the committee 
sizes  
$n=3,7,11$ and the competence values $\theta=0.60,0.75,0.90$, we draw the
triangles in ROC space of the three decision rules,  
in Figure \ref{rocs}.
In this picture, it can be observed that the area of the triangle  determined by rule $R_1$ (in red) 
is larger than the  area determined by  $R_2$ (in blue), which in its turn is larger than the area 
determined by rule $R_3$ (in green) for $n>5$, 
 and that the triangles of $R_2$ and $R_3$ coincide for $n=3$ and $n=5$.
%\newpage

%--------((Figure rocs))

\begin{figure}[!ht]
  \begin{center}
\begin{knitrout}
\definecolor{shadecolor}{rgb}{0.969, 0.969, 0.969}\color{fgcolor}
\includegraphics[width=\maxwidth]{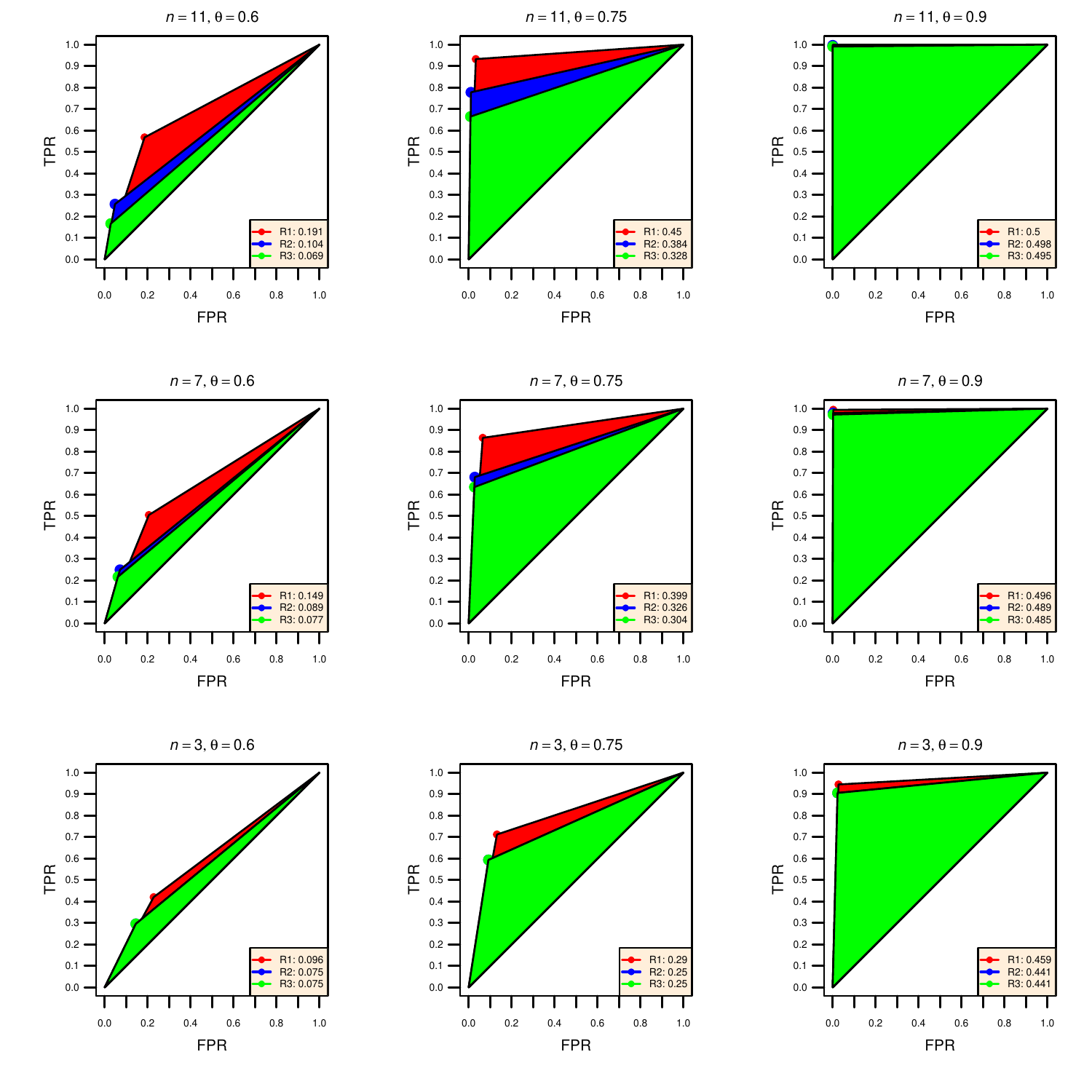} 

\end{knitrout}

\caption{ROC representation of $R_1$, $R_2$ and $R_3$, for several committee sizes $n$ and 
  different competences $\theta$. 
  Values in the legend are the AOT.
  \label{rocs}} 
\end{center}
\end{figure}
%\FloatBarrier
%((Figure rocs))

%\CommentA{((no sé de què va aquest paràgraf))} Consider only IbyI rule $R_1$, which we know that is the AOT-best. 
%With the same approach as before, that is, giving values to the parameters, we analyse 
%how the area of the triangle AOT and other indicators (the true false and true positive rates, TPR and TNR) change, depending on these parameters.

The ROC analysis helps in comparing several decision rules, but it is also useful to visualize the performance 
of a given rule depending on the parameters. 
%Now, we consider  rule $R_1$, the best according to AOT criterion,  and allow the parameters $n$ and $\theta$ vary. 
For example,
taking three values for the competence, $\theta=0.60,\, 0.75, \, 0.90$, 
and three values for the committee size 
$n=3,\, 7,\, 11$, 
the ROC representation in Figure \ref{rocn} displays a curve going form near the diagonal  
when $\theta=0.60$ to near the corner (0,1), when $\theta=0.90$. 
%It shows the way how the classifier is  AOT-better as $n$ 
%and $\theta$  increase, separately or simultaneously. 
From the figure it is apparent that the quality of the voters in terms of the competence 
is definitely much more important than its quantity. (Karotkin and Paroush \cite{Karotkin2003} arrive
to the same conclusion in the single-premise case.)

 {\tiny

}

\begin{figure}[!t]
  \begin{center}
      {\tiny
\begin{knitrout}
\definecolor{shadecolor}{rgb}{0.969, 0.969, 0.969}\color{fgcolor}
\includegraphics[width=\maxwidth]{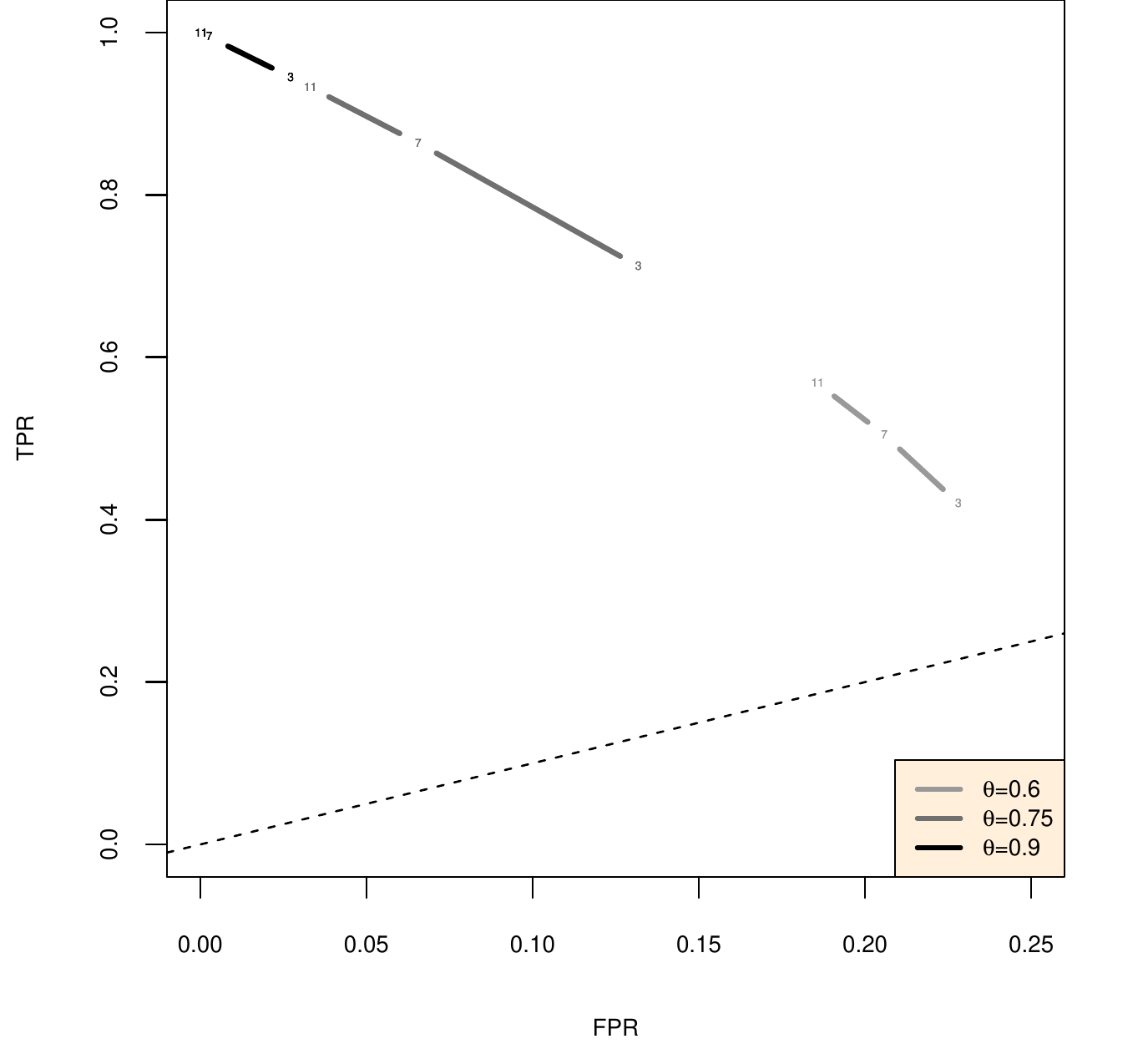} 

\end{knitrout}
    }
    \caption{ROC representation of $R_1$ for different numbers of voters $n=3, 7, 11$ 
      (the number is used as the location in ROC space) and different competences 
      $\theta=0.60, 0.75, 0.90$  
      in different shades of grey. 
      Notice than the range in the horizontal axis has been rescaled.  
%      Observe that the competence plays a crucial role: 
%      given a  high competence value ($\theta=0.90$), the point (FPR,TPR) approaches the left 
%      upper corner for small number of voters, for a medium competence value ($\theta=0.75$), 
%      the point is fairly near to that corner even for seven voters, 
%      but for $\theta=0.6$ the point for eleven voters is far from it.
     \label{rocn}}
  \end{center} 
\end{figure}

It is also natural to ask, for a given rule and competence value $\theta$, 
what is the minimum number of voters $n$ that ensures that the TPR, TNR and AOT ($w=0.5$) reach a certain 
given threshold $k$? This minimum is easy to find out. For example,
Table \ref{taulan} give the numbers,
for decision rule $R_1$ (IbyI), a range of $\theta$ values 
and quite demanding thresholds.  
Notice that 
to ensure a certain threshold of TPR 
the size of the committee must be greater than
to ensure the same threshold for TNR. 
This is because for a fixed size $n$, rule $R_1$ the probability to produce a true negative is greater than
the probability to produce a true positive
%((why??)) \CommentB{((La raó és (crec) que, amb R1, el TNR és la probabilitat d'una unió x+y<=z+t ó x+z<=y+t, mentre que TPR és la probabilitat d'una intersecció x+y>z+t i x+z>y+t, i costa que sigui gran...))}.

{\tiny

% latex table generated in R 3.5.1 by xtable 1.8-3 package
% Fri Nov 09 18:28:33 2018
\begin{table}[ht]
\centering
\begingroup\footnotesize
\begin{tabular}{|l|cccccccc|}
  \hline
  & \multicolumn{8}{c|}{$\theta$} \\
   & 0.60 & 0.65 & 0.70 & 0.75 & 0.80 & 0.85& 0.90 & 0.95 \\
 \hline
For TPR, $k$=0.95 & 95 & 41 & 23 & 13 & 9 & 7 & 5 & 3 \\ 
  For TNR, $k$=0.95 & 65 & 29 & 15 & 9 & 7 & 5 & 3 & 3 \\ 
  For AOT, $k$=0.45 & 81 & 35 & 19 & 13 & 7 & 5 & 3 & 3 \\ 
   \hline
\end{tabular}
\endgroup
\caption{For each indicator, TPR, TNR and AOT, 
minimum committee size needed to reach the threshold $k$, for rule $R_1$ 
and different competence values $\theta$.} 
\label{taulan}
\end{table}

}
%\FloatBarrier

\section{Discussion}\label{disc}

  In this paper, we have defined
  a theoretical framework, based on a probabilistic model, 
  that describes the behaviour
  of a committee faced to decide the truth or falsity of a compound question.
  The application of the Receiver Operating Characteristics (ROC), a concept originating in 
  signal processing, and adopted in several other fields, seems to be new in 
  judgement aggregation research. It allows an objective assessment of the quality
  of a group decision on a complex issue, based on the (possibly subjective) competence
  of the members of the group. It also allows to compare different decision rules, 
  both for symmetric or asymmetric penalising weights on the false positives and false negatives.     

  The simplicity of the model has allowed us to focus on the  
  methodology of ROC space, but the assumptions can be easily 
  weakened in several ways and the computations can be adapted without much difficulty.
  For example: 
  \newpage %PAGINATION 
  \begin{itemize}
    \item 
    \emph{Different voters' competence $\theta$.}
    If competences are different, the law of $(X,Y,Z,T)$ described in Proposition \ref{pr3} is no
    more multinomial. The vote of each committee member $k=1,\dots,n$ is a random
    object $J_k$ equal to one of the four possible vote schemes with 
    certain probabilities $(p_x^k,p_y^k,p_z^k,p_t^k)$, independently. 
    If $\theta_k$ is the competence of member $k$, then, under the true state $P\wedge Q$, 
    this vector of probabilities is 
\begin{equation*}
    (\theta_k^2, \theta_k(1-\theta_k), \theta_k(1-\theta_k), 1-\theta_k^2)\ .
\end{equation*}    
    The law of $(X,Y,Z,T)$ will be the law of $J_1+\cdots+J_n$, which can be computed for
    any given set of parameters $\theta_k$. The same can be done for the states of nature
    $P\wedge\neg Q$, $\neg P\wedge Q$, and $\neg(P \wedge Q)$. 
    
    The true and false positive rates under 
    the vector of competences $\theta=(\theta_1,\dots,\theta_n)$
    and rule $R$ will be given, following Definitions \ref{defi2} and \ref{def:FPR}, by 
    \begin{equation*}
    \text{TPR}(n,\theta,R)
    %:= {\mathbb{P}}_{P\wedge Q} \big\{R(X,Y,Z,T)=1\big\}
    =
    \sum_{R(i,j,k,\ell)=1} {\mathbb{P}}_{P\wedge Q}\big\{J_1+\cdots+J_n=(i,j,k,\ell) \} \ ,
    \end{equation*}     
%    \CommentA{Now, if $R$ is an admissible rule, it can be proof by a coupling argument that
%    \begin{equation*}
%    \text{TPR}(n,\theta_m,R) \le \text{TPR}(n,\theta,R) \le \text{TPR}(n,\theta_M,R)\ ,
%    \end{equation*} 
%    where $\theta_m=\min\{\theta_1,\dots,\theta_n\}$ and $\theta_M=\max\{\theta_1,\dots,\theta_n\}$.}  
    and 
    \begin{equation*}
    \text{FPR}(n,\theta,R)
    %:= {\mathbb{P}}_{P\wedge \neg Q}\big\{R(X,Y,Z,T)=1\big\}
    =
    \sum_{R(i,j,k,\ell)=1} {\mathbb{P}}_{P\wedge \neg Q}\big\{J_1+\cdots+J_n=(i,j,k,\ell) \} \ .
    \end{equation*} 
%    \CommentA{and
%    \begin{equation*}
%    \text{FPR}(n,\theta_m,R) \ge \text{FPR}(n,\theta,R) \ge \text{FPR}(n,\theta_M,R)\ .
%    \end{equation*} 
%    As a consequence, 
%    \begin{equation*}
%    \text{AOT}(n,\theta_m,R) \le \text{AOT}(n,\theta,R) \le \text{AOT}(n,\theta_M,R)\ .
%    \end{equation*}}
%    \CommentA{To prove the above, at some point $\theta_k>\frac{1}{2}$ is necessary, I think))}
    
A study of dichotomous decision making under different individual competences, within a
probabilistic framework, can be found in Sapir \cite{Sapir1998}.

  \item 
  \emph{Non-independence between voters.} 
    If the committee members do not vote independently, 
    then, in order to make exact computations, one must have the joint probability law of 
    the vector $(J_1,\dots,J_n)$, which take values in the 
    $n$-fold Cartesian product of $\{P\wedge Q, P\wedge \neg Q, \neg P\wedge Q, \neg(P \wedge Q)\}$.
    From the joint law, the distribution of the sum $J_1+\cdots+J_n$ can always be made explicit,
    for each state of nature and taking into account the given vector of competences $(\theta_1,\dots,\theta_n)$.
    And from there, the values of FPR, FNR and AOT can be also obtained for any rule. 
  
  Boland \cite{10.2307/2348873} studies this situation of non-independence and diverse competence values
  for the voting of a single question, and assuming the existence of a ``leader'' in the committee. He 
  generalises Condorcet theorem when the correlation coefficient between voters does not exceeds a certain
  threshold. That situation is completely different from ours.
  
  \item 
  \emph{Non-independence between the premises.}
  The premises may depend on each other in the
  sense that believing that $P$ is true or false changes the perception
  on the veracity or falsity of $Q$. An extreme example of dependence is the classical
  $P=$``existence of a contract'' and $Q=$``defendant breached the contract'', where
  voting $\neg P$ forces to vote $\neg Q$. 
  
  In these situations, some more data is needed, namely, the competence of each 
  voter on one of the premises alone, and on the other premise conditioned to 
  have guessed correctly the first, and conditioned to have guessed it incorrectly. To wit,
  suppose that $\theta_P$ is the competence of a voter on premise $P$, that
  $\theta_{Q|P}$ is her competence on $Q$ if she guesses correctly on $P$, and that
  $\theta_{Q|\bar P}$ is her competence on $Q$ assuming she does not guess
  correctly on $P$. Then, the first row in the table of Proposition \ref{pr3} would read
  
{\centering
  \setlength\extrarowheight{2pt}
  \begin{tabular}{r|c|c|c|c}
             & $p_x$ & $p_y$ & $p_z$ & $p_t$ \\
  \hline  
  $P\wedge Q$ & $\theta_P \theta_{Q|P}$  &  $\theta_P (1-\theta_{Q|P})$  &  $(1-\theta_P) \theta_{Q|\bar P}$  &  $(1-\theta_P) (1-\theta_{Q|\bar P})$ \\
  \hline   
%\begin{equation*}
%$\theta_P \theta_{Q|P}$  &  $\theta_P (1-\theta_{Q|P}$  &  $(1-\theta_P) \theta_{Q|\bar P}$  &  $(1-\theta_P) (1-\theta_{Q|\bar P})$
%\end{equation*}   
\end{tabular}     
\par }  
 and similarly for the other true states of nature.
  \item 
  \emph{Competence depending on the true state.} 
  The probability to guess
  the truth may depend on the truth itself,
  i.e $\theta=(\theta_P, \theta_{\neg P}, \theta_{Q}, \theta_{\neg Q})$ could be four
  different parameters associated to the committee members, giving the probabilities of
  guessing the true state of nature when $P$ is true, $P$ is false, $Q$ is true and $Q$ is 
  false, respectively. And these probabilities can be different for each individual,
  of course.
  
\end{itemize}   

All these extensions can be combined together. The key to compute FPR, FNR,
and consequently the values of AOT and $\text{WAOT}_w$ of any given rule is the ability
to compute the probability of appearance of all possible contingency tables (Table \ref{tab3});
and this is possible for all the extensions of the list above. It is not easy to establish
general theorems of comparison between rules, but a computer software can 
evaluate and compare rules for any specific value of all the parameters involved.
Notice that it is not even necessary to fix $n$ and $\theta$. Two 
differently formed committees, adhering to 
the same or to different decision rules, can be compared applying the same ideas, 
using the symmetric
or the weighted area of the triangle.
In this paper, we have stuck to the simplest of the situations in our exposition, to better highlight
the methodology, and to obtain some specific theoretical results.

In ROC analysis one usually relies in sample training data for the classifier. 
Here we have postulated the existence of an exact competence parameter $\theta$.
This parameter can be assigned on subjective grounds, but of course past data, if available,
can be used to estimate its value. Furthermore, if after a new experiment it is possible
to assess the quality of a voter's decision, this value could be readjusted in a
Bayesian manner. As examples of possible practical relevance, we mention: 
In court cases, the level of a court, and the proportion of cases
that have been successfully appealed to higher instances, can measure the competence 
of individual judges;
in some professional sport competitions, referees are ranked according to their performance in past events,
and it is usually easy to determine a posteriori the proportion of their correct 
decisions in a given event; in simultaneous medical tests, the  ``competence'' or reliability 
of each one is usually known to some extent, and they can be combined to offer the best
diagnostic decision, taking possibly into account the risks of false positives or 
false negatives through the weight parameter $w$.

%  \CommentA{((Find a sociological application, where it makes sense to assume a reasonable estimate
%  of the parameters of the model; otherwise this point is quite void. A possible example that
%  might include all ingredients: The medical tests of osteometry to determine the age 
%  of young illegal immigrants in Spain; several tests of different ``competence'' or reliability $\theta_k$. Establishing
%  by law a weight $w$ to prevent false positives (age$>18$), will allow to select a 
%  proper rule for the judgement aggregation of the tests. To think))}. 

%  We refer to a `paradox' when two different decisions arise depending on 
%  the way the votes are aggregated.
It may be argued that the explicit appearance of the doctrinal paradox 
is rare in practice, but this depends
on the competence parameter $\theta$, the committee size $n$, and the 
two specific rules that we are comparing.
In fact, the probability that the individual
decisions lead to the paradoxical situation between two given rules can be 
computed explicitly. For instance,   
List \cite[Propostion 2]{List2005},
computes the probability of the occurrence of the paradox from our formula (\ref{mult}),
and the two classical rules issue-by-issue (IbyI, $R_1$) and case-by-case (CbyC, $R_2$).

However, the potential appearance of the paradox
cannot be avoided, 
except for the trivial one-member committee.
The interest must then be focused in the choice of the ``best decision rule'' 
among a catalogue of rules. 
The question then becomes to define a criterion to evaluate decision rules. 
In this work, we have considered a family of criteria, which are completely objective, 
once fixed the subjective weight of the two competing risks,
the false positive and 
the false negative, in deciding if the conjunction $P\wedge Q$ is true.
This is contrast with the classical theory of Hypothesis Testing, but in line with Statistical 
Decision Theory.

The subjective evaluation of the risks is clearly an a priori matter,
to be fixed before the casting of the votes, and therefore any instance of
the situation we are analysing has a definite logical way to arrive to the 
correct decision.   
In this sense, we dare claim to have a complete solution to the doctrinal paradox. 
To be thoroughly honest, though, we must mention that the definition of the false positive
rate (Definition \ref{def:FPR}), 
justified by Proposition \ref{pr6}, has some degree of arbitrariness.

%$\bullet$
%We chose the ROC analysis to compare for the reasons cited above, but 
%the same probabilistic 
%model allows to perform analyses different from ROC. For instance,  
%maximum likelihood for the multinomial distribution is feasible. 

Let us finish by commenting on possible extensions and open questions:   

Instead of using the majority principle, another ``qualified majority'' can be employed, and the analysis
of the modified rules can be done similarly. 
Still another possibility is to use \emph{score versions}
of the rules: Instead of 0/1 outputs, one can use more general mappings from the set of tables into the set
of real numbers. These mappings are called \emph{scores}. 
The natural scores to associate to rules $R_1$, $R_2$, $R_3$ are,
respectively 
\begin{equation*}
S_1= x+y\wedge z-m\ ,\quad S_2=x+\lfloor \frac{y\wedge z}{2}\rfloor-m\ ,\quad S_3=x-m\ ,
\end{equation*}
and clearly $S_3\le S_2\le S_1$. ROC analysis can be done the same with scores, where the goal
is to find the best score to fix as a boundary in deciding between $P\wedge Q$ and $\neg(P\wedge Q)$
from the point of view of the \emph{area under the curve}. Admissible scores can be
defined similarly to admissible rules: they must be non-decreasing functions 
with respect to the partial order on ${\Bbb T}$ defined in Section \ref{3rules}. 
See \cite{Fawcett:2006:IRA:1159473.1159475} for a good short introduction to
score rules and ROC curves.

We have compared three particular rules, two of them 
traditional, and a reasonable third one that lies in between. They are easily expressed in terms
of the entries of the contingency table of votes. But there are much more admissible rules;
a total of 36 in the case $n=3$, and they are not easy to enumerate systematically in general.
Hence, the question of enumerating all admissible rules and choosing the best according to some
criterion is open. 

Finally, notice that
in Theorem \ref{teor12} we have not assumed that the function $\theta\rightarrow D(\theta)$ is increasing. 
We conjecture that it always is, but we do not have yet a formal proof. If this conjecture is true, or in a
particular case in which it is checked to be true, then one could strengthen the theorem by claiming that 
for every weight $\frac{1}{2}<w<1$, there exists a value $C(w)$, smaller than $w$, such that
\begin{align*}
\theta> C(w) &\Rightarrow \text{\rm WAOT}_w(n,\theta,R_1) > \text{\rm WAOT}_w(n,\theta,R_2) \\
\theta< C(w) &\Rightarrow \text{\rm WAOT}_w(n,\theta,R_1) < \text{\rm WAOT}_w(n,\theta,R_2) 
\end{align*}
(see Figure \ref{fig:D(theta)inc}, and compare with Figure \ref{fig:D(theta)}), 
and similarly for $R_2$ in relation with $R_3$.  

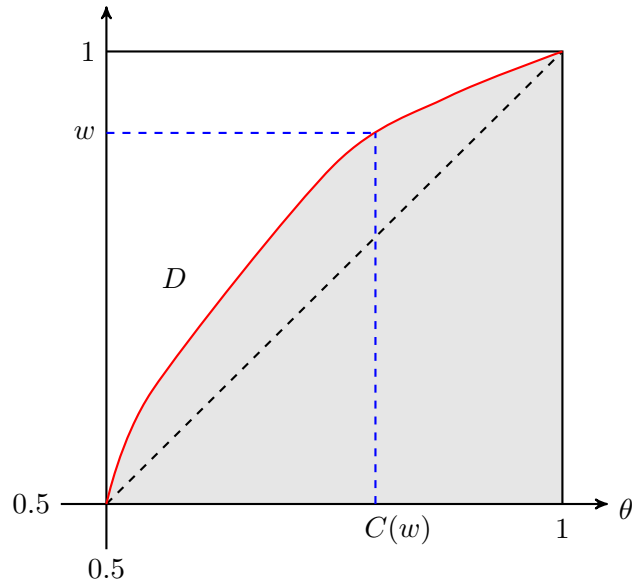
\begin{figure}[h] 
  \centering 
  \begin{tikzpicture}[
  thick,
  >=stealth',
  scale=6,
  %every node/.append style={font=\small}
  ]
  %\colorlet{col1}{blue!20}
  \coordinate (O) at (0,0);
  %\draw [help lines, pink] (0,0) grid (1, 1);
  %\fill [fill=col1] (0,0) -- (1,1) -- (0.1,0.8) -- cycle;
     \fill [gray!20] %,  pattern=dots] 
          (0, 0)
          -- plot[smooth] coordinates {(0,0) (0.10,0.25) (0.50,0.75) (0.75,0.90) (1,1)}
          -- (1, 0) 
          -- cycle;
  
  %AXES
  \draw[->] (-0.1,0) -- (1.1,0) ;
  \draw[->] (0,-0.1) -- (0,1.1);
  \draw[-] (1,0) -- (1,1) -- (0,1);
  %\draw [fill] (0.1,0.8) circle (0.01);
  %LABELS
  \node [right] at (1.10,-0.01) {$\theta$};
  \node [left] at (0.00, 0.82) {$w$};
  \node [left] at (0.20,0.50) {$D$};
  \node [below] at (0.64,0) {$C(w)$};
  %VERTICES
  \node [left] at (-0.1, 0) {$0.5$};
  \node [below] at (0, -0.1) {$0.5$};
  %\node [above right] at (1.00,1) {$(1,1)$};
  \node [below] at (1.00,-0.01) {$1$};
  \node [left] at (-0.00,1) {$1$};
  %LINES:
  \draw [-, dashed] ( 0,0) -- ( 1, 1); % -- (0.1,0.8) -- cycle; 
  \draw[red] plot[smooth] coordinates {(0,0) (0.10,0.25) (0.50,0.75) (0.75,0.90) (1,1)};
  \draw [-, thick, blue, dashed] ( 0,0.82) -- ( 0.59, 0.82);  
  \draw [-, thick, blue, dashed] ( 0.59,0) -- ( 0.59, 0.82);          
  \end{tikzpicture}   
  \caption{
  If $D$ is a non-decreasing function, then $C(w)$ splits the $\theta$-axis in two segments. For $\theta$ on the right, 
  rule $R_1$ is better than $R_2$ in the sense of the weighted area under the triangle with weight $w$; for $\theta$
  on the left, $R_2$ is better than $R_1$.
  Compare with the general $D$ of Figure \ref{fig:D(theta)}.
  \label{fig:D(theta)inc}}
\end{figure}

%FUTURE IDEAS:
%Is it possible to implement an efficient algorithm taking
%input $(n,\theta,w)$ to: generate the partially ordered set $(\mathbb T,\le)$; generate all
%admissible rules; evaluate the admissible rules; and determine the best one. ? 
%\CommentA{((mention? Future?))} 

% The software

%
%$\bullet$ 
%\CommentA{((TOTHINK:  
%  It is obvious that for $\theta$ large enough, $C$ can be made to approach 1, but   
%  How does $C$ behaves with respect to $n$, for a fixed $\theta$? e.g. Can we make 
%  $C$ approach 1 by taking $n$ large enough?
%  Vaig conjecturar que augmentant n augmentaria la constant C.
%   Estaria bé fer uns
%  càlculs prospectius per veure com va i poder fer un comentari o proposició. ))}

\section*{Acknowledgements}
  This work has been partially supported by grant numbers MTM2014-59179-C2-1-P    
  from the Ministry of Economy and Competitiveness of Spain, and 2017-SGR-1094 
  from the Ministry of Business and Knowledge of Catalonia. 

\FloatBarrier

\section*{Appendix} %\label{ap}
 
\paragraph{Proof of Proposition  \ref{pr1}}%\label{dem1} 

  For rule $R_1$, we must show the equivalence
\begin{equation*}
(x+y>z+t,\ x+z>y+t)\ \Longleftrightarrow\ x>m-y\wedge z\ . 
\end{equation*}   
Indeed,  
\begin{align*}
    x+y>z+t\ &\Longleftrightarrow\  x+y>n-x-y\ 
     \Longleftrightarrow\   2x+2y> n=2m+1 \\
   &\Longleftrightarrow\ x+y> m+\tfrac{1}{2}\  
     \Longleftrightarrow\   x>m-y\ , 
\end{align*}
taking into account that we are dealing with integer numbers.
The equivalence $x+z>y+t \Longleftrightarrow  x>m-z$ is proved in an identical way
and (\ref{r12}) follows.

 For rule  $R_2$, we must prove the equivalence in (\ref{r22}), that is,
\begin{equation*}
(x>z+t,\ x>y+t)\ \Longleftrightarrow\ x>m-\lfloor\tfrac{y\wedge z}{2}\rfloor\ .
\end{equation*} 
 We have
\begin{align*}
x>z+t\ \Longleftrightarrow\ x>n-x-y\  
     \Longleftrightarrow\   2x>  2m+1-y\
     \Longleftrightarrow\ x > m-\tfrac{y-1}{2}\ . 
\end{align*}
If $y$ is odd,  
 say $y=2q+1$, then both $\frac{y-1}{2}$ and $\lfloor\frac{y}{2}\rfloor$ are equal to $q$,
 and we get the equivalence with $x>m-\lfloor\tfrac{y}{2}\rfloor$.
 If $y=2q$,
$$
 x>m-\tfrac{y-1}{2}%=m-\frac{2l-1}{2}
 = m-q+\tfrac{1}{2}\ \Longleftrightarrow\  x>m-q=m-\lfloor\frac{y}{2}\rfloor \ .
 $$
Analogously, we obtain 
$x>y+t    \Longleftrightarrow      x>m-\lfloor\frac{z}{2}\rfloor,$
and (\ref{r22}) is proved.

Finally, for rule  $R_3$, we have to prove the equivalence in (\ref{r32}):
\begin{equation*} 
x>y+z+t\  \Longleftrightarrow\ x>m  \ .
\end{equation*}
Indeed, 
\begin{align*}
x>y+z+t\ \Longleftrightarrow\ x>n-x\  
         \Longleftrightarrow\   2x>  2m+1\
         \Longleftrightarrow\ x > m + \tfrac{1}{2}\   
          \Longleftrightarrow\   x>m \ ,
\end{align*}
and we are done. \qed

\paragraph{Proof of Proposition  \ref{pr2}} %\label{dem2} 
The admissibility of the three rules is completely obvious from the defining 
inequalities (\ref{r1}), (\ref{r2}) and (\ref{r3}). The order 
$R_3\le R_2 \le R_1$  as functions $\mathbb{T}\longrightarrow \{0,1\}$  
is also clear, so we focus on the equalities and strict inequalities. 

First, we will prove that  $R_2 = R_3 $ for $n=3$ (i.e. $m=1$). 
When $n=3$,  $y+z \leq 3$ and this implies $\lfloor\frac{y\wedge z}{2}\rfloor = 0$. 
According to Proposition \ref{pr1}, rules $R_2$ and $R_3$ are the same.

Secondly, for $n=5$ ($m=2$),  we have $y+z \leq 5$ and then $\lfloor\frac{y\wedge z}{2}\rfloor \leq 1$.
On the points $(x,y,z,t)$ for which $\lfloor\frac{y\wedge z}{2}\rfloor = 0$, the  two rules $R_2$ and $R_3$
clearly coincide, by Proposition \ref{pr1}. Assume $\lfloor\frac{y\wedge z}{2}\rfloor = 1$. That means $y\ge 2$, $z\ge 2$, and 
therefore $x\le 1$. For such points, the value of both rules is zero. Hence, they coincide everywhere.

To see that $R_1>R_2$ for all  $n\geq 3$, recall that this inequality means that $R_1\ge R_2$ and 
the two rules are not identical. Consider the point $(x,y,z,t)=(1,m,m,0)$. On this point,
condition (\ref{r1}) is satisfied, but (\ref{r2}) fails. Therefore, $R_1(1,m,m,0)=1$ and
$R_2(1,m,m,0)=0$. 

To show that  $R_2>R_3$ for all  $n\geq 7$ (implying that $m\geq 3$), 
%it suffices to show that there are values of $(x,y,z,t)$ 
%for which $R_2(x,y,z,t)=1$ and $R_3(x,y,z,t)=0$.   
we distinguish two cases, depending on whether $m$ is even or odd. For an even $m\geq 4$, 
take $(x,y,z,t)=(m,\frac{m}{2},\frac{m}{2},1)$; 
% and check that $x=m> z+t=y+t$, but $x<y+x+t$.  
 for $m\geq 3$ odd, 
take $(x,y,z,t)=(m,\frac{m+1}{2},\frac{m+1}{2},0)$. In both cases condition (\ref{r2}) is satisfied and (\ref{r3}) fails. \qed
 
\paragraph{Proof of Proposition \ref{pr3}}
For each voter $k=1,\dots,n$, let $X_k$ be the random variable taking values in one of the four possible
final decisions of the voter, and let $p_x, p_y, p_z, p_t$ the probability of each of them.
Since these $n$ variables are mutually independent and identically distributed, the law of the counts $(X,Y,Z,T)$ is 
multinomial with parameters $(n,p_x, p_y, p_z, p_t)$. The independence of the decisions concerning
the two premises, gives immediately the particular parameters of the table, depending on the
true state of nature.

\paragraph{Proof of Proposition \ref{pr5}}
The probability of deciding $P\wedge Q$ for any decision rule can be computed summing up the multinomial 
probability function (\ref{mult}) over the set where the rule takes the value 1. That means,
taking into account Proposition \ref{pr1},
 
\begin{align*}
\mathbb{P}\big\{R_1(X,Y,Z,T)=1 \big\} &= 
\sum_{i=0}^n   \sum_{j=0}^{n-i}  \sum_{k=m- i\wedge j+1}^{n-i-j} 
\frac{n!}{k!j!i!(n-i-j-k)!} p_x^k p_y^j p_z^i p_t^{n-i-j-k} \ . %\label{accR1}
\\
 \mathbb{P}\big\{R_2(X,Y,Z,T)=1 \big\} &=
\sum_{i=0}^n  \sum_{j=0}^{n-i}  \sum_{k=m-\lfloor\frac{1}{2}i\wedge j\rfloor +1}^{n-i-j} 
\frac{n!}{k!j!i!(n-i-j-k)!} p_x^k p_y^j p_z^i p_t^{n-i-j-k}\ . %\label{accR2}
\\
\mathbb{P}\big\{R_3(X,Y,Z,T)=1 \big\} &=    
\sum_{i=0}^n  \sum_{j=0}^{n-i}  \sum_{k=m  +1}^{n-i-j} 
\frac{n!}{k!j!i!(n-i-j-k)!} p_x^k p_y^j p_z^i p_t^{n-i-j-k} %\label{accR3} 
\\
&=\sum_{k=m+1}^{n} {\textstyle\binom{n}{k}} p_x^k(1-p_x)^{n-k} \ .
\end{align*}

now using the values of Proposition \ref{pr3} for the case $P\wedge Q$, we obtain 
the stated true positive rates.

\paragraph{Proof of Proposition \ref{pr6}} %\label{dem6} 

To check the equality in (\ref{eqpr6-a}), first recall that  
every admissible decision rule satisfies $R(x,y,z,t)=R(x,y,z,t)$. Moreover, 
the law of the random vector $(X,Y,Z,T)$ under $P\wedge \neg Q$, which is multinomial 
$ M(n,\theta(1-\theta),\theta^2,(1-\theta)^2,\theta(1-\theta))$, coincides 
with the law of $(X,Z,Y,T)$ under $\neg P\wedge Q$. Therefore, 
\begin{equation*}
\mathbb{P}_{P\wedge \neg  Q}\{R(X,Y,Z,T)=1\} = \mathbb{P}_{\neg P\wedge Q}\{R(X,Z,Y,T)=1\} = \mathbb{P}_{\neg P\wedge Q}\{R(X,Y,Z,T)=1\}\ .
\end{equation*}
  
For the inequality in the statement, we have, from Proposition \ref{pr3},
\begin{align*}
\mathbb{P}_{ P\wedge \neg  Q}\{R_1(X,Y,Z,T)=1\}&= \sum_{i=0}^n  \sum_{j=0}^{n-i} \sum_{k=m- i\wedge j+1}^{n-i-j}
{n\choose i,j,k,\ell} \theta^{n-i+j} (1-\theta)^{n+i-j}\ ,  
\\
\mathbb{P}_{\neg P\wedge \neg  Q}\{R_1(X,Y,Z,T)=1\} &=\sum_{i=0}^n  \sum_{j=0}^{n-i} \sum_{k=m- i\wedge j+1}^{n-i-j}
{n\choose i,j,k,\ell} \theta^{2n-i-j-2k} (1-\theta)^{i+j+2k} \ ,
\end{align*}
where $\ell=n-i-j-k$. 
Each term in the first line is greater than the corresponding one in the second line: 
%Then, it remains to prove the inequality for each term, that is, 
% for every $i,j, k$ satisfying the restrictions in summations, we must show that
$\theta^{n-i+j}(1-\theta)^{n+i-j}>\theta^{2n-i-j-2k}(1-\theta)^{i+j+2k}$ is equivalent to $\theta^{2k+2j-n} > (1-\theta)^{2k+2j-n}$, and  
this inequality, since $\theta>\frac{1}{2}$, is true provided that $2k+2j-n>0$. Indeed, 
\begin{align*}
k\geq m-i\wedge j +1\geq m-j+1 \ \Rightarrow\  
2k + 2j \geq 2m+2\ \Rightarrow\ 2k+2j-n\geq 1 \ .
\end{align*} 
The same argument is valid for $R_2$ and $R_3$, since $m-i\wedge j+1 \le m-\lfloor \frac{i\wedge j}{2}\rfloor +1 \le m+1$. This completes the proof.

\emph{Remark:} In fact, one can prove, using a probabilistic coupling argument, that the inequality 
$\mathbb{P}_{P\wedge \neg Q}\{R=1\} \ge \mathbb{P}_{\neg P\wedge \neg Q}\{R=1\}$ holds true not only for 
$R_1, R_2, R_3$, but for
any admissible rule as defined in \ref{def:AdmisRule}. For simplicity, we have restricted ourselves 
here to prove the statement as it is. \qed

\paragraph*{R session info.}\label{rinfo}
%Computations have been done 
%in R with the following setup: 
%<<sessioninfo, results='asis', echo=FALSE>>=
%toLatex(sessionInfo(), locale=FALSE)
%@
%
%\CommentA{((escriure-ho així al final, directament en TeX. Verificar que
%  no aparegui cap paquet nou aquí dalt que s'afegeixi a última hora:))}
Computations have been done 
in R with the following setup: \\ 
\raggedright
  R version 3.4.1 (2017-06-30), \verb|x86_64-w64-mingw32|. \\
  Base packages: base, datasets, graphics, grDevices, stats, utils. \\
  Other packages: knitr~1.17, xtable~1.8-2, compiler~3.4.1,
  digest~0.6.12, evaluate~0.10.1, magrittr~1.5, methods~3.4.1,
  stringi~1.1.6, stringr~1.3.0, tools~3.4.1.

\bibliographystyle{plain}
%\bibliography{roc_3DecisionRules}

\end{document}